\documentclass[12pt]{article}
\usepackage[margin=.9in]{geometry}
\usepackage{amsmath,amssymb,amsfonts,amsthm}
\usepackage{algorithm}
\usepackage{algpseudocode}
\usepackage{booktabs}
\usepackage{enumitem}
\usepackage{float}
\usepackage{placeins}
\usepackage{graphicx}
\usepackage{xcolor}
\definecolor{citationblue}{RGB}{0,70,180}
\usepackage{hyperref}
\usepackage[authoryear,round,longnamesfirst]{natbib}
\hypersetup{colorlinks=true,citecolor=citationblue,linkcolor=black,urlcolor=black,pdftitle={Grouped Semi-Supervised Estimation of Two-Sample Functionals via Single-Index Conditional Imputation},pdfauthor={Hengyao Xu and Tao Tan},pdfsubject={Grouped semi-supervised inference}}
\newcommand{\Var}{\operatorname{Var}}
\newcommand{\Cov}{\operatorname{Cov}}
\newtheorem{theorem}{Theorem}[section]
\newtheorem{proposition}[theorem]{Proposition}
\newtheorem{lemma}[theorem]{Lemma}
\newtheorem{corollary}[theorem]{Corollary}
\newtheorem{assumption}{Assumption}[section]

\newtheorem{remark}{Remark}[section]
\newtheorem{lemmaB}{Lemma}[section]
\newtheorem{theoremB}[lemmaB]{Theorem}
\newtheorem{propositionB}[lemmaB]{Proposition}
\newtheorem{corollaryB}[lemmaB]{Corollary}
\newtheorem{assumptionB}{Assumption}[section]
\newtheorem{remarkB}{Remark}[section]
\newcommand{\UseAppendixBEnvironments}{%
 \let\lemma\lemmaB\let\endlemma\endlemmaB
 \let\theorem\theoremB\let\endtheorem\endtheoremB
 \let\proposition\propositionB\let\endproposition\endpropositionB
 \let\corollary\corollaryB\let\endcorollary\endcorollaryB
 \let\assumption\assumptionB\let\endassumption\endassumptionB
 \let\remark\remarkB\let\endremark\endremarkB}
\floatstyle{ruled}
\restylefloat{algorithm}
\floatname{algorithm}{Algorithm}
\algrenewcommand\algorithmicrequire{\textbf{Require:}}
\algrenewcommand\algorithmicensure{\textbf{Ensure:}}
\algrenewcommand\alglinenumber[1]{\scriptsize #1:}
\renewcommand{\thefootnote}{\fnsymbol{footnote}}
\title{\parbox{0.9\textwidth}{\centering Grouped Semi-Supervised Estimation of Two-Sample Functionals via Single-Index Conditional Imputation\textsuperscript{\(\dagger\)}}}
\author{Hengyao Xu, Tao Tan\textsuperscript{*}\\[0.3em]\normalsize College of Mathematics and Systems Science, Xinjiang University, Urumqi, Xinjiang, China}
\date{}
\begin{document}
\frenchspacing
\maketitle
\footnotetext[1]{Corresponding author. Email: ttan303@163.com.}
\footnotetext[2]{This research was supported by the Natural Science Foundation of Xinjiang Uygur Autonomous Region (No.~2025D14015) and the National Natural Science Foundation of China (No.~72571102).}
\renewcommand{\thefootnote}{\arabic{footnote}}
\begin{abstract}
Estimating two-sample comparison functionals within heterogeneous groups is challenging when outcomes are observed only for a small labelled subset while covariates are widely available. We develop semi-supervised estimation of group-specific two-sample comparison functionals and their prespecified aggregates. The proposed estimator combines single-index conditional imputation from labelled cross-arm pairs with averaging over all available covariate pairs. To handle index-score densities that may vanish at attainable endpoints, we use centred stabilization, with a smooth stabilizer in the profile criterion and a shrinking lower truncation of the kernel denominator (the hard floor) in the point estimator. The outer average retains all covariate pairs. Under a correctly specified single-index conditional-mean model and the stated regularity conditions, we establish consistency and asymptotic normality for bounded comparison functionals. A four-role original-observation influence representation accounts for sample overlap and identifies the variance component that unlabelled covariates can reduce relative to supervised estimation. We further establish conditional validity of observation-level multinomial perturbation resampling with complete refitting and shared labelled counts, distinguishing ideal variance consistency from finite-replicate variance estimation. Simulations show small bias and lower mean squared error than supervised estimation across the point-estimation settings, together with near-nominal coverage in the perturbation experiment. A descriptive NHANES 2011--2018 illustration comparing serum creatinine by hypertension history across age groups yields point estimates similar to their supervised counterparts and smaller semi-supervised standard errors for all reported comparisons.
\end{abstract}
\noindent\textbf{Keywords:} semi-supervised inference; two-sample comparison; group-specific inference; single-index model; perturbation resampling
\par
\noindent\textbf{MSC 2020:} 62D10, 62G08, 62G20.
\section{Introduction}\label{sec:introduction}

\subsection{Motivation and Grouped Comparison}\label{subsec:intro-motivation}

Semi-supervised inference addresses settings in which response information is expensive, delayed, or partially missing, whereas covariates are broadly available from routine records, registries, or digital systems. Broad overviews include \citet{chapelle2006} and \citet{vanengelen2020}; here the focus is statistical inference rather than prediction. Unlabelled covariates can improve inference on marginal functionals even when they do not reveal responses directly: the target is often an outer expectation of an inner conditional object, and the larger covariate sample refines the outer covariate-distribution component \citep{chakrabortty2018,gronsbell2018,zhang2019,song2024}.

Two-sample problems often target functionals of two responses, including mean contrasts, probabilistic comparisons, and threshold-based differences in survival-type probabilities. These are distributional comparisons; causal interpretations require additional identifying assumptions. Covariates help explain how the comparison varies across the observed covariate space in applications ranging from clinical studies to multi-source experiments.

Many applications are inherently grouped or multi-source, with treatment and control arms observed within hospitals, regions, cohorts, studies, platforms, or administrative units. The treatment-control comparison, covariate distribution, and response process may vary across groups. Simple pooling can therefore blur between-group variation and replace a group-aware estimand by a mixture-level target determined by the pooled design.

Aggregation is part of the inferential target; pooling would generally change the estimand. We distinguish the equally weighted group average, Ave, from an aggregate W with prespecified deterministic weights: the former treats each group as one scientific unit, whereas the latter reflects their chosen relative importance. These alternatives answer different scientific questions when group sizes or group-specific effects differ.

\subsection{Related Literature}\label{subsec:related-literature}

Work on semi-supervised inference for marginal parameters uses unlabelled covariates to improve estimation when responses are observed only on a labelled subset; efficient estimation of marginal functionals also connects to the broader semiparametric theory of \citet{bickel1991}. Regression-based constructions include \citet{chakrabortty2018} and \citet{azriel2022}, while \citet{zhang2019} develop a general framework for mean estimation and \citet{zhangbradic2022} study high-dimensional mean inference. \citet{gronsbell2018} consider prediction-performance functionals, and \citet{song2024} formulate a general $M$-estimation theory. More recent work extends semi-supervised inference to distribution-level processes \citep{wen2025distribution}, extreme-value functionals \citep{ahmed2025extreme}, and likelihood-based efficiency analysis for semi-supervised regression models \citep{tian2026etm}.

For semi-supervised two-sample and pairwise functionals, \citet{tan2025} develop an efficient semiparametric procedure for probabilistic comparison, while \citet{zhangpeng2025} study a broader class of comparison functions across two populations. \citet{kim2025semisupervised} develop semi-supervised $U$-statistics with asymptotic normality and efficiency guarantees, including a refined construction for bivariate kernels. The grouped problem considered here additionally keeps group-specific functionals and prespecified cross-group aggregates explicit, while its estimator must accommodate endpoint-vanishing score densities and the fitting--outer overlap created by reusing labelled covariates in the full pairwise average.

Single-index semiparametric estimation and kernel smoothing reduce a high-dimensional pairwise conditional comparison to one-dimensional smoothing. The semiparametric least-squares (SLS) approach of \citet{ichimura1993}, the smoothing theory of \citet{hardle1993}, and the classical kernel estimators of \citet{nadaraya1964} and \citet{watson1964} provide the standard ingredients. In kernel-based semiparametric $M$-estimation, denominator protection and bandwidth choice are part of the estimator itself \citep{DelecroixHristachePatilea2006}, while $U$-process analysis supplies a route from profile criteria to stochastic equicontinuity \citep{Sherman1994Semiparametric}. Here, endpoint-vanishing score densities and cross-arm reuse also enter the analysis.

Partial-mean theory explains why averaging a nonparametric fit can have a regular first-order law even when pointwise estimation is slower \citep{Newey1994PartialMeans,IchimuraLee2010,IchimuraLee2018Corrigendum}. Inverse-density integration \citep{DelyonPortier2016}, generated-covariate expansions \citep{MammenRotheSchienle2012,MammenRotheSchienle2016}, and single-index nuisance adaptivity \citep{Song2014} provide the corresponding tools for denominator and first-stage effects. The resulting partial-mean problem also involves a hard floor, rectangular two-sample dependence, and reuse of labelled covariates in the outer average.

Hoeffding projections and classical $U$-statistic theory organize first-order terms at the level of original observations \citep{Hoeffding1948,Serfling1980}; uniform control of changing kernel classes uses $U$-process, decoupling, and empirical-process theory \citep{NolanPollard1987,ArconesGine1993,Sherman1994,deLaPenaGine1999,vdVWellner1996}. These tools are applied to low-density endpoint geometry, the centred floor, index-overlap patterns, and the original-observation projection representation.

Perturbing an objective function provides a route to inference for fitted estimators \citep{jin2001}, while exchangeably weighted empirical processes and semiparametric bootstrap theory address conditional distributional approximation \citep{PraestgaardWellner1993,ChengHuang2010}. Bootstrap results for $U$- and $V$-statistics, including two-sample weighting, provide background for dependent pair arrays \citep{ArconesGine1992Bootstrap,HuangLiuPeng2023}, and kernel-based bootstrap theory highlights the role of smoothing in resampling validity \citep{CattaneoJansson2018}. The resampling argument retains score standardization, index estimation, floor crossings, and shared labelled counts. Conditional moment arguments \citep{Cheng2015} support the distinction between ideal variance consistency and its finite-replicate approximation.

\subsection{Methodological Challenges and Approach}\label{subsec:intro-approach}

The estimator uses labelled cross-arm outcomes to learn a group-specific comparison surface through a single index and averages it over all available covariate pairs. Group-specific fitting precedes aggregation, so unlabelled covariates inform the outer distribution without changing the target.

A technical difficulty is that the fitted scalar score has compact support and a density that may vanish at attainable endpoints. We therefore use a smooth stabilizer in the profile criterion and a shrinking hard floor in the point estimator, without trimming outer covariate-pair evaluation points. Off the floor, the estimator reduces to ordinary Nadaraya--Watson smoothing, while centring leaves the population target and first-order influence functions unchanged.

Labelled observations are reused across all cross-arm fitting pairs and also enter the full outer covariate sample. Pair counts therefore do not represent independent observations, and fitting and outer averaging are not independent stages. The first-order analysis regroups these contributions into four original-observation roles and identifies, under the correct single-index model, the variance component reducible by additional unlabelled covariates.

Multinomial perturbation assigns counts to original observations, reuses each labelled count through fitting and outer averaging, and refits the complete estimator in every replicate. The conditional theory distinguishes the ideal perturbation variance from its finite-replicate estimate.

\subsection{Contributions and Organization}\label{subsec:intro-contributions}

The paper treats group-specific two-sample functionals as the primary estimands and defines equal-group and fixed weighted aggregates explicitly, separating aggregation from the within-group comparisons. For score densities that may vanish at attainable endpoints, the estimator combines single-index conditional imputation, smooth profile stabilization, and a shrinking hard floor while retaining the full covariate-pair outer average. The first-order theory accounts for cross-arm reuse and labelled--outer overlap through four original-observation projections: two from the full covariate distributions and two from labelled residual variation. Under the correct single-index model, this representation yields the supervised--semi-supervised variance comparison and inference for fixed-$K$ aggregates. The perturbation procedure assigns multinomial counts to original observations and refits the complete estimator, reusing labelled counts in both fitting and outer averaging. The accompanying theory establishes conditional distributional validity and ideal conditional variance consistency; finite-replicate Wald inference additionally requires the stated fourth-moment condition.

The numerical studies assess increasing unlabelled information, heterogeneous grouped targets, a bounded clipped contrast, unequal allocation, and full-refit perturbation inference. The NHANES analysis uses controlled masking to illustrate group-specific and aggregate estimation.

Section~\ref{sec:methodology} introduces the grouped targets and estimator; Section~\ref{sec:theory} develops point-estimation and first-order theory; and Section~\ref{sec:perturbation} establishes perturbation inference. Sections~\ref{sec:simulation}--\ref{sec:discussion} present the numerical evidence, NHANES illustration, and discussion. Appendices~\ref{app:A}--\ref{app:C} collect notation, auxiliary results, and proofs of the main results.

\section{Grouped Semi-Supervised Estimation}\label{sec:methodology}

\subsection{Targets and Data Structure}\label{subsec:data}

Consider $K$ statistically meaningful groups, centres, or sources indexed by $k=1,\ldots,K$. Within each group, let $r=1$ denote the control arm and let $r=2$ denote the treatment arm. For arm $r$ in group $k$, the labelled sample is
\begin{equation*}
\mathcal L_{rk}
=
\{(Y_{rki},\mathbf{X}_{rki}): i=1,\ldots,n_{rk}\},
\end{equation*}
where $Y_{rki}$ is the response and $\mathbf{X}_{rki}$ is a covariate vector. The unlabelled covariate sample is
\begin{equation*}
\mathcal U_{rk}
=
\{\mathbf{X}_{rki}: i=n_{rk}+1,\ldots,M_{rk}\},
\qquad
M_{rk}=n_{rk}+N_{rk}.
\end{equation*}
The full observed covariate sample is
\begin{equation*}
\mathcal E_{rk}
=
\{\mathbf{X}_{rki}: i=1,\ldots,M_{rk}\}.
\end{equation*}
Within each arm and group, the labelled units are nested in the full covariate sample, and labelled and unlabelled covariates are sampled from the same target covariate distribution. Semi-supervision enters separately in the two arms, while covariate distributions and outcome mechanisms may vary across groups. The analysis treats $K$ as fixed, with independent treatment and control samples within groups and independent groups across $k$.

\phantomsection\label{subsec:targets}
For each group $k$, define the target functional
\begin{equation*}
\Delta_k(t)=E\{L(Y_{2k},Y_{1k},t)\},
\qquad
k=1,\ldots,K,
\end{equation*}
where $L$ is a prespecified comparison function and $t$ is a threshold or functional argument when needed. The expectation is taken under the product distribution of independent treatment and control draws within group $k$; thus, for example, $P(Y_{2k}>Y_{1k})$ is an independent two-sample comparison rather than a matched or paired estimand.

\medskip\noindent\emph{Example 1 (Separable average-treatment-effect-type contrast).}
Take
\begin{equation*}
L(Y_2,Y_1)=Y_2-Y_1,
\qquad
\Delta_k=E(Y_{2k})-E(Y_{1k}).
\end{equation*}
This target is separable, with $a(y)=b(y)=y$, and serves as a conceptual benchmark for the arm-specific reduction of the pairwise construction. The formal root-scale theory below concerns bounded comparison responses.

\medskip\noindent\emph{Example 2 (Probabilistic comparison effect).}
Take
\begin{equation}
L(Y_2,Y_1)=\mathbf 1\{Y_2>Y_1\},
\qquad
\Delta_k=P(Y_{2k}>Y_{1k}).
\label{eq:probability-example}
\end{equation}
This target is genuinely pairwise and nonseparable, unlike the case in Remark~\ref{rem:separability}.

\medskip\noindent\emph{Example 3 (Threshold or tail comparison contrast).}
Take
\begin{equation*}
L(Y_2,Y_1,t)=\mathbf 1\{Y_2>t\}-\mathbf 1\{Y_1>t\},
\qquad
\Delta_k(t)=P(Y_{2k}>t)-P(Y_{1k}>t).
\end{equation*}
This target is separable for each prespecified fixed $t$ and illustrates how the framework accommodates fixed-threshold contrasts.

\begin{remark}[Separability]\label{rem:separability}
If the comparison function satisfies
\begin{equation*}
L(Y_2,Y_1)=a(Y_2)-b(Y_1),
\end{equation*}
then the grouped target decomposes into two marginal components,
\[
\Delta_k=E\{a(Y_{2k})\}-E\{b(Y_{1k})\}.
\]
The general pairwise framework includes this case, although arm-specific implementations can avoid explicit cross-arm pair construction.
\end{remark}

Throughout, ``treatment'' and ``control'' label the two samples; $\Delta_k(t)$ is interpreted descriptively unless additional causal identification assumptions are imposed.

\phantomsection\label{subsec:aggregate-targets}
We consider two prespecified aggregate targets. The equal-group functional is
\begin{equation}
\Delta_{\mathrm{Ave}}(t)
=
\frac{1}{K}\sum_{k=1}^K \Delta_k(t),
\label{eq:aggregate-ave}
\end{equation}
which treats each group as one scientific unit.

The weighted functional is
\begin{equation}
\Delta_{\mathrm{W}}(t)
=
\sum_{k=1}^K a_k \Delta_k(t),
\qquad
a_k\ge 0,
\qquad
\sum_{k=1}^K a_k=1.
\label{eq:aggregate-w}
\end{equation}
When $a_k=1/K$, $\Delta_{\mathrm{W}}(t)=\Delta_{\mathrm{Ave}}(t)$; otherwise the $a_k$ are prespecified scientific weights, such as external population proportions, rather than weights determined by the observed group sizes.

\subsection{Single-Index Conditional Imputation}\label{subsec:conditional-imputation}

Define the group-specific conditional comparison function
\begin{equation}
m_k(\mathbf{x}_2,\mathbf{x}_1;t)
=
E\{L(Y_{2k},Y_{1k},t)\mid \mathbf{X}_{2k}=\mathbf{x}_2, \mathbf{X}_{1k}=\mathbf{x}_1\}.
\label{eq:conditional-object}
\end{equation}
By the law of iterated expectation,
\begin{equation*}
\Delta_k(t)=E\{m_k(\mathbf{X}_{2k},\mathbf{X}_{1k};t)\}.
\end{equation*}
The identity separates an inner conditional-comparison fit from an outer covariate-distribution average. Labelled cross-arm pairs identify the inner object, while the full covariate samples estimate the outer expectation; unlabelled covariates refine this distributional average without revealing missing outcomes.

For indicator-valued $L$, such as the probabilistic comparison in \eqref{eq:probability-example}, smoothing is applied to the conditional comparison in \eqref{eq:conditional-object}, not to the indicator itself.

\phantomsection\label{subsec:single-index}
Identification is based on a population profile. The conditional comparison function $m_k(\mathbf{x}_2,\mathbf{x}_1;t)$ generally depends jointly on both arm-specific covariates, making direct nonparametric smoothing in the full pairwise covariate space impractical at moderate dimension. We therefore use the working single-index conditional-mean model
\begin{equation*}
m_k(\mathbf{x}_2,\mathbf{x}_1;t)
=E\!\left\{ L(Y_{2k},Y_{1k},t)\ \middle|\ 
\mathbf{X}_{2k}^{\top}\boldsymbol{\beta}_{2k,0}+\mathbf{X}_{1k}^{\top}\boldsymbol{\beta}_{1k,0}
=\mathbf{x}_2^{\top}\boldsymbol{\beta}_{2k,0}+\mathbf{x}_1^{\top}\boldsymbol{\beta}_{1k,0}\right\}.
\end{equation*}
Here $\boldsymbol{\beta}_{2k,0}$ and $\boldsymbol{\beta}_{1k,0}$ are the treatment- and control-arm index parameters, and
\begin{equation*}
\boldsymbol{\beta}_{k,0}=(\boldsymbol{\beta}_{2k,0}^{\top},\boldsymbol{\beta}_{1k,0}^{\top})^{\top},
\qquad
S_k(\boldsymbol{\beta}_k)=\mathbf{X}_{2k}^{\top}\boldsymbol{\beta}_{2k}+\mathbf{X}_{1k}^{\top}\boldsymbol{\beta}_{1k}.
\end{equation*}
Semiparametric least squares and single-index smoothing build on the classical work of \citet{ichimura1993} and \citet{hardle1993}, respectively. The model reduces the pairwise covariates to a scalar index while allowing the conditional mean and index coefficients to vary across groups and arms. General pairwise targets use the joint index, with simpler arm-specific reductions for separable functionals.

Put $\mathbf{V}_k=(\mathbf{X}_{2k}^{\top},\mathbf{X}_{1k}^{\top})^{\top}$ and
$H_k(t)=Z_k(t)=L(Y_{2k},Y_{1k},t)$. Fix a bounded reference $c_{H,k}(t)$ from the scientific target before observing the data and write
\begin{equation*}
H_{c,k}(t)=H_k(t)-c_{H,k}(t).
\end{equation*}
The probability target uses $c_H=1/2$, while the clipped target
$H=\operatorname{clip}(Y_2-Y_1,-2,2)$ uses $c_H=0$. Thus $c_H$ is a target-specific prespecified reference constant. For every candidate direction $\boldsymbol{\beta}$, let $S_{k,\boldsymbol{\beta}}=\mathbf{V}_k^{\top}\boldsymbol{\beta}$ and define the population profile
\begin{equation*}
m_{k,\boldsymbol{\beta}}(s;t)=E\{Z_k(t)\mid \mathbf{V}_k^{\top}\boldsymbol{\beta}=s\},\qquad
Q_k(\boldsymbol{\beta};t)=E\left[\{Z_k(t)-m_{k,\boldsymbol{\beta}}(\mathbf{V}_k^{\top}\boldsymbol{\beta};t)\}^2\right].
\end{equation*}
Since $E(H_{c,k}\mid S)=m_{k,\boldsymbol{\beta}}(S)-c_{H,k}$, the ideal centred profile criterion is exactly $Q_k(\boldsymbol{\beta};t)$ for every $\boldsymbol{\beta}$. Centring therefore modifies the finite stabilized estimator while leaving the scientific target, ideal population minimizer, and identification problem unchanged.
The profile criterion is an $M$-estimation problem; its consistency analysis combines uniform control of the empirical criterion with standard extremum-estimation and approximate-optimisation arguments \citep{PakesPollard1989,newey1994}.
The normalized parameter space is
\begin{equation*}
\mathcal B_k=\{\boldsymbol{\beta}:\|\boldsymbol{\beta}\|_2=1,\ \text{the first nonzero component of }\boldsymbol{\beta}\text{ is positive}\},
\end{equation*}
and the identified profile direction is $\boldsymbol{\beta}_{k,0}=\arg\min_{\boldsymbol{\beta}\in\mathcal B_k}Q_k(\boldsymbol{\beta};t)$. The normalization fixes both scale and sign, and is consequently part of identification and of the tangent-space expansion used in Section~\ref{sec:theory}.

The definition of $\boldsymbol{\beta}_{k,0}$ by profile minimization is distinct from the correct single-index conditional-mean condition used for the asymptotic linear representation of the feasible estimator:
\begin{equation*}
 E\{Z_k(t)\mid \mathbf{V}_k\}=m_{k,\boldsymbol{\beta}_{k,0}}(\mathbf{V}_k^{\top}\boldsymbol{\beta}_{k,0};t).
\end{equation*}
The latter says that the full conditional mean is correctly represented by the identified scalar index. The asymptotic linear representation below is developed under this correctly specified single-index condition.

\phantomsection\label{subsec:estimation}
For estimation, fix $\boldsymbol{\beta}_k$ and define the labelled-pair index
\begin{equation*}
S_{kij}(\boldsymbol{\beta}_k)
=
\mathbf{X}_{2ki}^{\top}\boldsymbol{\beta}_{2k}+\mathbf{X}_{1kj}^{\top}\boldsymbol{\beta}_{1k}.
\end{equation*}
Let $K$ be the standard Gaussian density and
$K_{h_k}(u)=h_k^{-1}K(u/h_k)$. Define the empirical numerator, denominator,
and centred numerator
\begin{equation*}
\widehat g_{k,\boldsymbol{\beta}}(s;t)=\frac1{n_{2k}n_{1k}}\sum_{i=1}^{n_{2k}}\sum_{j=1}^{n_{1k}}
K_{h_k}\{S_{kij}(\boldsymbol{\beta})-s\}Z_{kij}(t),\quad
\widehat d_{k,\boldsymbol{\beta}}(s)=\frac1{n_{2k}n_{1k}}\sum_{i=1}^{n_{2k}}\sum_{j=1}^{n_{1k}}
K_{h_k}\{S_{kij}(\boldsymbol{\beta})-s\}.
\end{equation*}
\begin{equation*}
 \widehat q_{k,\boldsymbol{\beta},c}(s;t)
 =\widehat g_{k,\boldsymbol{\beta}}(s;t)
  -c_{H,k}(t)\widehat d_{k,\boldsymbol{\beta}}(s).
\end{equation*}
The smooth profile stabilizer is
\begin{equation*}
 \rho_w(d)=w r(d/w),
\end{equation*}
where $r$ is fixed, nondecreasing, and $C^3$, equals $1$ on $[0,1/2]$,
equals the identity on $[2,\infty)$, and is comparable to $1\vee x$.
Thus $\rho_w$ equals $w$ below $w/2$, equals
$d$ above $2w$, and is comparable to $d\vee w$. The profile fit is
\begin{equation}
 \widehat m_{k,\boldsymbol{\beta}}^{\mathrm{prof},c}(s;t)
 =c_{H,k}(t)+
 \frac{\widehat q_{k,\boldsymbol{\beta},c}(s;t)}
 {\rho_{w_k}\{\widehat d_{k,\boldsymbol{\beta}}(s)\}}.
 \label{eq:centred-profile-fit}
\end{equation}
On the region where $\widehat d_{k,\boldsymbol{\beta}}(s)\ge 2w_k$, \eqref{eq:centred-profile-fit}
reduces to the ordinary pairwise Nadaraya--Watson fit \citep{nadaraya1964,watson1964}.
The ensuing full-covariate average is a generated-index partial mean \citep{Newey1994PartialMeans,IchimuraLee2010,IchimuraLee2018Corrigendum,MammenRotheSchienle2012,MammenRotheSchienle2016}.
The empirical profile-SLS criterion and estimator are
\begin{equation*}
\widehat{\boldsymbol{\beta}}_k
=
\arg\min_{\boldsymbol{\beta}_k\in\mathcal B_k}
\frac{1}{n_{2k}n_{1k}}
\sum_{i=1}^{n_{2k}}\sum_{j=1}^{n_{1k}}
\left[
L(Y_{2ki},Y_{1kj},t)
-\widehat m_{k,\boldsymbol{\beta}_k}^{\mathrm{prof},c}
 \{S_{kij}(\boldsymbol{\beta}_k);t\}
\right]^2,
\end{equation*}
where $Z_{kij}(t)=L(Y_{2ki},Y_{1kj},t)$. The criterion uses the leave-in mean squared residual over all labelled pairs; point estimation retains the centred numerator but uses the denominator truncation defined below.

At the estimated direction $\widehat{\boldsymbol{\beta}}_k$, we impose a shrinking lower
truncation on the kernel denominator, replacing $\widehat d$ by
$\widehat d\vee w_k$; we refer to this device as the hard floor:
\begin{equation}
 \widehat m_{k}^{H,c}(s;t)
 =c_{H,k}(t)+
 \frac{\widehat q_{k,\widehat{\boldsymbol{\beta}}_k,c}(s;t)}
 {\widehat d_{k,\widehat{\boldsymbol{\beta}}_k}(s)\vee w_k}.
 \label{eq:centred-hard-fit}
\end{equation}
The smooth profile regularises derivatives with respect to the local index
coordinates, whereas the hard-floor fit is used for point estimation.

\begin{remark}[Centred stabilization]\label{rem:centred-identities}
Whenever the active stabilizer equals $d>0$,
\[
c_H+\frac{g-c_Hd}{d}=\frac{g}{d}.
\]
Hence centring leaves the ordinary Nadaraya--Watson fit unchanged whenever
the floor or stabilizer is inactive; exact score-coordinate equivariance is
established in Lemma~\ref{lem:b6-candidate-scale}.
\end{remark}

\begin{remark}[Pairwise dependence]\label{rem:pairwise-dependence}
The pairwise construction reuses each labelled observation across all opposite-arm pairs, inducing two-sample $U$-statistic-type dependence rather than independent-pair sampling. Accordingly, the theory must account for this dependence in both nuisance fitting and the final stochastic expansion.
\end{remark}

\subsection{Group and Aggregate Estimators}\label{subsec:group-estimator}

The proposed group-level semi-supervised plug-in estimator is
\begin{equation*}
\widehat\Delta_k^c(t)\equiv\widehat\Delta_k(t)
=
\frac{1}{M_{2k}M_{1k}}
\sum_{i=1}^{M_{2k}}
\sum_{j=1}^{M_{1k}}
\widehat m_{k}^{H,c}\left(
\mathbf{X}_{2ki}^{\top}\widehat{\boldsymbol{\beta}}_{2k}
+
\mathbf{X}_{1kj}^{\top}\widehat{\boldsymbol{\beta}}_{1k};
t
\right).
\end{equation*}
Write $\widehat\Delta_k^c(\boldsymbol{\beta};t)$ for the same expression with $\widehat{\boldsymbol{\beta}}_k$ replaced by $\boldsymbol{\beta}$; hence $\widehat\Delta_k^c(t)=\widehat\Delta_k^c(\widehat{\boldsymbol{\beta}}_k;t)$. 
The labelled pairs determine the conditional fit, whereas all covariate units enter the outer average; labelled units therefore contribute to both stages. The infeasible oracle estimator uses the true index in the same centred hard-floor fit:
\begin{equation}
 \widehat\Delta_k^{\mathrm{or},c}(t)=\widehat\Delta_k^c(\boldsymbol{\beta}_{k,0};t).
\label{eq:oracle-estimator}
\end{equation}
The hard-floor construction retains all $M_{2k}M_{1k}$ covariate pairs in the outer average.

\phantomsection\label{subsec:aggregate-estimator}
The corresponding equal-group aggregate estimator is
\begin{equation}
\widehat\Delta_{\mathrm{Ave}}(t)
=
\frac{1}{K}\sum_{k=1}^K \widehat\Delta_k(t),
\label{eq:aggregate-estimator-ave}
\end{equation}
and the weighted aggregate estimator is
\begin{equation}
\widehat\Delta_{\mathrm{W}}(t)
=
\sum_{k=1}^K a_k \widehat\Delta_k(t).
\label{eq:aggregate-estimator-w}
\end{equation}
These estimators target $\Delta_{\mathrm{Ave}}(t)$ and $\Delta_{\mathrm{W}}(t)$ in \eqref{eq:aggregate-ave} and \eqref{eq:aggregate-w}, respectively.

\phantomsection\label{subsec:benchmarks}
For comparison, define the supervised group estimator using labelled data only:
\begin{equation*}
\widehat\Delta_{\mathrm{SUP},k}(t)
=
\frac{1}{n_{2k}n_{1k}}
\sum_{i=1}^{n_{2k}}
\sum_{j=1}^{n_{1k}}
L(Y_{2ki},Y_{1kj},t).
\end{equation*}
The corresponding supervised aggregate estimators are
\begin{equation*}
\widehat\Delta_{\mathrm{sup,Ave}}(t)
=
\frac{1}{K}\sum_{k=1}^K \widehat\Delta_{\mathrm{SUP},k}(t),
\qquad
\widehat\Delta_{\mathrm{sup,W}}(t)
=
\sum_{k=1}^K a_k \widehat\Delta_{\mathrm{SUP},k}(t).
\end{equation*}
They target the same estimands as their semi-supervised counterparts but use labelled data only.

\begin{remark}[Supervised, oracle, and feasible procedures]\label{rem:supervised-oracle-practical}
The supervised estimator uses labelled outcomes only. The oracle semi-supervised estimator uses the true index, labelled pairs to fit the conditional profile, and the full covariate-pair sample for the outer average. The feasible estimator replaces the true index by the profile-SLS estimate while retaining precisely the same fitting and outer construction.
\end{remark}

\phantomsection\label{subsec:tuning-rule}
We use deterministic bandwidth and floor sequences $(\bar h_k)$ and $(\bar w_k)$ satisfying,
uniformly over the fixed groups,
\begin{equation}
 \bar h_k\asymp n_{\min,k}^{-\alpha},\qquad
 n_{\min,k}=\min(n_{1k},n_{2k}),
 \label{eq:bandwidth-rule}
\end{equation}
and
\begin{equation}
 \bar w_k\asymp n_{L,k}^{-\eta},\qquad
 n_{L,k}=(n_{1k}^{-1}+n_{2k}^{-1})^{-1}.
 \label{eq:floor-rule}
\end{equation}
The common exponents $(\alpha,\eta)$ satisfy the admissible rate
conditions stated after Assumption~\ref{ass:E}. Standardization of the score for each candidate index direction is exactly a coordinate
change: the numerator, denominator, and centred numerator all acquire the
same inverse score-scale factor, while the add-back $c_H$ is invariant.
In standardized score coordinates, $h_k=\bar h_k$ and $w_k=\bar w_k$.
In original score coordinates, if $\widehat\sigma_{S,k,\boldsymbol{\beta}}$ is the
empirical pair-score standard deviation for candidate direction $\boldsymbol{\beta}$, then
$h_{k,\boldsymbol{\beta}}=\widehat\sigma_{S,k,\boldsymbol{\beta}}\bar h_k$ and
$w_{k,\boldsymbol{\beta}}=\bar w_k/\widehat\sigma_{S,k,\boldsymbol{\beta}}$.
Consequently the standardized-score and original-score profile and hard-floor fits
are identical pointwise when bandwidth and floor are transformed together.

\phantomsection\label{subsec:implementation-order}
The estimation workflow is
\begin{equation*}
\{\mathcal L_{rk},\mathcal U_{rk}:r=1,2;\,k=1,\ldots,K\}
\longrightarrow
\{\widehat{\boldsymbol{\beta}}_k,\widehat m_{k}^{H,c}\}_{k=1}^K
\longrightarrow
\{\widehat\Delta_k(t)\}_{k=1}^K
\longrightarrow
\{\widehat\Delta_{\mathrm{Ave}}(t),\widehat\Delta_{\mathrm{W}}(t)\}.
\end{equation*}
Algorithm~\ref{alg:proposed-estimator} summarises the resulting implementation.

\begin{algorithm}[H]
\caption{Grouped Single-Index Conditional-Imputation Estimator}
\label{alg:proposed-estimator}
\begin{algorithmic}[1]
\Require Labelled samples $\mathcal L_{rk}$, full covariate samples $\mathcal E_{rk}$, bounded comparison function $L$, target reference $c_H$, Gaussian kernel $K$, prespecified scales $(\bar h_k,\bar w_k)$, and aggregate weights $a_k$
\Ensure Group-level estimates $\{\widehat\Delta_k(t)\}_{k=1}^K$ together with $\widehat\Delta_{\mathrm{Ave}}(t)$ and $\widehat\Delta_{\mathrm{W}}(t)$
\For{$k=1,\ldots,K$}
\State \textbf{Step 1:} Construct $H_{kij}=L(Y_{2ki},Y_{1kj},t)$ and $H_{c,kij}=H_{kij}-c_{H,k}$ for every labelled cross-arm pair
\State \textbf{Step 2:} For each candidate direction, standardize its labelled pair scores, apply the prespecified standardized bandwidth, and form $\widehat d$, $\widehat g$, and $\widehat q_c=\widehat g-c_H\widehat d$
\State \textbf{Step 3:} Estimate $\widehat{\boldsymbol{\beta}}_k$ by the leave-in empirical profile-SLS criterion using the smooth centred fit in \eqref{eq:centred-profile-fit}
\State \textbf{Step 4:} Refit at $\widehat{\boldsymbol{\beta}}_k$ with the centred hard floor in \eqref{eq:centred-hard-fit} and average over all $M_{2k}M_{1k}$ full-covariate pairs
\EndFor
\State \textbf{Step 5:} Compute the equal-group and weighted aggregate estimators in \eqref{eq:aggregate-estimator-ave} and \eqref{eq:aggregate-estimator-w}
\end{algorithmic}
\end{algorithm}

\begin{remark}[Relation to recent semi-supervised comparison methods]\label{rem:recent-comparison-methods}
Recent semi-supervised procedures for pairwise targets provide close reference points for the estimators above \citep{tan2025,zhangpeng2025,kim2025semisupervised}. For the grouped low-density setting considered here, the present construction has three distinguishing features. First, conditional fitting is performed within group before any prespecified cross-group aggregation, so the group-specific scientific targets are not replaced by a pooled mixture target. Second, all $M_{2k}M_{1k}$ outer covariate pairs are retained, with endpoint-vanishing score densities handled through denominator stabilization rather than deletion of low-density evaluation pairs. Third, the original-observation formulation carries labelled fitting--outer reuse explicitly into both the variance decomposition and the full-refit perturbation scheme developed in Sections~\ref{sec:theory}--\ref{sec:perturbation}.
\end{remark}

\section{Asymptotic Properties}\label{sec:theory}

This section develops the point-estimation theory for the estimator in
Section~\ref{sec:methodology}. The analysis identifies the index, derives the
four original-observation contributions to the group estimator, and aggregates
the resulting group expansions. Because the score density may vanish at
attainable endpoints, denominator control is integrated against the evaluation
law rather than imposed through a uniform lower bound. The first-order
decomposition uses two-sample Hoeffding projections
\citep{Hoeffding1948,Serfling1980}, while uniform pair-process bounds rely on
$U$-process, decoupling, and empirical-process methods
\citep{NolanPollard1987,ArconesGine1993,Sherman1994,deLaPenaGine1999,vdVWellner1996}.
Throughout, the comparison argument $t$ and the number of groups $K$ are fixed,
and the first-order analysis is indexed by the original observations rather
than by cross-arm pairs.
Suppressing the group index in a single-group argument, write
$\mathcal B=\mathcal B_k$.
For a generic observed unit, write $O_{rk}=(Y_{rk},\mathbf X_{rk})$;
$S_k=S_k(\boldsymbol{\beta}_{k,0})$ denotes the true-index score, and
$S_k^X$ denotes the same score evaluated at covariate arguments in the outer
projections.

\subsection{Assumptions and Index Estimation}

For group $k$, put
\[
 n_{L,k}=(n_{2k}^{-1}+n_{1k}^{-1})^{-1},\qquad
 n_{\min,k}=\min(n_{1k},n_{2k}),
\]
and
\[
 \nu_k^{-1}=n_{2k}^{-1}+n_{1k}^{-1}
             +M_{2k}^{-1}+M_{1k}^{-1}.
\]
Here $n_{L,k}$ is the harmonic labelled-arm scale, $n_{\min,k}$ is the
smaller labelled-arm size, and $\nu_k$ is the effective sample-size scale
for the four-role fluctuation.

\begin{assumption}[Sampling, nesting, and boundedness]\label{ass:A}
Within each fixed group, the two arms are independent, observations are iid
within arm, and groups are mutually independent. In arm $r$, observations
$1,\ldots,n_{rk}$ are labelled and their covariates are the first $n_{rk}$
entries of the full array of size $M_{rk}\ge n_{rk}$. The labelled arm sizes
are comparable, $n_{rk},M_{rk}\to\infty$, and $n_{rk}/M_{rk}$ has a limit in
$[0,1]$. The mark
$H_k=L(Y_{2k},Y_{1k},t)$ and deterministic reference $c_{H,k}$ are bounded,
and $\mathbf{V}_k=(\mathbf{X}_{2k}^\top,\mathbf{X}_{1k}^\top)^\top$ has compact support. Set
$H_{c,k}=H_k-c_{H,k}$ and $C_{c,k}=\|H_{c,k}\|_\infty$.
\end{assumption}

\begin{assumption}[Single-index identification]\label{ass:C}
The parameter space $\mathcal B$ is a compact sign-anchored unit sphere, and a unique
$\boldsymbol{\beta}_{k,0}$ satisfies
\[
 E(H_k\mid \mathbf{V}_k)=m_k(\mathbf{V}_k^\top\boldsymbol{\beta}_{k,0}),
\]
where $m_k$ is twice continuously differentiable on the score-support interior,
with the required one-sided endpoint derivatives. The ideal criterion
\[
 Q_k(\boldsymbol{\beta})=E[\{H_k-E(H_k\mid \mathbf{V}_k^\top\boldsymbol{\beta})\}^2]
\]
is uniquely minimized at $\boldsymbol{\beta}_{k,0}$ and uniformly separated from it outside
every neighbourhood. The point $\boldsymbol{\beta}_{k,0}$ is interior to the local
sign-anchored parameterisation, and the local profile Hessian defined below is
positive definite.
\end{assumption}

For local analysis choose $\mathbf{R}_k$ with $\mathbf{R}_k^\top\boldsymbol{\beta}_{k,0}=0$ and write
\[
 \boldsymbol{\beta}_k(\boldsymbol{\vartheta})=
 \frac{\boldsymbol{\beta}_{k,0}+\mathbf{R}_k\boldsymbol{\vartheta}}
      {\|\boldsymbol{\beta}_{k,0}+\mathbf{R}_k\boldsymbol{\vartheta}\|}.
\]
Here $\boldsymbol{\vartheta}$ is the local tangent coordinate and
$\mathbf R_k$ is a basis for the tangent space at
$\boldsymbol{\beta}_{k,0}$.
The anchor coordinate is nonzero, and no other zero coordinate of
$\boldsymbol{\beta}_{k,0}$ is fixed by this parametrisation. Let $\mathbf{d}_k(\mathbf{V}_k)$ be the derivative
with respect to $\boldsymbol{\vartheta}$ of
the population conditional profile and set
\[
 \mathbf{H}_k=E\{\mathbf{d}_k(\mathbf{V}_k)\mathbf{d}_k(\mathbf{V}_k)^\top\}.
\]

\begin{assumption}[Low-density score and profile regularity]\label{ass:E}
\textnormal{(a) Low-density score geometry.}
The true-index score has compact support $[a_k,b_k]$, an interior-positive
continuous density, and endpoint order $p_{0,k}-1$, where
$p_{0,k}=p_{1k}+p_{2k}$ and $p_{rk}\ge2$. On each finite sign/zero stratum of
the fixed-dimensional local tangent parametrisation, its moving support
$[a_{k,\boldsymbol{\vartheta}},b_{k,\boldsymbol{\vartheta}}]$ has three-times continuously differentiable
endpoints and
\[
 f_{k,\boldsymbol{\vartheta}}(a_{k,\boldsymbol{\vartheta}}+r)=r^{p_k(\boldsymbol{\vartheta})-1}c_{-,k,\boldsymbol{\vartheta}}(r),\qquad
 f_{k,\boldsymbol{\vartheta}}(b_{k,\boldsymbol{\vartheta}}-r)=r^{p_k(\boldsymbol{\vartheta})-1}c_{+,k,\boldsymbol{\vartheta}}(r),
\]
with constant active contribution count; the arm-score densities have
one-sided orders $p_{rk}-1$. The coefficient functions and moving supports
satisfy the uniform smoothness and nondegeneracy conditions in
\hyperref[app:B:assumption-technical]{Appendix~\ref*{app:B}}.
\par\medskip
\textnormal{(b) Kernel-process regularity.}
The Gaussian-kernel classes required for the level, derivative, deletion,
hard-floor, and overlap expansions satisfy the uniform VC-type entropy, moment,
projection, and stochastic-equicontinuity conditions stated at the beginning
of Appendix~\ref*{app:B}.
\par\medskip
\textnormal{(c) Global profile regularity.}
For $\sigma_{\boldsymbol{\beta}}^2=\Var(\mathbf{X}_2^\top\boldsymbol{\beta}_2)+\Var(\mathbf{X}_1^\top\boldsymbol{\beta}_1)$,
$\inf_{\boldsymbol{\beta}\in\mathcal B}\sigma_{\boldsymbol{\beta}}\ge c_\sigma>0$. The finite-$(h,w)$
population profile criterion approximates the ideal criterion uniformly on
$\mathcal B$, and the sample-standardized empirical criterion obeys the
corresponding global uniform law under the conditions in Appendix~\ref{app:B}.
\end{assumption}

Parts (a)--(c) separate the three roles of Assumption~\ref{ass:E}. Part (a)
permits the score density to vanish at attainable endpoints and quantifies the
resulting low-density geometry; part (b) supplies the uniform empirical-process
control needed for the kernel, deletion, and overlap expansions; and part (c)
provides the global profile approximation used for consistency. In particular,
the theory does not impose a uniform positive lower bound on the score density.

For later use, let $p_{*,k}$ be the largest endpoint order over the finite
local parameter strata and set
\[
 \gamma_{0,k}=\frac{p_{0,k}}{p_{0,k}-1},\qquad
 \gamma_{*,k}=\frac{p_{*,k}}{p_{*,k}-1}.
\]
Because $K$ is fixed and the tuning exponents are common across groups, put
\[
 p_* = \max_{1\le k\le K}p_{*,k},\qquad
 \gamma_* = \frac{p_*}{p_*-1}=\min_{1\le k\le K}\gamma_{*,k}.
\]
The theorem-level admissible rate region is
\begin{equation*}\tag{R}\label{eq:admissible-rate-region}
 \frac14<\alpha<\frac25,
 \qquad
 \frac{1}{2\gamma_*}<\eta<
 \min\left\{\frac12,\,1-\frac{3\alpha}{2}\right\}.
\end{equation*}
The global worst-case exponent controls the floor bias in every group; the
remaining group- and stratum-specific overlap conditions are implied by this
region under the endpoint orders in Assumption~\ref{ass:E}.
Throughout Section~\ref{sec:theory}, the scales in
\eqref{eq:bandwidth-rule}--\eqref{eq:floor-rule} use common exponents in
this region.

Centring leaves the ideal criterion unchanged: for every candidate $\boldsymbol{\beta}$,
\[
 q_{k,\boldsymbol{\beta},c}(s)
 =d_{k,\boldsymbol{\beta}}(s)\{m_{k,\boldsymbol{\beta}}(s)-c_{H,k}\},
\]
and hence
\begin{equation*}
 Q_{k,c}(\boldsymbol{\beta})
 =E[\{H_{c,k}-E(H_{c,k}\mid \mathbf{V}_k^\top\boldsymbol{\beta})\}^2]
 =Q_k(\boldsymbol{\beta}).
\end{equation*}
The finite-$(h,w)$ effects are controlled in Appendix~\ref{app:B}.
We first establish consistency and the root-scale expansion of the profile
direction, since all subsequent feasible-estimator arguments require control
of the generated index.

\begin{proposition}[Profile-SLS consistency and first-order expansion]\label{prop:profile-consistency}
Under Assumptions~\ref{ass:A}--\ref{ass:E}, the following hold for the
centred smooth profile criterion.
\emph{(i) Consistency.} Any exact global minimizer satisfies
\[
 \widehat{\boldsymbol{\beta}}_k\ \xrightarrow{p}\ \boldsymbol{\beta}_{k,0}.
\]
The same conclusion holds for a numerical solution satisfying the stated
objective-gap or projected-score condition; see \citet{PakesPollard1989} and
\citet{newey1994}.

\emph{(ii) First-order expansion.}
Under Assumptions~\ref{ass:A}--\ref{ass:E}, let
$\epsilon_k=H_k-m_k(S_{k0})$ and
\[
 \boldsymbol{\zeta}_{2k}(O_{2k})=E\{\epsilon_k\mathbf{d}_k(\mathbf{V}_k)\mid O_{2k}\},\qquad
 \boldsymbol{\zeta}_{1k}(O_{1k})=E\{\epsilon_k\mathbf{d}_k(\mathbf{V}_k)\mid O_{1k}\}.
\]
Then
\begin{equation}\label{eq:beta-expansion}
 \widehat{\boldsymbol{\beta}}_k-\boldsymbol{\beta}_{k,0}
 =\mathbf{R}_k\mathbf{H}_k^{-1}\left\{
 \frac1{n_{2k}}\sum_{i=1}^{n_{2k}}\boldsymbol{\zeta}_{2k}(O_{2ki})
 +\frac1{n_{1k}}\sum_{j=1}^{n_{1k}}\boldsymbol{\zeta}_{1k}(O_{1kj})
 \right\}+o_p(n_{L,k}^{-1/2}).
\end{equation}
Hence the tangent coordinates are asymptotically normal, with Hessian limit
$2\mathbf{H}_k$. The first-order terms are invariant to centring because
\[
 H_{c,k}-\{m_k(S_{k0})-c_{H,k}\}=H_k-m_k(S_{k0}).
\]
\end{proposition}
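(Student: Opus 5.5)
For consistency (i), I would use the standard extremum-estimator argument of \citet{newey1994}. Write $\widehat Q_k(\boldsymbol{\beta})$ for the empirical centred smooth profile criterion, which is a two-sample $V$-statistic over labelled pairs. Let $Q_{k,h,w}(\boldsymbol{\beta})$ be its finite-$(h,w)$ population analogue. I would show two things:
\[
 \sup_{\boldsymbol{\beta}\in\mathcal B}|\widehat Q_k(\boldsymbol{\beta})-Q_{k,h,w}(\boldsymbol{\beta})|=o_p(1)
 \quad\text{and}\quad
 \sup_{\boldsymbol{\beta}\in\mathcal B}|Q_{k,h,w}(\boldsymbol{\beta})-Q_k(\boldsymbol{\beta})|=o(1).
\]
The first statement is the global uniform law of Assumption~\ref{ass:E}(c). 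I would verify it by a Hoeffding decomposition of the pair sums, using VC-type bounds for the Gaussian-kernel class of Assumption~\ref{ass:E}(b). The leave-in diagonal contributions are $O_p\{(n_{\min,k}h)^{-1}\}$, which is negligible because $\alpha<2/5$. Sample standardization is harmless: it is only a coordinate change, and $\sigma_{\boldsymbol{\beta}}\ge c_\sigma$.

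The second statement is where the stabilizer matters. On $\{d\ge 2w\}$ the fit is the Nadaraya--Watson ratio. The region where $d<2w$ has score probability $O(w^{\gamma})$ by the endpoint-order geometry, and the fit there is bounded by $C_{c,k}$ times a bounded ratio, so it costs $o(1)$. By Remark~\ref{rem:centred-identities}, centring gives the identity $Q_{k,c}=Q_k$. Uniqueness and well-separatedness in Assumption~\ref{ass:C} then give $\widehat{\boldsymbol{\beta}}_k\to_p\boldsymbol{\beta}_{k,0}$. For approximate minimizers I would instead invoke \citet{PakesPollard1989}.

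For the expansion (ii), I would work in the local coordinate $\boldsymbol{\vartheta}$ and write $\widehat{\boldsymbol{\vartheta}}$ for the minimizer, which is consistent by (i). The plan is a Sherman-type quadratic approximation \citep{Sherman1994Semiparametric} on shrinking neighbourhoods:
\[
 \widehat Q_k(\boldsymbol{\beta}_k(\boldsymbol{\vartheta}))-\widehat Q_k(\boldsymbol{\beta}_{k,0})
 =\boldsymbol{\vartheta}^\top\mathbf H_k\boldsymbol{\vartheta}
 -2\boldsymbol{\vartheta}^\top\mathbf W_k
 +o_p(\|\boldsymbol{\vartheta}\|^2+n_{L,k}^{-1}).
\]
The score term is
\[
 \mathbf W_k=\frac{1}{n_{2k}n_{1k}}\sum_{i,j}\epsilon_{kij}\,\mathbf d_k(\mathbf V_{kij})+o_p(n_{L,k}^{-1/2}).
\]
The Hájek projection of this two-sample $U$-statistic onto the original observations gives exactly the two sums $n_{2k}^{-1}\sum_i\boldsymbol{\zeta}_{2k}$ and $n_{1k}^{-1}\sum_j\boldsymbol{\zeta}_{1k}$. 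The degenerate remainder is $O_p\{(n_{2k}n_{1k})^{-1/2}\}$.

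The Hessian limit $2\mathbf H_k$ follows from differentiating the criterion twice, using correct specification: $E(\epsilon_k\mid\mathbf V_k)=0$ kills the terms involving second derivatives of the profile. The usual argmin argument then gives $\widehat{\boldsymbol{\vartheta}}=\mathbf H_k^{-1}\mathbf W_k+o_p(n_{L,k}^{-1/2})$. Applying the delta method to $\boldsymbol{\beta}_k(\boldsymbol{\vartheta})$, whose derivative at $0$ is $\mathbf R_k$, yields \eqref{eq:beta-expansion}. Asymptotic normality then follows from the CLT for the two independent arm sums. Centring invariance is the displayed residual identity: the centred residual coincides with $\epsilon_k$, so $\mathbf W_k$ does not change.

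\textbf{Main obstacle.} The hard step is showing that replacing the estimated profile derivative by $\mathbf d_k$ costs only $o_p(n_{L,k}^{-1/2})$ in the score. This is a generated-regressor, Newey-type "nuisance adaptivity" \citep{Song2014}. The difficulty is that the score density vanishes at the endpoints and the support $[a_{k,\boldsymbol{\vartheta}},b_{k,\boldsymbol{\vartheta}}]$ moves with $\boldsymbol{\vartheta}$. I would split the pair scores into three regions:
\begin{itemize}
\item the interior $\{d\ge 2w\}$, where bias is $O(h^2)$ and the variance part is controlled by stochastic equicontinuity, giving $o_p(n_{L,k}^{-1/2})$ because $\alpha>1/4$;
\item the transition band $\{w/2\le d\le 2w\}$, handled by the $C^3$ smoothness of $r$;
\item the floor region, whose probability under the evaluation law is $O(w^{\gamma_*})=o(n_{L,k}^{-1/2})$ precisely because $\eta>1/(2\gamma_*)$.
\end{itemize}
The endpoint motion in $\boldsymbol{\vartheta}$, which is $C^3$ on each sign stratum, contributes boundary derivative terms. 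I would bound these stratum by stratum and show they are multiplied by a density of order $r^{p-1}$ with $p\ge4$, hence negligible. The upper constraint $\eta<1-3\alpha/2$ keeps the derivative variance $(nh^3w^2)^{-1}$ small.
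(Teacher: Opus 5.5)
Your outline matches the paper's. For (i) it is a uniform law plus a well-separated minimum. For (ii) it is a local quadratic/argmin argument, equivalent to the paper's score expansion at $\boldsymbol{\vartheta}=0$ plus uniform local curvature, followed by the H\'ajek projection and the sphere expansion. The gap is a step you assert rather than derive: that the score equals $\mathbb P_{21}(\epsilon\mathbf d)+o_p(n_L^{-1/2})$. Put $\eta_0=\widehat m^{\rho,c}_0-m^{\rho,c}_0$ and $\eta_1=\dot{\widehat m}^{\rho,c}_0-\dot m^{\rho,c}_0$, and expand $-\tfrac12\dot{\widehat Q}^{\rho,c}(0)=\mathbb P_{21}\{(H-\widehat m^{\rho,c}_0)\dot{\widehat m}^{\rho,c}_0\}$. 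Besides the derivative-error term $\mathbb P_{21}(\epsilon\eta_1)$ you single out, this produces the level-error term $\mathbb P_{21}(\mathbf d\,\eta_0)$, which your sketch never treats. Its stochastic part is a partial-mean functional of the first-stage fit. Heuristically it equals $\mathbb P_{21}[E\{\mathbf d(\mathbf V)\mid S_0\}\epsilon]$ up to $o_p(n_L^{-1/2})$. That is of exact order $n_L^{-1/2}$ and would alter the influence function unless $E\{\mathbf d(\mathbf V)\mid S_0\}$ vanishes. A bias bound, stochastic equicontinuity (which controls $(\mathbb P_{21}-P)$ of the product, not $P$ of it), and your three-region split cannot remove it. The missing idea is the orthogonality $E\{\mathbf d(\mathbf V)\mid S_0\}=0$. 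It holds because $\mathbf d(\mathbf V)=m'(S_0)\mathbf R^\top\{\mathbf V-E(\mathbf V\mid S_0)\}$ is the derivative of the \emph{profile}, and the level fit depends on the evaluation pair only through its scalar score. Similarly, $\mathbb P_{21}(\epsilon\eta_1)$ is not small by size. A size bound gives only $O_p(\kappa^\rho_{1,n})$, and $\kappa^\rho_{1,n}\ge\sqrt{\ell_n/n_2}$ is not $o(n_L^{-1/2})$. The term is negligible because $E(\epsilon\mid\mathbf V)=0$ makes it conditionally centred.

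Both centring arguments require the fitted nuisance to be independent of the evaluation pair. That fails for the leave-in pairwise criterion, where each labelled unit enters the fit through a whole row or column. The paper supplies the independence by row/column deletion:
\begin{itemize}
\item it deletes row $i$ and column $j$ and writes $\eta_r=\eta_r^-+L_r$;
\item four-distinct-unit terms have mean zero under a common double-deleted fit;
\item the shared-unit classes, with proportions $O(n_2^{-1})$, $O(n_1^{-1})$ and $O\{(n_2n_1)^{-1}\}$, are bounded by conditional Cauchy--Schwarz;
\item reinsertion costs $\lambda^\rho_{q,n}=o(b_n)$, and the product term satisfies $\kappa^\rho_{0,n}\kappa^\rho_{1,n}=o(b_n)$.
\end{itemize}
Your region split is the right tool only for the deterministic drift $O(h^2+w^{\gamma_*})$.

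Smaller corrections:
\begin{itemize}
\item The constraint $\eta<1-3\alpha/2$ is what gives $\sqrt{\ell_n/(n_2n_1h^3)}=o(w)$, i.e.\ $(n^2h^3w^2)^{-1}\to0$. Your $(nh^3w^2)^{-1}$ diverges at the paper's $(\alpha,\eta)=(1/3,15/32)$.
\item $\alpha<2/5$ is needed for $\kappa^\rho_{2,n}=o(1)$ in the empirical Hessian, not for leave-in diagonals; any $\alpha<1$ handles those.
\item In (i), your $O(w^\gamma)$ bound on $\{d<2w\}$ uses endpoint geometry that Assumption~\ref{ass:E}(a) grants only locally. The global population approximation should be taken directly from Assumption~\ref{ass:E}(c).
\item In (ii), the candidate-dependent sample scale is not merely a coordinate change. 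It adds $F_{n,a}\dot a_{\boldsymbol{\vartheta}}$ to the score, which is $o_p(n_L^{-1/2})$ only because $F_{n,a}=o_p(1)$ and $\dot a_{\boldsymbol{\vartheta}}=O_p(n_L^{-1/2})$ (Lemma~\ref{lem:b6-candidate-scale}).
\end{itemize}
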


\begin{remark}[Effective sample size and zero coordinates]\label{rem:sparse-chart}
Although the profile criterion uses $n_{2k}n_{1k}$ cross-arm pairs, the
expansion in \eqref{eq:beta-expansion} is governed by the original-arm
harmonic scale $n_{L,k}^{-1/2}$, reflecting repeated use of the same
observations across pairs. No sparsity condition is imposed on $\boldsymbol{\beta}_{k,0}$;
zero coordinates remain admissible tangent directions under the local
sign-anchored parametrisation.
\end{remark}

With the profile direction controlled, consistency of the hard-floor group
estimator follows under Assumptions~\ref{ass:A}--\ref{ass:E}, before the
stronger target-transfer conditions needed for root-scale inference are
imposed.

\begin{theorem}[Consistency of the group and aggregate estimators]\label{thm:consistency}
Under Assumptions~\ref{ass:A}--\ref{ass:E},
\[
 \widehat\Delta_k\xrightarrow{p}\Delta_k
\]
for every fixed group. For fixed $K$ and deterministic bounded weights,
\[
 \widehat\Delta_{\mathrm{Ave}}\xrightarrow{p}\Delta_{\mathrm{Ave}},
 \qquad
 \widehat\Delta_{\mathrm{W}}\xrightarrow{p}\Delta_{\mathrm{W}}.
\]
\end{theorem}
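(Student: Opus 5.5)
The argument works one group at a time; the aggregate statements then follow from the continuous mapping theorem. Because $K$ is fixed and the weights $a_k$ are deterministic and bounded, $\widehat\Delta_{\mathrm{W}}-\Delta_{\mathrm{W}}=\sum_k a_k(\widehat\Delta_k-\Delta_k)$ is a finite linear combination of $o_p(1)$ terms. The case of $\Delta_{\mathrm{Ave}}$ is $a_k=1/K$. So it suffices to show $\widehat\Delta_k\xrightarrow{p}\Delta_k$ for a fixed $k$.

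For a single group, write
\[
\widehat\Delta_k^c(\widehat{\boldsymbol\beta}_k)-\Delta_k=\{\widehat\Delta_k^c(\widehat{\boldsymbol\beta}_k)-\widehat\Delta_k^{\mathrm{or},c}\}+\{\widehat\Delta_k^{\mathrm{or},c}-\bar\Delta_k\}+\{\bar\Delta_k-\Delta_k\}.
\]
Here $\bar\Delta_k=(M_{2k}M_{1k})^{-1}\sum_{i,j}m_k(S^X_{kij})$ is the full-covariate-pair average of the true profile.

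\emph{The last term.} This is a two-sample $U$-statistic with bounded kernel, since $m_k$ takes values in the bounded range of $H_k$. Because $M_{rk}\to\infty$, the law of large numbers for two-sample $U$-statistics makes it $o_p(1)$.

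\emph{The oracle term.} Here the approach differs from a uniform sup-norm argument. I bound it by the pair-average of $|\widehat m^{H,c}_k(S^X_{kij})-m_k(S^X_{kij})|$ and treat this as an $L_1$ quantity integrated against the evaluation law. No uniform lower bound on the score density is available under Assumption~\ref{ass:E}(a). I split the score support into two regions.

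\begin{itemize}
\item On the good region $\{s: d_k(s)\ge 2\bar w_k\}$, with $d_k$ the population kernel-smoothed density, I use the uniform Gaussian-kernel bounds of Assumption~\ref{ass:E}(b). These give $\sup_s|\widehat d-d|$ and $\sup_s|\widehat q_c-q_c|$ of order $O_p\{(n_{\min,k}h_k)^{-1/2}\log^{1/2}n\}$. The admissible region (R), in particular $\eta<1-3\alpha/2$ and $\alpha<2/5$, makes this $o_p(\bar w_k)$, so the floor is inactive there with probability tending to one.
\item On that region the error is the ratio perturbation plus the smoothing bias $q_c/d-(m_k-c_H)$. The bias is $o(1)$ pointwise on the interior, by continuity of $m_k$, and is bounded everywhere by $2C_{c,k}$.
\item On the complement, the floored fit is bounded in absolute value by $|c_H|+|\widehat q_c|/\bar w_k$. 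Since $|\widehat q_c|\le C_{c,k}\widehat d$, this ratio is at most $C_{c,k}$ when $\widehat d\le\bar w_k$, and at most $C_{c,k}$ otherwise, so the integrand stays bounded.
\item The complement has evaluation-law probability $O(\bar w_k^{\gamma})$-type small. This follows from the endpoint order $p_{0,k}-1$, because $d_k(s)\asymp$ the density of $S_k$ near the endpoints.
\end{itemize}

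Dominated convergence for the population part, plus empirical-measure convergence of the outer average, then gives an oracle error of $o_p(1)$.

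\emph{The generated-index term.} Proposition~\ref{prop:profile-consistency}(i) gives $\widehat{\boldsymbol\beta}_k\xrightarrow{p}\boldsymbol\beta_{k,0}$. I need stochastic equicontinuity of $\boldsymbol\beta\mapsto\widehat\Delta_k^c(\boldsymbol\beta)$ near $\boldsymbol\beta_{k,0}$ in probability. Because the hard floor is non-differentiable, I avoid derivatives. Instead I repeat the oracle argument uniformly over a shrinking neighbourhood, using the moving-support regularity of Assumption~\ref{ass:E}(a) on each sign stratum. The kernel-process classes indexed by $(\boldsymbol\beta,s)$ from Assumption~\ref{ass:E}(b) give uniform versions of the bounds above. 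This yields $\sup_{\|\boldsymbol\beta-\boldsymbol\beta_{k,0}\|\le\delta}|\widehat\Delta_k^c(\boldsymbol\beta)-\Delta_{k,\boldsymbol\beta}|=o_p(1)$, where $\Delta_{k,\boldsymbol\beta}=E\,m_{k,\boldsymbol\beta}(S_{\boldsymbol\beta})$. Continuity of $\Delta_{k,\boldsymbol\beta}$ in $\boldsymbol\beta$ then follows by dominated convergence, since the score laws vary continuously with a density. Combined with consistency of $\widehat{\boldsymbol\beta}_k$, this term is $o_p(1)$.

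The main obstacle is the low-density region under an estimated index. Endpoints of the support move with $\boldsymbol\beta$, and the set where the floor binds is random. The step I would attempt first is a uniform bound showing that the evaluation-law mass of $\{\widehat d_{k,\boldsymbol\beta}<2\bar w_k\}$ vanishes uniformly in $\boldsymbol\beta$ near $\boldsymbol\beta_{k,0}$. I would do this by comparing it with $\{d_{k,\boldsymbol\beta}<3\bar w_k\}$, using the uniform deviation bound and the endpoint expansion $f_{k,\boldsymbol\vartheta}(a+r)=r^{p-1}c(r)$ with $c$ bounded below. Boundedness of the fit then makes this region negligible. Note that consistency only needs these bad sets to have vanishing mass; it does not need the rate $\eta>1/(2\gamma_*)$.
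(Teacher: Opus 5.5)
Your argument is sound in outline, but it takes a genuinely different route from the paper's proof. The paper first uses Proposition~\ref{prop:profile-consistency} to obtain the \emph{rate} $\widehat{\boldsymbol\vartheta}_k=O_p(n_{L,k}^{-1/2})$. It then writes $\widehat\Delta_k-\Delta_k=A_{1k}+A_{2k}+A_{3k}$:
\begin{itemize}
\item $A_{1k}$ replaces the candidate-specific sample score scale by the population scale, via Lemma~\ref{lem:b10-hard-floor-transfer}(i), which is an $O_p(n_L^{-1/2}/\bar w)$ bound from the $1/\bar w$-Lipschitz floor;
\item $A_{2k}$ is the empirical-minus-population hard-floor average on the $O(n_L^{-1/2})$ ball, controlled by the local-uniform root-scale expansion of Lemma~\ref{lem:b8-local-hard-floor}(ii);
\item $A_{3k}$ is population flatness, $\sup_{\boldsymbol\vartheta}|T^{H,c}_{h,w}(\boldsymbol\vartheta)-EH|\le C(h^2+w^{\gamma_*})$.
\end{itemize}
The paper's consistency result is therefore a by-product of the machinery built for asymptotic linearity. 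Your proof needs only four ingredients: $\widehat{\boldsymbol\beta}_k\xrightarrow{p}\boldsymbol\beta_{k,0}$, $o_p(1)$ uniform kernel deviations on a fixed neighbourhood, the bound $|\widehat q_c|/(\widehat d\vee w)\le C_{c,k}$, and vanishing low-density mass. It avoids Proposition~\ref{prop:profile-consistency}(ii) and the lower constraint on $\eta$, which is the more natural logical order. You can simplify further. By iterated expectations, $\Delta_{k,\boldsymbol\beta}=E\,m_{k,\boldsymbol\beta}(S_{\boldsymbol\beta})=EH_k=\Delta_k$ for every $\boldsymbol\beta$ (this is the paper's $T(\boldsymbol\vartheta)=EH$). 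So no continuity argument is needed, and your oracle step is a special case of the uniform one. Also, every piece you integrate is either a uniform supremum or a deterministic function of the evaluation score, so the labelled--outer overlap is harmless, as in Lemma~\ref{lem:b2-floor-expansion}(iii).

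Two steps need repair.

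\emph{The rate claim on the good region.} You claim that (R) makes $(n_{\min,k}h_k)^{-1/2}\log^{1/2}n$ of order $o(\bar w_k)$. This is false everywhere in (R). Up to logarithms this bound is $n^{-(1-\alpha)/2}$, and $(1-\alpha)/2<3/8$ because $\alpha>1/4$. On the other hand, $\eta>1/(2\gamma_*)=\tfrac12-\tfrac{1}{2p_*}\ge3/8$ because $p_*\ge p_{0,k}\ge4$. The one-sample rate is too pessimistic for the pairwise smoother. The two-arm Hoeffding decomposition confines the $h^{-1}$ variance to the doubly canonical term, which carries normalization $n_2n_1$. This gives $a_n=\sqrt{\ell_n/n_2}+\sqrt{\ell_n/n_1}+\sqrt{\ell_n/(n_2n_1h)}$, as in Lemma~\ref{lem:b2-floor-expansion}(i), and $a_n=o(\bar w)$ precisely because $\eta<1/2$.

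A simpler fix for consistency is to split at a fixed level $\epsilon$ rather than at $2\bar w_k$. Lemma~\ref{lem:b1-low-density}(i) gives $d_h(s)\ge c\,r(s)^{p-1}$ near the endpoints, so $P\{d_h(S)<\epsilon\}=O(\epsilon^{\gamma})$ uniformly in small $h$. On $\{d_h\ge\epsilon\}$ the fit needs only $o_p(1)$ deviations, and you let $\epsilon\downarrow0$ at the end. Your proposed comparison of $\{\widehat d_{k,\boldsymbol\beta}<2\bar w_k\}$ with $\{d_{k,\boldsymbol\beta}<3\bar w_k\}$ then becomes unnecessary.

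\emph{Sample standardization.} You treat the bandwidth and floor as deterministic. The estimator, however, standardizes each candidate score by its labelled-sample standard deviation. In raw score coordinates the bandwidth and floor are therefore the random quantities $\widehat\sigma_{\boldsymbol\beta}\bar h_k$ and $\bar w_k/\widehat\sigma_{\boldsymbol\beta}$; this is exactly the paper's $A_{1k}$. To handle it, add the scale multiplier $a\in[1/2,2]$ as an extra index in your uniform classes; Assumption~\ref{ass:E}(b) is uniform over bounded positive score rescalings. Then use $\sup_{\boldsymbol\beta}|\widehat\sigma_{\boldsymbol\beta}/\sigma_{\boldsymbol\beta}-1|\xrightarrow{p}0$.
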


\subsection{Group-Level Asymptotic Theory}

Write $S_k$ for the true-index score in the following population
functionals; the covariate-only notation $S_k^X$ is used for their outer
projections.
For the true-index score write
\[
 d_{k,h}(s)=E\{K_{h_k}(S_k-s)\},\qquad
 g_{k,h}(s)=E\{K_{h_k}(S_k-s)H_k\},\qquad
 m_{k,h}=g_{k,h}/d_{k,h}.
\]
Let $\Delta^c_{k,h,w}$ be the centred hard-floor population functional
\begin{equation*}
 \Delta^c_{k,h,w}
 =E\left\{c_{H,k}+\frac{g_{k,h}(S_k)-c_{H,k}d_{k,h}(S_k)}
 {d_{k,h}(S_k)\vee w_k}\right\}.
\end{equation*}

\begin{assumption}[Target-transfer smoothness]\label{ass:D}
For $p_{0,k}\ge4$, the zero extension of $f_k$ is absolutely continuous,
$f_k(a_k)=f_k(b_k)=0$, and $f_k$ is continuously differentiable on
$(a_k,b_k)$ with
\[
 |f_k'(a_k+r)|+|f_k'(b_k-r)|\le C r^{p_{0,k}-2}
\]
for small $r>0$. The function $m_k$ has a $C^2([a_k,b_k])$ version with
one-sided endpoint derivatives and
$\|m_k'\|_\infty+\|m_k''\|_\infty<\infty$.
\end{assumption}

The endpoint mass bound and Gaussian smoothing yield
\begin{equation*}
 |\Delta^c_{k,h,w}-\Delta_k|=O(w_k^{\gamma_{0,k}}+h_k^2)
 =o(\nu_k^{-1/2}).
\end{equation*}
Partial-mean and inverse-density integration arguments
\citep{Newey1994PartialMeans,DelyonPortier2016} inform the expansion,
together with generated-covariate adaptivity
\citep{MammenRotheSchienle2012,MammenRotheSchienle2016,Song2014}. The proof
has three layers. The low-density and two-sample bounds in Appendix~\ref{app:B}
linearise the centred hard-floor ratio and control nonlinear and
fitting--evaluation overlap remainders at the true index. Proposition~\ref{prop:oracle-decomp}
organizes the leading fluctuation into four original-observation projections,
two from the full covariate laws and two from labelled residual variation.
Index- and score-scale stability shows that replacing the true direction and
population score scale by their fitted counterparts is negligible at the root
scale, while target transfer replaces the finite-$(h,w)$ reference functional by the
scientific target. Regrouping by original unit yields \eqref{eq:group-var},
the CLT, and the supervised--semi-supervised variance comparison.

The limiting outer projections are
\[
 \phi_{X2,k}(\mathbf{x}_2)=E\{m_k(S_k^X(\mathbf{x}_2,\mathbf{X}_{1k}))\}-\Delta_k,
\]
\[
 \phi_{X1,k}(\mathbf{x}_1)=E\{m_k(S_k^X(\mathbf{X}_{2k},\mathbf{x}_1))\}-\Delta_k.
\]
With $\epsilon_k=H_k-m_k(S_k)$, the labelled residual projections are
\[
 \phi_{Y2,k}(o_2)=E(\epsilon_k\mid O_{2k}=o_2),\qquad
 \phi_{Y1,k}(o_1)=E(\epsilon_k\mid O_{1k}=o_1).
\]
Centring leaves both projection pairs unchanged because
$H_c-(m-c_H)=H-m$ and
$(m-c_H)-\{E(H)-c_H\}=m-E(H)$.

For the fixed-scale decomposition below, let
$\widehat\Delta_{k,h,w}^{\mathrm{or},c,\mathrm{ref}}$ denote the
population-scale reference statistic evaluated at $\boldsymbol{\beta}_{k,0}$ in standardized score
coordinates with fixed finite-$(h,w)$. Unlike the oracle estimator in
Section~\ref{sec:methodology}, equation~\eqref{eq:oracle-estimator}, it uses
population rather than candidate-dependent sample score scaling.

\begin{proposition}[Exact four-role oracle decomposition]\label{prop:oracle-decomp}
Let $A_{c,k}=\widehat q_{k,c}-q_{k,c}$,
$B_k=\widehat d_k-d_k$, $T_w(d)=d\vee w$, and
$J_{k,h,w}(s)=\mathbf 1\{d_{k,h}(s)>w_k\}$. The exact centred ratio identity is
\begin{equation*}
 \frac{\widehat q_{k,c}}{T_w(\widehat d_k)}
 -\frac{q_{k,c}}{T_w(d_k)}
 =\frac{A_{c,k}}{T_w(d_k)}
 -J_{k,h,w}\frac{q_{k,c}B_k}{d_k^2}
 +R_{k,h,w}^{ND,c},
\end{equation*}
where the three-region formula for $R_{k,h,w}^{ND,c}$ is given in
Appendix~\ref{sec:ld-ratio-decomp}. After outer averaging and the two-arm
Hoeffding decomposition,
\begin{align}\label{eq:exact-four-role}
 \widehat\Delta_{k,h,w}^{\mathrm{or},c,\mathrm{ref}}
 -\Delta^c_{k,h,w}
  ={}&M_{2k}^{-1}\sum_{a=1}^{M_{2k}}\phi_{X2,k,h,w}(\mathbf{X}_{2ka})
 +M_{1k}^{-1}\sum_{b=1}^{M_{1k}}\phi_{X1,k,h,w}(\mathbf{X}_{1kb})\notag\\
 &+n_{2k}^{-1}\sum_{i=1}^{n_{2k}}\phi_{Y2,k,h,w}(O_{2ki})
 +n_{1k}^{-1}\sum_{j=1}^{n_{1k}}\phi_{Y1,k,h,w}(O_{1kj})\notag\\
 &+R_{X,k}^d+R_{Y,k}^d+R_{4,k}^{\mathrm{lin}}
 +R_{4,k}^{ND,c}.
\end{align}
 Centring changes only the numerator mark; the denominator geometry and
overlap orders are unchanged.
\end{proposition}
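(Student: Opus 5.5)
The plan has two parts: verify the ratio identity pointwise by cases, then plug it into the outer average and decompose each piece by two-sample Hoeffding projections.

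\emph{Step 1: the ratio identity.} Fix $s$ and split the real line into three regions: (I) $\widehat d_k>w_k$ and $d_k>w_k$; (II) $d_k\le w_k$ and $\widehat d_k\le w_k$; (III) the crossing set where exactly one of $d_k,\widehat d_k$ exceeds $w_k$.
\begin{itemize}
\item In (I) both floors are inactive. The standard expansion
\[
\widehat q/\widehat d-q/d=A/d-qB/d^2+\bigl(qB^2/d^2-AB/d\bigr)/\widehat d
\]
gives the identity, with $R^{ND,c}$ equal to the quadratic term. Here $J=1$ and $T_w(d)=d$.
\item In (II) both denominators equal $w_k$. The difference is exactly $A_{c,k}/w_k$ and $J=0$, so $R^{ND,c}=0$.
\item In (III) we define $R^{ND,c}$ as the exact residual. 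Its size is controlled by $|\widehat d_k-d_k|\ge|d_k-w_k|$, which bounds $|T_w(\widehat d)-T_w(d)|$ by $|B_k|$.
\end{itemize}
Taking these three formulas as the definition of $R^{ND,c}_{k,h,w}$ makes the identity hold exactly. Centring enters only through $q_{k,c}=q_k-c_Hd_k$, so the denominator regions are unchanged.

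\emph{Step 2: outer averaging.} Evaluate the identity at $S_{k,ab}=S^X_k(\mathbf X_{2ka},\mathbf X_{1kb})$ and average over all $M_{2k}M_{1k}$ pairs, adding and subtracting the population average of $q_{k,c}/T_w(d_k)$ at $S_k$. Then:
\begin{itemize}
\item The term $\frac{1}{M_2M_1}\sum_{a,b}\{c_H+q_{k,c}/T_w(d_k)\}(S_{k,ab})-\Delta^c_{k,h,w}$ is a two-sample $U$-statistic in covariates only. Its Hoeffding projections define $\phi_{X2,k,h,w}$ and $\phi_{X1,k,h,w}$, and its degenerate part goes into $R^d_{X,k}$.
\item The linear terms $A_{c,k}/T_w(d_k)-Jq_{k,c}B_k/d_k^2$ are averages over labelled pairs $(i,j)$, with kernel weights evaluated at outer pairs $(a,b)$. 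They form a four-index statistic. Replacing the outer empirical measure by its population law gives a two-sample labelled $U$-statistic with kernel
\[
E_{S}\Bigl[K_h(S_{ij}-S)\bigl\{(H_{c,ij}/T_w(d))-J\,q_{c}/d^2\bigr\}\Bigr]-\text{mean}.
\]
Its arm projections are $\phi_{Y2,k,h,w},\phi_{Y1,k,h,w}$, and the degenerate remainder is $R^d_{Y,k}$.
\item The cross term between outer fluctuation and fitting fluctuation is $R^{\mathrm{lin}}_{4,k}$. It also absorbs the overlap diagonal where labelled index $i$ coincides with outer index $a\le n_{2k}$ (and similarly for arm 1). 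Because the labelled covariates are nested in the full array, this overlap must be kept rather than assumed independent.
\item The averaged $R^{ND,c}$ gives $R^{ND,c}_{4,k}$.
\end{itemize}

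\emph{Exactness.} Every term has been defined as a residual, so equation \eqref{eq:exact-four-role} is an algebraic identity. The statement asks only for the exact decomposition and the invariance under centring. For the latter, the $\phi_Y$ projections involve $H_c-(m_{h}-c_H)$ on the region $J=1$, which equals $H-m_h$, and the $\phi_X$ projections subtract the mean, so $c_H$ cancels there as well. The region geometry depends only on $d$.

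The main obstacle is bookkeeping rather than analysis: writing the four-index statistic so that each labelled unit's dual role (fitting and outer) is assigned to exactly one remainder, with no double counting. I would handle it by indexing all terms by original observations and applying the full two-sample Hoeffding decomposition to the joint $(O_2,O_1)$-indexed statistic. The remainder orders are deferred to the Appendix~\ref{app:B} bounds.
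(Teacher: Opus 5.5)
Your proposal is correct and follows essentially the paper's own route: the exact residual identity of Lemma~\ref{lem:b2-floor-expansion} (your case split by the floor status of $d_k,\widehat d_k$ just re-expresses the same residual). The remaining terms are then handled as in the paper: $(\mathbb Q_M-Q)m^c_{h,w}$ is Hoeffding-decomposed into $\phi_{X2},\phi_{X1},R_X^d$; $Q(\mathbb P_{21}-P)H^c_{h,w}$ into $\phi_{Y2},\phi_{Y1},R_Y^d$; the nine mixed products $(\mathbb Q_M-Q)(\mathbb P_{21}-P)H^c_{h,w}$, nested diagonal included, into $R_4^{\mathrm{lin}}$; and $\mathbb Q_MR^{ND,c}$ into $R_4^{ND,c}$. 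One small correction to your centring remark: for finite $(h,w)$ the add-back $c_H$ does not cancel from $\phi_{Xr,k,h,w}$, because below the floor $m^c_{h,w}=g_h/w+c_H(1-d_h/w)$, so only the limiting projections are centring-invariant; the proposition itself needs only that the denominator geometry is unchanged, which your argument does establish.
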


\begin{assumption}[Variance regularity]\label{ass:F}
The variance of the four-role original-unit leading term is bounded above and below by
constant multiples of $\nu_k^{-1}$, and its summands satisfy the uniformly
bounded third-moment condition required for the group Lyapunov CLT.
\end{assumption}

\begin{theorem}[Asymptotic linearity, normality, and variance comparison]\label{thm:oracle-adapt-practical}
Under Assumptions~\ref{ass:A}--\ref{ass:D}, the following hold for every
fixed group in parts (i)--(iii). Parts (iv)--(v) additionally use the
conditions in Assumption~\ref{ass:F}.

\emph{(i) Oracle asymptotic linear representation (ALR).}
The oracle estimator admits the four-role original-observation expansion:
\begin{align}\label{eq:oracle-alr}
 \widehat\Delta_k^{\mathrm{or},c}-\Delta_k
 ={}&M_{2k}^{-1}\sum_{a=1}^{M_{2k}}\phi_{X2,k}(\mathbf{X}_{2ka})
 +M_{1k}^{-1}\sum_{b=1}^{M_{1k}}\phi_{X1,k}(\mathbf{X}_{1kb})\notag\\
 &+n_{2k}^{-1}\sum_{i=1}^{n_{2k}}\phi_{Y2,k}(O_{2ki})
 +n_{1k}^{-1}\sum_{j=1}^{n_{1k}}\phi_{Y1,k}(O_{1kj})
 +o_p(\nu_k^{-1/2}).
\end{align}

\emph{(ii) Empirical beta adaptivity.}
Estimating the single-index direction has no first-order effect on the group estimator:
\begin{equation}\label{eq:beta-adaptivity}
 \widehat\Delta_k^c(\widehat{\boldsymbol{\beta}}_k)
 -\widehat\Delta_k^c(\boldsymbol{\beta}_{k,0})
 =o_p(\nu_k^{-1/2}).
\end{equation}

\emph{(iii) Feasible scientific-target ALR.}
Consequently, the feasible estimator inherits the same first-order representation at the scientific target:
\begin{align}\label{eq:practical-alr}
 \widehat\Delta_k^c-\Delta_k
 ={}&M_{2k}^{-1}\sum_{a=1}^{M_{2k}}\phi_{X2,k}(\mathbf{X}_{2ka})
 +M_{1k}^{-1}\sum_{b=1}^{M_{1k}}\phi_{X1,k}(\mathbf{X}_{1kb})\notag\\
 &+n_{2k}^{-1}\sum_{i=1}^{n_{2k}}\phi_{Y2,k}(O_{2ki})
 +n_{1k}^{-1}\sum_{j=1}^{n_{1k}}\phi_{Y1,k}(O_{1kj})
 +o_p(\nu_k^{-1/2}).
\end{align}
Centring leaves the first-order influence representation and rate unchanged.
\emph{(iv) Group-level normality and variance.} The asymptotic variance is
\begin{equation}\label{eq:group-var}
 \sigma_{\mathrm{SSL},k}^2
 =\sum_{r=1}^2\left\{
 \frac{\Var(\phi_{Xr,k})}{M_{rk}}
 +\frac{\Var(\phi_{Yr,k})}{n_{rk}}
 +\frac{2}{M_{rk}}\Cov(\phi_{Xr,k},\phi_{Yr,k})
 \right\}.
\end{equation}
Under the correct model,
\[
 E\{\phi_{Yr,k}(O_{rk})\mid \mathbf{X}_{rk}\}=0,
 \qquad \Cov(\phi_{Xr,k},\phi_{Yr,k})=0,
\]
and
\[
 \frac{\widehat\Delta_k^c-\Delta_k}
 {\sigma_{\mathrm{SSL},k}}\Rightarrow N(0,1).
\]
\emph{(v) Supervised/SSL variance comparison.} For the supervised pairwise
estimator,
\begin{equation}\label{eq:group-variance-gain}
 \sigma_{\mathrm{SUP},k}^2-\sigma_{\mathrm{SSL},k}^2
 =\sum_{r=1}^2\left(\frac1{n_{rk}}-\frac1{M_{rk}}\right)
 \Var\{\phi_{Xr,k}(\mathbf{X}_{rk})\}\ge0.
\end{equation}
\end{theorem}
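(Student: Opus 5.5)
The plan is to chain the fixed-scale decomposition of Proposition~\ref{prop:oracle-decomp} with target transfer, the index expansion of Proposition~\ref{prop:profile-consistency}, and a CLT for independent original-unit summands. Parts (i)--(iii) are handled by showing that every remainder is $o_p(\nu_k^{-1/2})$ and that the finite-$(h,w)$ quantities can be replaced by their limits. Parts (iv)--(v) are then variance computations.

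\emph{Part (i).} Start from \eqref{eq:exact-four-role} for the population-scale reference statistic. The remainders $R_{X,k}^d$, $R_{Y,k}^d$, $R_{4,k}^{\mathrm{lin}}$ and $R_{4,k}^{ND,c}$ will be bounded using the low-density and two-sample $U$-process bounds of Appendix~\ref{app:B}. For the ratio remainder, I would split into the three regions $\{d\ge 2w\}$, the floor-crossing band, and $\{d<w/2\}$. On each region I would integrate the squared denominator error $B_k^2/d^3$ against the evaluation law, rather than using a uniform lower bound on the density. The endpoint mass bound gives the band contribution as $O(w^{\gamma_{0,k}})$ plus a stochastic term of order $(n_Lh)^{-1}w^{-1}$. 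Both are $o(\nu_k^{-1/2})$ under \eqref{eq:admissible-rate-region}: the first since $\eta\gamma_*>1/2$, the second since $\eta<1-3\alpha/2$.

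The overlap remainder arises because labelled covariates enter both the fit and the outer average. It is a degenerate two-sample $U$-statistic whose diagonal terms are of order $(n_Lh w)^{-1}$, so it is handled the same way. Next I would replace each $\phi_{\cdot,k,h,w}$ by $\phi_{\cdot,k}$. The differences are mean-zero iid sums whose variance tends to zero, by $L^2$ continuity of $m_{k,h}\to m_k$ under Assumption~\ref{ass:D}. Target transfer $|\Delta^c_{k,h,w}-\Delta_k|=o(\nu_k^{-1/2})$ is already stated. Finally, sample rather than population score standardization is a rescaling by $\widehat\sigma_S/\sigma_S=1+O_p(n^{-1/2})$. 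This is absorbed by the same derivative argument as in (ii), using the scale equivariance in Lemma~\ref{lem:b6-candidate-scale}.

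\emph{Part (ii) is the main obstacle.} I would expand $\boldsymbol{\vartheta}\mapsto\widehat\Delta_k^c(\boldsymbol{\beta}_k(\boldsymbol{\vartheta}))$ to first order at $\boldsymbol{\vartheta}=0$. By Proposition~\ref{prop:profile-consistency}, $\widehat{\boldsymbol{\vartheta}}=O_p(n_L^{-1/2})$. The claim then reduces to two facts. First, the derivative process converges uniformly in a shrinking neighbourhood to its population limit, by stochastic equicontinuity from Assumption~\ref{ass:E}(b). Second, that limit is $o(1)$.

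The population derivative vanishes by the generated-index partial-mean argument \citep{MammenRotheSchienle2012,Song2014}. Perturbing the index changes both the fitted profile $m_{k,\boldsymbol\beta}$ and the evaluation points, and these two effects cancel. Concretely, $\partial_{\boldsymbol\vartheta}E\,m_{\boldsymbol\beta}(\mathbf V^\top\boldsymbol\beta)=\partial_{\boldsymbol\vartheta}E(H)=0$, because averaging the profile over its own index law returns $E(H)$ for every $\boldsymbol\beta$. The delicate points are the moving endpoints $a_{k,\boldsymbol\vartheta},b_{k,\boldsymbol\vartheta}$ and the hard floor, which is non-differentiable. I would treat these by smoothing the floor indicator inside the derivative and bounding the floor-crossing set's mass by $w^{\gamma_*}$ uniformly over strata. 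This should give a derivative bias of $O(w^{\gamma_*-1}h^{-1}\cdot\cdots)$ times $n_L^{-1/2}$, which the rate region makes negligible. Part (iii) is then (i) plus (ii).

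\emph{Parts (iv)--(v).} The four sums in \eqref{eq:practical-alr} involve iid units. Different arms are independent, but within arm $r$ the labelled unit $i\le n_{rk}$ appears in both the $X$- and $Y$-sums. Regrouping by unit gives the coefficient $M_{rk}^{-1}\phi_{Xr}+n_{rk}^{-1}\phi_{Yr}$ for labelled units and $M_{rk}^{-1}\phi_{Xr}$ for unlabelled ones. Summing the variances yields \eqref{eq:group-var}, with cross term $2n_{rk}M_{rk}^{-1}n_{rk}^{-1}\Cov=2M_{rk}^{-1}\Cov$.

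Under the correct model, $E(\epsilon_k\mid\mathbf V_k)=0$. Hence $E\{\phi_{Yr}\mid \mathbf X_r\}=E\{E(\epsilon\mid \mathbf V)\mid\mathbf X_r\}=0$, and the covariance vanishes since $\phi_{Xr}$ is $\mathbf X_r$-measurable. Asymptotic normality follows from the Lyapunov CLT under Assumption~\ref{ass:F}, with Slutsky handling the $o_p(\nu_k^{-1/2})$ remainder, since $\sigma_{\mathrm{SSL},k}\asymp\nu_k^{-1/2}$.

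For (v), the supervised estimator is a two-sample $U$-statistic with Hoeffding projections $E(H\mid O_{r})-\Delta=\phi_{Xr}+\phi_{Yr}$. Its variance is therefore $\sum_r n_{rk}^{-1}\{\Var\phi_{Xr}+\Var\phi_{Yr}\}$, using the zero covariance. Subtracting \eqref{eq:group-var} gives \eqref{eq:group-variance-gain}, which is nonnegative because $n_{rk}\le M_{rk}$.
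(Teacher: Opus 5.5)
Your parts (iv)--(v) follow the paper's argument: regrouping by original unit, the nesting coefficient $2/M_{rk}$, orthogonality from $E(\epsilon_k\mid\mathbf V_k)=0$, Lyapunov over original units, and supervised projections $\phi_{Xr,k}+\phi_{Yr,k}$. The remainder control in part (i), however, does not close as you state it.

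You assign order $(n_Lh)^{-1}w^{-1}$ to the ratio remainder and order $(n_Lhw)^{-1}$, the same quantity, to the fitting--outer overlap. These are one-sample-smoother orders, and they are never $o(\nu_k^{-1/2})$ inside \eqref{eq:admissible-rate-region}. There $\alpha>1/4$ and $\eta>1/(2\gamma_*)\ge 3/8$ (since $p_*\ge p_0\ge 4$), so $\sqrt{\nu_k}\,(n_Lh)^{-1}w^{-1}\asymp n_L^{\alpha+\eta-1/2}\to\infty$. Your justification ``since $\eta<1-3\alpha/2$'' yields only $\alpha+\eta<1-\alpha/2$. That inequality is what makes the derivative-level process error $\sqrt{\ell_n/(n_2n_1h^3)}$ small relative to $w$; it says nothing about the level remainder.

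The missing ingredient is the two-sample pair structure, which Lemmas~\ref{lem:b2-floor-expansion}--\ref{lem:b3-collision-canonical} encode and your accounting omits. The armwise first projections of $K_h(S_{ij}-s)$ are convolutions with the opposite arm's score density and carry no factor $h^{-1}$. Hence $\|\widehat q_{c,h}-q_{c,h}\|_\infty\vee\|\widehat d_h-d_h\|_\infty=O_p(a_n)$ with $a_n^2\asymp\ell_n/n_L$ and $a_n=o(w)$, which is what activates the linearisation event.

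Next, $|q_{c,h}|\le Cd_h$ reduces the above-floor remainder to $\delta^2d^{-2}$. Integrating $B^2/d^3$ instead costs an extra $w^{-1}$ and again fails in \eqref{eq:admissible-rate-region}. The inverse-moment and margin bounds of Lemma~\ref{lem:b1-low-density} then give $\mathbb Q_M|R^c_{h,w}|=O_p(a_n^2w^{\gamma-2})$, which is negligible because $\eta(2-\gamma)<1/2$.

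For the overlap, the single-shared configurations $L_{10},L_{01}$ carry an extra coefficient $M_r^{-1}$ with second moments $h^{-1}w^{-\kappa_r}$. The doubly shared $K_h(0)$ configuration carries $(M_2M_1)^{-1}$. Together these give $r^2_{4,h,w}=o(\nu^{-1})$ via $1-\alpha-\eta\kappa_r>0$ and $3-2\alpha-\eta\kappa_{12}>0$.

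Part (ii) is also left open, and here your route departs from the paper's. A mean-value argument in $\boldsymbol{\vartheta}$ needs the finite-$(h,w)$ population derivative of the hard-floor functional to be $o(1)$ uniformly on $\|\boldsymbol{\vartheta}\|\le Cb_n$. The identity $E\{E(H\mid S_{\boldsymbol{\vartheta}})\}=EH$ makes only the ideal derivative vanish, and uniform closeness of levels does not transfer to derivatives. The bound you sketch, $w^{\gamma_*-1}h^{-1}\asymp n_L^{\alpha-\eta/(p_*-1)}$, diverges because $\eta/(p_*-1)<1/6<\alpha$. The elided factors therefore carry the entire argument.

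The paper never differentiates through the floor:
\begin{itemize}
\item Lemma~\ref{lem:b8-local-hard-floor}(i) is a level bound, $\sup_{\boldsymbol{\vartheta}}|T^{H,c}_{h,w}(\boldsymbol{\vartheta})-EH|\le C(h^2+w^{\gamma_*})=o(\nu^{-1/2})$. So the population part of $\widehat T^{H,c}_{h,w}(\widehat{\boldsymbol{\vartheta}}_k,1)-\widehat T^{H,c}_{h,w}(0,1)$ is negligible without any derivative.
\item The stochastic part follows from the local-uniform expansion of Lemma~\ref{lem:b8-local-hard-floor}(ii) and the equicontinuity of Lemma~\ref{lem:b7-projection-stability}(ii)--(iii). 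Floor switches enter only through the crossing probability $Crw^{\gamma_*-1}$ under a square root.
\item The two score-scale brackets are handled by Lemma~\ref{lem:b10-hard-floor-transfer}(ii)--(iii) on localisation sets.
\end{itemize}

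A derivative route could work, since the stochastic derivative error is $o_p(1)$ precisely because $r_{1,n}/w\to0$. It would, however, need a separate uniform bound on the population derivative across the moving supports and the floor kink. Because your treatment of sample standardization in (i) defers to this same argument, it inherits the gap.
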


\begin{remark}[Interpretation of the four roles]
The $X$-projections describe fluctuation in the two full-covariate empirical
distributions, whereas the $Y$-projections describe labelled residual
fluctuation. Because each labelled unit contributes to both roles, variance
must be formed after regrouping contributions at the original-observation
level.
\end{remark}

To express the leading term in terms of independent original observations,
define, for $a\le n_{rk}$,
\[
 \eta_{rka,k}=M_{rk}^{-1}\phi_{Xr,k}(\mathbf{X}_{rka})
              +n_{rk}^{-1}\phi_{Yr,k}(O_{rka}),
\]
and for $a>n_{rk}$ set
$\eta_{rka,k}=M_{rk}^{-1}\phi_{Xr,k}(\mathbf{X}_{rka})$.
Let $L_{n,k}=\sum_{r=1}^2\sum_{a=1}^{M_{rk}}\eta_{rka,k}$.

\begin{remark}[Design overlap and labelled fractions]
The coefficient $2/M_{rk}$ in \eqref{eq:group-var} reflects the overlap of
the first $n_{rk}$ labelled units between the full-covariate and labelled
empirical averages. Additional covariates reduce the outer $\phi_X$ component
but not the labelled-residual $\phi_Y$ component, so the normalizing scale is
determined by original observations rather than by the $n_{1k}n_{2k}$ or
$M_{1k}M_{2k}$ pair counts.
\end{remark}

\subsection{\texorpdfstring{Fixed-$K$ Aggregation and Inference}{Fixed-K Aggregation and Inference}}\label{sec:aggregation}

For deterministic $a_{k,n}$ define
\[
 \Delta_{a,n}=\sum_{k=1}^Ka_{k,n}\Delta_k,\qquad
 \widehat\Delta_{a,n}^c=\sum_{k=1}^Ka_{k,n}\widehat\Delta_k^c,
\]
and
\[
 \sigma_{\mathrm{SSL},a,n}^2
 =\sum_{k=1}^Ka_{k,n}^2\sigma_{\mathrm{SSL},k}^2.
\]

Define the supervised aggregate variance by
\[
 \sigma_{\mathrm{SUP},a,n}^2
 =\sum_{k=1}^Ka_{k,n}^2\sigma_{\mathrm{SUP},k}^2.
\]

Here $K$ is fixed and the deterministic weights are uniformly bounded and not
all zero. For proportional unequal allocation, indexed by $m$, suppose that
for every arm and group,
\[
 \frac{n_{rk,m}}{m}\to a_{rk},\qquad
 \frac{M_{rk,m}}{m}\to b_{rk},\qquad 0<a_{rk},b_{rk}<\infty.
\]
Then $n_{L,k,m}$, $n_{\min,k,m}$, and $\nu_{k,m}$ are all of order $m$.

\begin{corollary}[Fixed-\(K\) joint and aggregate inference]\label{cor:aggregate-clt}
Under Assumptions~\ref{ass:A}--\ref{ass:F}:

\emph{(i)} the standardized group vector converges jointly to
$N_K(0,\mathbf{I}_K)$;

\emph{(ii)} For deterministic weights $a_{k,n}$, the aggregate estimator
inherits the group-level asymptotic linear representation:
\[
 \widehat\Delta_{a,n}^c-\Delta_{a,n}
 =\sum_{k=1}^Ka_{k,n}L_{n,k}
  +o_p(\sigma_{\mathrm{SSL},a,n}),
\]
Consequently,
\[
 \frac{\widehat\Delta_{a,n}^c-\Delta_{a,n}}
 {\sigma_{\mathrm{SSL},a,n}}
 \Rightarrow N(0,1);
\]

\emph{(iii)} Under the stated correct-specification condition, the groupwise
variance gains aggregate with the squared deterministic weights:
\[
 \sigma_{\mathrm{SUP},a,n}^2-\sigma_{\mathrm{SSL},a,n}^2
 =\sum_{k=1}^K a_{k,n}^2
 \left(\sigma_{\mathrm{SUP},k}^2-\sigma_{\mathrm{SSL},k}^2\right)\ge0.
\]
\end{corollary}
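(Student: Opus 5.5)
The corollary follows from Theorem~\ref{thm:oracle-adapt-practical}(iii)--(v), using independence across groups. Throughout, $K$ is fixed. The plan has four steps.

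\emph{Step 1: group leading terms.} For each $k$, the feasible ALR \eqref{eq:practical-alr} gives
\[
 \widehat\Delta_k^c-\Delta_k=L_{n,k}+o_p(\nu_k^{-1/2}).
\]
Here $L_{n,k}=\sum_{r}\sum_a\eta_{rka,k}$ is a sum of independent, mean-zero original-unit terms. Under the correct model, $\Cov(\phi_{Xr,k},\phi_{Yr,k})=0$, so $\Var(L_{n,k})=\sigma^2_{\mathrm{SSL},k}$. By Assumption~\ref{ass:F}, $\sigma^2_{\mathrm{SSL},k}\asymp\nu_k^{-1}$, so the remainder is $o_p(\sigma_{\mathrm{SSL},k})$.

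\emph{Step 2: joint convergence, part (i).} By Assumption~\ref{ass:A}, the groups are mutually independent and each $L_{n,k}$ depends only on group-$k$ data. Hence the vector $(L_{n,k}/\sigma_{\mathrm{SSL},k})_{k=1}^K$ has independent coordinates. Each coordinate is asymptotically $N(0,1)$ by the group Lyapunov CLT under the third-moment condition of Assumption~\ref{ass:F}, as in Theorem~\ref{thm:oracle-adapt-practical}(iv). I will apply Cramér--Wold. For fixed $\mathbf c\in\mathbb R^K$, the combination $\sum_k c_kL_{n,k}/\sigma_{\mathrm{SSL},k}$ is a sum of independent triangular-array summands with variance $\|\mathbf c\|^2$. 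Its Lyapunov ratio is bounded by the maximum over $k$ of the groupwise Lyapunov ratios, each of which tends to zero. Slutsky then absorbs the $o_p(1)$ remainders, giving $N_K(0,\mathbf I_K)$.

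\emph{Step 3: aggregate, part (ii).} Summing the group expansions gives
\[
 \widehat\Delta^c_{a,n}-\Delta_{a,n}=\sum_k a_{k,n}L_{n,k}+\sum_k a_{k,n}o_p(\sigma_{\mathrm{SSL},k}).
\]
The main obstacle is to show the remainder is $o_p(\sigma_{\mathrm{SSL},a,n})$ even when some weights vanish or are small. Termwise,
\[
 |a_{k,n}|\,\sigma_{\mathrm{SSL},k}\le\Bigl(\sum_j a_{j,n}^2\sigma^2_{\mathrm{SSL},j}\Bigr)^{1/2}=\sigma_{\mathrm{SSL},a,n},
\]
so each term $a_{k,n}o_p(\sigma_{\mathrm{SSL},k})$ is $o_p(\sigma_{\mathrm{SSL},a,n})$. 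Because $K$ is fixed, the finite sum is also $o_p(\sigma_{\mathrm{SSL},a,n})$; neither boundedness of the weights nor comparability of the $\nu_k$ is needed. For the normal limit, set
\[
 c_{k,n}=a_{k,n}\sigma_{\mathrm{SSL},k}/\sigma_{\mathrm{SSL},a,n},\qquad \sum_k c_{k,n}^2=1.
\]
These lie on a compact set, so I pass to subsequences along which $c_{k,n}\to c_k$. Along each subsequence, part (i) gives $\sum_k c_{k,n}L_{n,k}/\sigma_{\mathrm{SSL},k}\Rightarrow N(0,1)$. Since the limit does not depend on the subsequence, the full sequence converges, which proves the displayed CLT.

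\emph{Step 4: variance comparison, part (iii).} Part (iii) is algebraic. Substituting the definitions of $\sigma^2_{\mathrm{SUP},a,n}$ and $\sigma^2_{\mathrm{SSL},a,n}$ gives the stated difference termwise. Nonnegativity follows from \eqref{eq:group-variance-gain} with $a_{k,n}^2\ge0$. No cross-group covariances appear because the groups are independent, so both aggregate variances are diagonal sums.
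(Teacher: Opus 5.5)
Your proof is correct. For part (i), for the remainder bound in part (ii), and for part (iii), it coincides with the paper's proof. In particular, your inequality $|a_{k,n}|\sigma_{\mathrm{SSL},k}\le\sigma_{\mathrm{SSL},a,n}$ is exactly the paper's observation that $q_{k,n}=|a_{k,n}|\sigma_{\mathrm{SSL},k}/\sigma_{\mathrm{SSL},a,n}\le 1$.

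The one genuine difference is how you get the aggregate normal limit.
\begin{itemize}
\item The paper pools the weighted original-unit summands $a_{k,n}\eta_{rka,k}$ across groups and checks Lyapunov directly. This gives the ratio bound $C\sum_k q_{k,n}^3\underline n_k^{-1/2}\le C\max_{k:a_{k,n}\ne0}\underline n_k^{-1/2}$.
\item You instead deduce the limit from the joint convergence in (i). You place the normalized weights $c_{k,n}$ on the unit sphere and use a subsequence/continuous-mapping argument.
\end{itemize}
Your route reuses (i) and needs no new moment computation. The paper's route gives an explicit rate and avoids subsequences.

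Your own Step~2 already contains the paper's argument. There you show that the Lyapunov ratio of $\sum_k c_kL_{n,k}/\sigma_{\mathrm{SSL},k}$ is at most the maximum of the groupwise ratios, and this bound holds uniformly over $\mathbf c$ on the unit sphere. Applying it directly with $\mathbf c=(c_{k,n})_k$ is precisely the paper's computation, so the compactness step is unnecessary.

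One small point: you divide by $\sigma_{\mathrm{SSL},a,n}$, which implicitly requires the weights not to be all zero. The paper assumes this in the setup preceding the corollary.
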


The equal-group average and prespecified deterministic weighted targets are
special cases. The same conclusions hold under the proportional unequal-allocation
regime above, with the corresponding group-specific scales.

\section{Perturbation-Based Inference}\label{sec:perturbation}

This section develops feasible inference by perturbing original observations,
preserving labelled--full-covariate nesting, and fully refitting the estimator
rather than resampling cross-arm pairs. The construction relates to minimand
perturbation and exchangeably weighted bootstrap methods
\citep{jin2001,PraestgaardWellner1993,ChengHuang2010}, and to related
semi-supervised comparison procedures \citep{tan2025,zhangpeng2025}. The
perturbation proof follows the four-role structure: conditional product and
hard-floor expansions identify the perturbed first-order projections; the
profile and index bounds show that refitting contributes only at higher order;
and reuse of each labelled
multinomial count in fitting and the labelled outer component preserves the
labelled outer--residual covariance. Replacing feasible projections by their
population limits yields \eqref{eq:ch4-leading-term}, the Gaussian
approximation, and ideal variance consistency.

Throughout this section, $\mathcal D$ denotes the observed data, $E^*$,
$\Var^*$, and $\mathcal L^*(\cdot\mid\mathcal D)$ denote
expectation, variance, and law under the conditional perturbation distribution,
and $\delta_x$ denotes a unit point mass at $x$. The distance $d_{BL}$ is the
bounded-Lipschitz metric.

\begin{assumption}[Perturbation regularity]\label{ass:perturbation}
The VC-type/entropy, deleted-array, random-evaluation, and two-sample
canonical-process regularity conditions from Section~\ref{sec:theory} admit
conditional multinomial analogues uniformly over the local classes used in
Appendix~\ref{app:C}. Let $p_B$ denote the available conditional
moment/envelope exponent and assume
\[ 
p_B(1/2-\eta)>1.
\]
\end{assumption}

Assumption~\ref{ass:perturbation} is the conditional counterpart of the process
regularity used in Section~\ref{sec:theory}. It ensures that the derivative,
deletion, overlap, and floor-crossing controls remain valid under multinomial
weighting, while the condition $p_B(1/2-\eta)>1$ supplies the tail control
needed to make the conditional hard-floor remainder negligible at the root
scale.

The corresponding conditional bounds are stated in Appendix~\ref{app:C}.

Numerical minimization uses the same objective-gap or projected-score condition
and local-basin requirement as in Section~\ref{sec:theory}.

\subsection{Multinomial Perturbation and Full Refitting}\label{subsec:perturbation-motivation}

For group \(k\) and arm \(r\), generate independently
\[
 (W_{rk1}^{L*},\ldots,W_{rkn_{rk}}^{L*})
 \sim\operatorname{Multinomial}
 \left(n_{rk};n_{rk}^{-1},\ldots,n_{rk}^{-1}\right)
\]
for the labelled observations, and
\[
 (W_{rk1}^{U*},\ldots,W_{rkN_{rk}}^{U*})
 \sim\operatorname{Multinomial}
 \left(N_{rk};N_{rk}^{-1},\ldots,N_{rk}^{-1}\right)
\]
for the unlabelled covariates. These four arm--stratum vectors are
conditionally independent. Write
\(\xi_{rki}^{L*}=W_{rki}^{L*}-1\) and
\(\xi_{rkj}^{U*}=W_{rkj}^{U*}-1\).

The same labelled count \(W_{rki}^{L*}\) is used in both parts of the
replicate: (i) the labelled nuisance, profile, and direction fit, and
(ii) the labelled component of the full outer covariate empirical measure,
\[
 Q_{rk}^*=
 \frac{1}{M_{rk}}\sum_{i=1}^{n_{rk}}W_{rki}^{L*}\delta_{\mathbf{X}_{rki}}
 +\frac{1}{M_{rk}}\sum_{j=1}^{N_{rk}}W_{rkj}^{U*}\delta_{\mathbf{X}_{rkj}^U}.
\]
Resampling is therefore performed at the original-observation level rather than
at the cross-arm-pair level. Each replicate fully refits the score
standardization, profile/index estimation, hard-floor conditional fit, and full
outer average. The target, centred stabilization, and prescribed tuning
sequences are unchanged.

Let \(\widehat\Delta_k^{*(b)}\) denote the resulting full-refit estimate in
replicate \(b\), and define
\[
 \overline{\widehat\Delta_k^*}
 =\frac{1}{B}\sum_{b=1}^B\widehat\Delta_k^{*(b)},\qquad
 \widehat V_{k,B}
 =\frac{1}{B-1}\sum_{b=1}^B
 \left(\widehat\Delta_k^{*(b)}
       -\overline{\widehat\Delta_k^*}\right)^2.
\]
\subsection{Conditional Validity}\label{subsec:perturbation-variance}

Let \(\sigma_{k,n}^2\) be the leading sampling variance in
\eqref{eq:group-var}. Theorem~\ref{thm:ch4-validity} uses the conditions in
Section~\ref{sec:theory} and Assumption~\ref{ass:perturbation}. Conditional
weak convergence is stated in the bounded-Lipschitz metric.

\begin{theorem}[Multinomial perturbation validity and ideal variance consistency]\label{thm:ch4-validity}
For a fixed group \(k\), under the conditions of Section~\ref{sec:theory},
Assumption~\ref{ass:perturbation}, and the numerical-minimization condition of
Section~\ref{subsec:perturbation-motivation} when applicable, the following hold.
\emph{(i) Conditional asymptotic linear representation.}
Conditionally on the observed data, the full-refit perturbation admits
\[
 \widehat\Delta_k^*-\widehat\Delta_k
 =L_{k,n}^*+R_{k,n}^*,
 \qquad
 E^*\{\nu_k(R_{k,n}^*)^2\mid\mathcal D\}\ \xrightarrow{p}\ 0,
\]
where
\begin{align}
 L_{k,n}^*=\sum_{r=1}^2\Bigg[&
 \sum_{i=1}^{n_{rk}}\xi_{rki}^{L*}
 \left\{
 \frac{\phi_{Xr,k}(\mathbf{X}_{rki})}{M_{rk}}
 +\frac{\phi_{Yr,k}(O_{rki})}{n_{rk}}
 \right\}\notag\\
 &+\sum_{j=1}^{N_{rk}}\xi_{rkj}^{U*}
 \frac{\phi_{Xr,k}(\mathbf{X}_{rkj}^U)}{M_{rk}}
 \Bigg].\label{eq:ch4-leading-term}
\end{align}
\emph{(ii) Conditional Gaussian approximation.}
After normalization by the leading sampling standard deviation, its conditional law is asymptotically standard normal:
\[
 d_{BL}\!\left\{
 \mathcal L^*\!\left(
 \frac{\widehat\Delta_k^*-\widehat\Delta_k}{\sigma_{k,n}}
 \,\middle|\,\mathcal D\right),\,N(0,1)\right\}\xrightarrow{p}0.
\]
\emph{(iii) Conditional variance consistency.}
The ideal conditional perturbation variance consistently estimates the leading sampling variance:
\[
 V_{k,n}^*=\Var^*(\widehat\Delta_k^*\mid\mathcal D),
\]
\[
 \frac{V_{k,n}^*}{\sigma_{k,n}^2}\xrightarrow{p}1,
\qquad
 \sigma_{k,n}^2
 =\sum_{r=1}^2\left\{
 \frac{\Var(\phi_{Xr,k})}{M_{rk}}
 +\frac{\Var(\phi_{Yr,k})}{n_{rk}}
 +\frac{2\Cov(\phi_{Xr,k},\phi_{Yr,k})}{M_{rk}}
 \right\}.
\]
\end{theorem}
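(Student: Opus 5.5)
The proof mirrors the sampling argument of Section~\ref{sec:theory}, applied to the multinomially reweighted empirical measures. The plan has three stages: a perturbed oracle expansion, perturbed index adaptivity, and variance bookkeeping.

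\textbf{Perturbed oracle expansion.} Fix group $k$ and first work at the true direction $\boldsymbol{\beta}_{k,0}$. Write the perturbed numerator, denominator and centred numerator $\widehat q^*_{k,c},\widehat d^*_k$ as two-sample $V$-type statistics weighted by $W^{L*}_{2ki}W^{L*}_{1kj}$. Apply the exact centred ratio identity of Proposition~\ref{prop:oracle-decomp} with $(\widehat q^*,\widehat d^*)$ in place of $(\widehat q,\widehat d)$. Then expand the outer average over $Q^*_{2k}\otimes Q^*_{1k}$ by the two-arm Hoeffding decomposition. Subtracting the corresponding expansion for $\widehat\Delta_k$ gives leading terms that are linear in the centred counts $\xi^{L*},\xi^{U*}$. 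Their coefficients are the finite-$(h,w)$ projections $\phi_{Xr,k,h,w}$ and $\phi_{Yr,k,h,w}$ evaluated at the observed units.

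The key bookkeeping point is that the same $W^{L*}_{rki}$ multiplies both the labelled outer component of $Q^*_{rk}$ and the labelled fitting sums. This is why the first $n_{rk}$ units carry the combined coefficient $\phi_{Xr}/M_{rk}+\phi_{Yr}/n_{rk}$ in \eqref{eq:ch4-leading-term}. The remainders come in three kinds:
\begin{itemize}[nosep]
\item degenerate two-sample components, controlled by the conditional multinomial $U$-process and decoupling bounds of Assumption~\ref{ass:perturbation};
\item the quadratic ratio remainder;
\item the conditional hard-floor remainder $R^{ND,c*}$.
\end{itemize}
Each is bounded in conditional second moment. Finally, $\phi_{\cdot,h,w}$ is replaced by its limit $\phi_{\cdot}$ at a cost that is $o_p(1)$ in $\nu_k E^*(\cdot)^2$. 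This uses $E^*(\xi^2)\le 1$ together with $L^2$ convergence of the projections, which follows from the target-transfer bounds under Assumption~\ref{ass:D}.

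\textbf{Perturbed index adaptivity.} Next I would show $\widehat\Delta^*_k(\widehat{\boldsymbol{\beta}}^*_k)-\widehat\Delta^*_k(\boldsymbol{\beta}_{k,0})=o_{p^*}(\nu_k^{-1/2})$ in conditional $L^2$. First establish a conditional analogue of Proposition~\ref{prop:profile-consistency}: conditional consistency of the refit direction, plus $\widehat{\boldsymbol{\beta}}^*_k-\widehat{\boldsymbol{\beta}}_k=O_{p^*}(n_{L,k}^{-1/2})$ obtained from the weighted profile score. Then reuse the argument of Theorem~\ref{thm:oracle-adapt-practical}(ii). There, the derivative of the partial mean in the tangent coordinate $\boldsymbol{\vartheta}$ has population limit zero under the correct single-index model; this is the generated-index orthogonality. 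Its perturbed empirical version is $o_{p^*}(1)$ uniformly on shrinking neighbourhoods, so the first-order index effect vanishes. Refitting the score standardization is handled the same way, via the exact coordinate-change equivariance of Lemma~\ref{lem:b6-candidate-scale}. Parts (i) then follow by combining the two stages.

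\textbf{Gaussian limit, variance, and the main obstacle.} For (ii), $L^*_{k,n}$ is a sum of multinomial-weighted fixed coefficients. Conditional on $\mathcal D$, a Lindeberg CLT for exchangeable multinomial weights (as in \citet{PraestgaardWellner1993}) applies. Its conditional variance is
\[
\sum_r\Big[\sum_{i\le n_{rk}}(1-n_{rk}^{-1})c_{rki}^2-\ldots\Big],
\]
where $c_{rki}$ denotes the coefficient of $\xi^{L*}_{rki}$ and the omitted terms are the multinomial covariance corrections. This variance equals the empirical centred second moments of the coefficients. By the law of large numbers, together with the bounds of Assumption~\ref{ass:F}, it is asymptotic to $\sigma_{k,n}^2$. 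The cross term $2\Cov(\phi_{Xr},\phi_{Yr})/M_{rk}$ appears precisely because the labelled counts are shared between fitting and the outer measure. Since $E^*\{\nu_k(R^*)^2\}\to_p0$, the remainder is negligible, and bounded-Lipschitz convergence follows. For (iii), $V^*_{k,n}=\Var^*(L^*)+O\big(\sqrt{\Var^*(L^*)E^*(R^*)^2}+E^*(R^*)^2\big)$ by Cauchy--Schwarz, and dividing by $\sigma_{k,n}^2\asymp\nu_k^{-1}$ gives the ratio limit $1$.

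I expect the main obstacle to be the conditional hard-floor remainder. A unit drawn many times can locally inflate $\widehat d^*$ across the floor $w_k$, producing floor crossings that have no sampling analogue. I would first bound the crossing set by $\{|d_{k,h}-w_k|\lesssim\sup|\widehat d^*-d_{k,h}|\}$. The endpoint mass bound controls the evaluation-law measure of this set by $w_k^{\gamma}$. The heavy-count tails are controlled by Markov's inequality at order $p_B$, and the condition $p_B(1/2-\eta)>1$ is exactly what makes the resulting conditional contribution $o(\nu_k^{-1})$ in second moment.
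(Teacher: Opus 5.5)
Your outline is viable, but it is organised differently from the paper's proof.

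\textbf{How the two routes differ.} The paper linearises the perturbed statistic around the \emph{observed} fit $(\widehat q_c,\widehat d)$ at the observed direction $\widehat{\boldsymbol{\beta}}$. Concretely:
\begin{itemize}
\item the conditional two-arm product identity uses observed-centred kernels $f_{2,n},f_{1,n},f_n^\circ$, and floor crossings are controlled by the observed margin;
\item refitting enters through $G^*(\widehat{\boldsymbol{\beta}}^*)-G(\widehat{\boldsymbol{\beta}})=A+B+C$;
\item the resulting coefficients are the feasible projections $\widehat\phi_\rho$, which are moved to $\phi_\rho$ by a four-link replacement chain using Lemmas~\ref{lem:b6-candidate-scale} and~\ref{lem:b7-projection-stability}(i)--(iv).
\end{itemize}
You instead expand $\widehat\Delta_k^*(\boldsymbol{\beta}_{k,0})$ and $\widehat\Delta_k(\boldsymbol{\beta}_{k,0})$ around the population and subtract. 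You then bridge to $\widehat{\boldsymbol{\beta}}_k^*$, and reuse Theorem~\ref{thm:oracle-adapt-practical}(ii) on the sampling side.

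This buys you two things. Your coefficients are the deterministic $\phi_{\rho,h,w}$ at the true index, so replacement needs only Lemma~\ref{lem:b7-projection-stability}(i); note that this lemma rests on Assumption~\ref{ass:E}, not Assumption~\ref{ass:D}. You can also use the population margin of Lemma~\ref{lem:b1-low-density}(ii) directly. The cost is that your deviations mix sampling and perturbation noise, and you must control $\widehat{\boldsymbol{\beta}}_k^*-\boldsymbol{\beta}_{k,0}$; both are harmless at rate $n_{L,k}^{-1/2}$. Your floor-complement argument via $p_B(1/2-\eta)>1$ and your Lindeberg/variance bookkeeping match the paper's.

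\textbf{Two smaller corrections.}
\begin{itemize}
\item Index flatness is $E\{E(H\mid S_{\boldsymbol{\vartheta}})\}=EH$ for every $\boldsymbol{\vartheta}$. This holds by iterated expectations, not because of correct specification, and the paper uses it as a level bound (Lemma~\ref{lem:b8-local-hard-floor}(i)). A derivative version must be stated in integrated Lipschitz form, because the hard floor makes the fit non-differentiable at crossings.
\item Your remainder list omits the mixed fitting--outer terms. With a shared labelled count these contain $(\xi^{L*}_{rki})^2$, whose conditional mean $1-n_{rk}^{-1}$ is nonzero. So they are not degenerate $U$-process terms: they need the overlap envelopes $L_{10},L_{01},L_{11}$ of Lemma~\ref{lem:b3-collision-canonical}, which is where $1-\alpha-\eta\kappa_r>0$ and $3-2\alpha-\eta\kappa_{12}>0$ are used.
\end{itemize}

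\textbf{The gap: conditional second moments of the refit and rescaling increments.} You assert that $\widehat\Delta_k^*(\widehat{\boldsymbol{\beta}}_k^*)-\widehat\Delta_k^*(\boldsymbol{\beta}_{k,0})$ is negligible in conditional $L^2$. However, you argue only via $O_{p^*}(n_{L,k}^{-1/2})$ localisation and bounds uniform over $\|\boldsymbol{\vartheta}\|\le Cn_{L,k}^{-1/2}$.

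For fixed $C$, the refit leaves that ball with a conditional probability that does not vanish. On that event you have only the trivial $O(1)$ bound from boundedness of the hard-floor fit. An $O(1)$ discrepancy on an event of conditional probability $p_n$ can shift $V^*_{k,n}$ by order $p_n$. This is negligible against $\sigma_{k,n}^2\asymp\nu_k^{-1}$ only if $p_n=o_p(\nu_k^{-1})$. The same problem arises if you localise the perturbed scale in $\mathcal A_n(C)$.

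Consequently, what you have proved is convergence in conditional probability. That suffices for (ii), but not for (iii) or for the $E^*\{\nu_k(R^*_{k,n})^2\mid\mathcal D\}$ form of (i).

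\textbf{How to close it.} You need a moment version of the localisation, for instance:
\begin{itemize}
\item an increment bound valid on a fixed neighbourhood $\Theta_\delta$, not only the shrinking ball;
\item conditional moment bounds for $\sqrt{n_{L,k}}\|\widehat{\boldsymbol{\vartheta}}^*-\widehat{\boldsymbol{\vartheta}}\|$;
\item $\nu_kP^*(\widehat{\boldsymbol{\vartheta}}^*\notin\Theta_\delta\mid\mathcal D)\xrightarrow{p}0$.
\end{itemize}
The last item is the same polynomial-tail device you already use for the floor. The paper takes these from the conditional moment conditions of Assumption~\ref{ass:perturbation}, stating $E^*\{n_L\|r^*_{\boldsymbol{\beta},n}\|^2\mid\mathcal D\}\xrightarrow{p}0$ and $E^*[\nu\{Q_M^*(B+C)\}^2\mid\mathcal D]\xrightarrow{p}0$. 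For the scale, it expands $\widehat\sigma_s^*-\widehat\sigma_s$ with $o_{P^*,2}$ remainders rather than localising in probability.
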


\begin{remark}[Why observation-level full refitting is required]
Theorem~\ref{thm:ch4-validity} shows that perturbation validity depends on reproducing the original-observation dependence structure, not merely on resampling the $n_{2k}n_{1k}$ labelled pairs. Because a labelled unit enters both conditional fitting and the outer covariate empirical measure, its perturbation count must be reused in both roles; this reuse generates the covariance term in \(\sigma_{k,n}^2\). Full refitting additionally propagates score standardization and index estimation through each replicate. Under the correct-model orthogonality of Section~\ref{sec:theory}, the overlap covariance vanishes asymptotically, but preserving it in the resampling construction is necessary before that simplification is invoked.
\end{remark}

\subsection{Finite-Replicate Variance Estimation and Wald Inference}\label{subsec:perturbation-validity}

Theorem~\ref{thm:ch4-validity} concerns the ideal conditional perturbation
variance. Consistency of its finite-\(B_n\) estimate additionally requires the
normalized fourth-moment condition below \citep{Cheng2015}.

\begin{corollary}[Finite-\(B_n\) variance and Wald inference]\label{cor:ch4-finiteB-wald}
Suppose the conditions of Theorem~\ref{thm:ch4-validity} hold. If
\(B_n\to\infty\) and, with
\[
 Z_{k,n}^*=\widehat\Delta_k^*-E^*(\widehat\Delta_k^*\mid\mathcal D),
 \qquad
 V_{k,n}^*=E^*\{(Z_{k,n}^*)^2\mid\mathcal D\},
\]
the conditional fourth-moment condition
\[
 \frac{E^*\{(Z_{k,n}^*)^4\mid\mathcal D\}}{(V_{k,n}^*)^2}=O_p(1)
\]
holds, then
the finite-replicate variance consistently estimates both the ideal perturbation variance and the leading sampling variance:
\[
 \frac{\widehat V_{k,B_n}}{V_{k,n}^*}\xrightarrow{p}1,
 \qquad
 \frac{\widehat V_{k,B_n}}{\sigma_{k,n}^2}\xrightarrow{p}1.
\]
The same conclusion holds in the iterated limit that first sends
\(n\to\infty\) and then \(B\to\infty\). Hence, with
\(\widehat{SE}_{k,B_n}=\sqrt{\widehat V_{k,B_n}}\),
\[
 CI_{k,1-\alpha}=
 \left[
 \widehat\Delta_k-z_{1-\alpha/2}\widehat{SE}_{k,B_n},\
 \widehat\Delta_k+z_{1-\alpha/2}\widehat{SE}_{k,B_n}
 \right]
\]
satisfies
\[
 P\{\Delta_k\in CI_{k,1-\alpha}\}\longrightarrow1-\alpha.
\]
\end{corollary}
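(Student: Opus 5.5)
The plan has three layers. First, a conditional weak law for the empirical variance of the replicates. Then transfer to the sampling variance via Theorem~\ref{thm:ch4-validity}(iii). Finally, Slutsky combined with the unconditional CLT of Theorem~\ref{thm:oracle-adapt-practical}(iv).

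\textbf{Step 1: $\widehat V_{k,B_n}/V_{k,n}^*\to1$.} Conditionally on $\mathcal D$, the replicates $\widehat\Delta_k^{*(b)}$, $b=1,\dots,B_n$, are iid copies of $\widehat\Delta_k^*$. Write $\widetilde Z_b=Z_{k,n}^{*(b)}/(V_{k,n}^*)^{1/2}$. These are conditionally iid with mean zero and unit variance, and we have the algebraic identity
\[
 \frac{\widehat V_{k,B_n}}{V_{k,n}^*}
 =\frac{B_n}{B_n-1}\Big\{\frac1{B_n}\sum_{b=1}^{B_n}\widetilde Z_b^2-\Big(\frac1{B_n}\sum_{b=1}^{B_n}\widetilde Z_b\Big)^2\Big\}.
\]
For the squared-mean term, Chebyshev gives $E^*\{(B_n^{-1}\sum\widetilde Z_b)^2\mid\mathcal D\}=B_n^{-1}$. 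For the mean of squares, the conditional Chebyshev bound is
\[
 P^*\Big(\Big|B_n^{-1}\sum_b\widetilde Z_b^2-1\Big|>\varepsilon\ \Big|\ \mathcal D\Big)\le \frac{E^*(\widetilde Z^4\mid\mathcal D)}{B_n\varepsilon^2}.
\]
By the fourth-moment hypothesis, the numerator is $O_p(1)$, so the bound is $O_p(B_n^{-1})$. We need $V_{k,n}^*>0$ with probability tending to one. This follows from Theorem~\ref{thm:ch4-validity}(iii) and Assumption~\ref{ass:F}, since $\sigma_{k,n}^2\asymp\nu_k^{-1}>0$. The step from the conditional statement to the unconditional one is the standard argument: a conditional probability that is $o_p(1)$ has unconditional probability $o(1)$, by bounded convergence applied on the event where the $O_p(1)$ bound holds with constant $C$, letting $C\to\infty$.

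For the iterated limit, first send $n\to\infty$. Along a subsequence, the conditional law of $\widetilde Z$ is close to $N(0,1)$ by Theorem~\ref{thm:ch4-validity}(ii) together with (iii). Uniform integrability of $\widetilde Z^2$, supplied by the fourth-moment bound, then lets the same Chebyshev bound apply as $B\to\infty$.

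\textbf{Step 2: transfer to $\sigma_{k,n}^2$.} Multiply the two ratios:
\[
 \frac{\widehat V_{k,B_n}}{\sigma_{k,n}^2}=\frac{\widehat V_{k,B_n}}{V_{k,n}^*}\cdot\frac{V^*_{k,n}}{\sigma_{k,n}^2}.
\]
The first factor tends to one by Step 1 and the second by Theorem~\ref{thm:ch4-validity}(iii), so the product tends to one in probability.

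\textbf{Step 3: coverage.} Theorem~\ref{thm:oracle-adapt-practical}(iv) gives $(\widehat\Delta_k-\Delta_k)/\sigma_{k,n}\Rightarrow N(0,1)$. Here $\sigma_{k,n}=\sigma_{\mathrm{SSL},k}$, because Theorem~\ref{thm:ch4-validity} uses the leading variance \eqref{eq:group-var} and the covariance term vanishes under the correct model. Step 2 gives $\widehat{SE}_{k,B_n}/\sigma_{k,n}\to1$ in probability. Slutsky's lemma then yields
\[
 \frac{\widehat\Delta_k-\Delta_k}{\widehat{SE}_{k,B_n}}\Rightarrow N(0,1).
\]
Because the normal distribution function is continuous, the coverage $P\{|\cdot|\le z_{1-\alpha/2}\}\to1-\alpha$.

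\textbf{Expected obstacle.} The only delicate point is Step 1. The moments there are random, since they are conditional on $\mathcal D$, and only stochastically bounded. The care needed is in the conditional-to-unconditional passage via truncation on the events $\{E^*\widetilde Z^4\le C\}$ and $\{V^*_{k,n}>0\}$, whose probabilities tend to one. I would handle this with the truncation argument rather than invoke any almost-sure statement.
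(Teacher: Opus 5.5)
Your proposal is correct and follows essentially the paper's route: a conditional Chebyshev bound on the replicate variance driven by the fourth-moment ratio, transfer to $\sigma_{k,n}^2$ through Theorem~\ref{thm:ch4-validity}(iii), and Slutsky with the sampling CLT of Theorem~\ref{thm:oracle-adapt-practical}(iv). Three small differences: the paper bounds $\Var^*(\widehat V_{B}\mid\mathcal D)\le B^{-1}E^*\{(Z_n^*)^4\mid\mathcal D\}$ in one step rather than splitting into mean-of-squares and squared-mean terms; it does not spell out the conditional-to-unconditional truncation or the iterated limit, whereas you do, and your Chebyshev bound, being uniform in $B$, already yields the iterated limit without the appeal to conditional normality; and $\sigma_{k,n}=\sigma_{\mathrm{SSL},k}$ holds by definition from \eqref{eq:group-var}, with or without the vanishing covariance.
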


\subsection{\texorpdfstring{Fixed-$K$ Aggregate Inference}{Fixed-K Aggregate Inference}}\label{subsec:perturbation-grouped}

For the deterministic fixed-\(K\) weights of Section~\ref{sec:aggregation},
use the aggregate estimator defined there and set
\[
 \widehat V_{a,B_n}
 =\sum_{k=1}^K a_{k,n}^2\widehat V_{k,B_n}.
\]

\begin{corollary}[Fixed-\(K\) aggregate perturbation inference]\label{cor:ch4-aggregation}
Under independent groups and the conditions of
Corollary~\ref{cor:ch4-finiteB-wald} for each group,
the weighted perturbation variance is consistent for the aggregate sampling variance:
\[
\frac{\widehat V_{a,B_n}}{\sigma_{\mathrm{SSL},a,n}^2}
\xrightarrow{p}1.
\]
Consequently, the corresponding aggregate Wald interval, with \(\widehat{SE}_{a,B_n}=\sqrt{\widehat V_{a,B_n}}\), is
\[
 CI_{a,1-\alpha}=
 \left[
 \widehat\Delta_{a,n}^c-z_{1-\alpha/2}\widehat{SE}_{a,B_n},\
 \widehat\Delta_{a,n}^c+z_{1-\alpha/2}\widehat{SE}_{a,B_n}
 \right].
\]
It has asymptotic coverage \(1-\alpha\):
\[
 P\{\Delta_{a,n}\in CI_{a,1-\alpha}\}\longrightarrow1-\alpha.
\]
\end{corollary}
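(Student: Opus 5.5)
The plan is to reduce the aggregate statement to the groupwise results already established, using independence across the fixed $K$ groups. For each $k$, Corollary~\ref{cor:ch4-finiteB-wald} gives $\widehat V_{k,B_n}/\sigma_{k,n}^2\xrightarrow{p}1$, where $\sigma_{k,n}^2=\sigma_{\mathrm{SSL},k}^2$ is the leading sampling variance in \eqref{eq:group-var}. I would write
\[
 \frac{\widehat V_{a,B_n}}{\sigma_{\mathrm{SSL},a,n}^2}-1
 =\sum_{k=1}^K \omega_{k,n}\left(\frac{\widehat V_{k,B_n}}{\sigma_{\mathrm{SSL},k}^2}-1\right),
 \qquad
 \omega_{k,n}=\frac{a_{k,n}^2\sigma_{\mathrm{SSL},k}^2}{\sum_{l=1}^K a_{l,n}^2\sigma_{\mathrm{SSL},l}^2}.
\]
The weights $\omega_{k,n}$ are nonnegative and sum to one. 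Each bracket is $o_p(1)$ and $K$ is fixed, so the sum is $o_p(1)$. This convex-combination device matters because groups may have different scales $\nu_k$, and some $a_{k,n}$ may vanish. Writing the ratio this way avoids requiring that every group contributes at a comparable order.

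For the Wald interval I would invoke Corollary~\ref{cor:aggregate-clt}(ii), which gives $(\widehat\Delta_{a,n}^c-\Delta_{a,n})/\sigma_{\mathrm{SSL},a,n}\Rightarrow N(0,1)$. Slutsky's lemma then allows replacing $\sigma_{\mathrm{SSL},a,n}$ by $\widehat{SE}_{a,B_n}$, since $\widehat{SE}_{a,B_n}/\sigma_{\mathrm{SSL},a,n}\xrightarrow{p}1$ by the continuous mapping theorem. It follows that
\[
 P\{\Delta_{a,n}\in CI_{a,1-\alpha}\}
 =P\left\{\left|\frac{\widehat\Delta_{a,n}^c-\Delta_{a,n}}{\widehat{SE}_{a,B_n}}\right|\le z_{1-\alpha/2}\right\}\to 1-\alpha,
\]
using continuity of the normal distribution function at $\pm z_{1-\alpha/2}$.

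The main point needing care is that the perturbation replicates are generated independently across groups. $\widehat V_{a,B_n}$ is the weighted sum of separate groupwise variances, not the variance of a jointly perturbed aggregate, so cross-group perturbation covariances need no control. On the sampling side, cross-group covariance vanishes because groups are mutually independent (Assumption~\ref{ass:A}). This is what makes $\sigma_{\mathrm{SSL},a,n}^2=\sum_k a_{k,n}^2\sigma_{\mathrm{SSL},k}^2$ the correct target. I would also check that Corollary~\ref{cor:aggregate-clt} applies under the hypotheses here. The conditions of Corollary~\ref{cor:ch4-finiteB-wald} include those of Section~\ref{sec:theory}, namely Assumptions~\ref{ass:A}--\ref{ass:F}, so it does. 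The weights are deterministic, bounded and not all zero, so $\sigma_{\mathrm{SSL},a,n}>0$ and the normalization is well defined. Finally, since Corollary~\ref{cor:ch4-finiteB-wald} also covers the iterated limit $n\to\infty$ then $B\to\infty$, the same argument applies groupwise in that regime.
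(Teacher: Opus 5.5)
Your proposal is correct and follows essentially the same route as the paper: apply Corollary~\ref{cor:ch4-finiteB-wald} in each group, use cross-group independence so that $\sigma_{\mathrm{SSL},a,n}^2=\sum_{k}a_{k,n}^2\sigma_{\mathrm{SSL},k}^2$, and combine the fixed-$K$ aggregate CLT with Slutsky's lemma to obtain Wald coverage. Your convex-combination rewriting of $\widehat V_{a,B_n}/\sigma_{\mathrm{SSL},a,n}^2-1$ makes explicit the ratio step that the paper's very short proof leaves implicit, and it correctly handles unequal group scales and vanishing weights.
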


The equal-group average and any prespecified deterministic weighted target are
obtained as special cases.

\section{Simulation Studies}\label{sec:simulation}
\subsection{Simulation Designs}\label{subsec:simulation-system}
We considered three groups and two independent arms within each group. For each arm and group, the four covariates were generated independently from the coordinatewise truncated-normal law $\mathrm{TN}(0,1;[-2.5,2.5])$; the first two coordinates were active and the last two were noise. With
\[
 \boldsymbol{\gamma}_1=2^{-1/2}(1,-1,0,0)^\top,\qquad
 \boldsymbol{\gamma}_2=2^{-1/2}(1,1,0,0)^\top,
\]
labelled outcomes followed
\[
 Y_{1k}=\boldsymbol{\gamma}_1^\top \mathbf{X}_{1k}+\varepsilon_{1k},\qquad
 Y_{2k}=\delta_k+\boldsymbol{\gamma}_2^\top \mathbf{X}_{2k}+\varepsilon_{2k},
\]
where the arm errors were independent standard Gaussian variables. Unlabelled units contributed covariates only.

Unless stated otherwise, $n$ and $N$ are the labelled and unlabelled sample sizes per arm and group. The regular probability design is the reference setting for isolating the role of unlabelled covariates. We keep the labelled size fixed and vary $N$, so the changes in RMSE and MSE reduction describe the contribution of the outer covariate-distribution average. It uses $\Delta_k=P(Y_{2k}>Y_{1k})$, with $\delta_k=0$, $n=100$, and $N\in\{100,200,500\}$; hence $N/n=1,2,$ and $5$, and $\Delta_k=0.5$.

The heterogeneous (HET) design changes the group-specific comparison probabilities while retaining the same estimation problem. It therefore examines whether the grouped procedure preserves distinct group targets before aggregation, rather than only performing well when all group truths coincide. We set $(\delta_1,\delta_2,\delta_3)=(-d_H,0,d_H)$, with $d_H\approx0.7553$ calibrated to yield probability truths $(0.35,0.50,0.65)$.

The CLIP design changes the bounded comparison response while keeping the regular probability data-generating process. It targets $E\{\operatorname{clip}(Y_{2k}-Y_{1k},-2,2)\}$, with clipping limits $(-2,2)$ and reference $c_H=0$, and thus assesses the estimator beyond the probabilistic comparison target. The unequal-allocation (UAL) design keeps this data-generating process but varies the information available by group, using labelled groupwise sample sizes $(60,100,140)$ and unlabelled groupwise sample sizes $(120,200,280)$ per arm for G1--G3.

With pair order $(\mathbf{X}_2^\top,\mathbf{X}_1^\top)^\top$, the corresponding true direction is
\[
 \boldsymbol{\beta}_0=(\boldsymbol{\gamma}_2^\top,-\boldsymbol{\gamma}_1^\top)^\top/\sqrt{2}
 =(1/2,1/2,0,0,-1/2,1/2,0,0)^\top.
\]
Here the true-index and conservative endpoint orders are $p_0=4$ and $p_*=8$. For the numerical studies, we use the admissible interior choice
\[
 (\alpha,\eta)=\left(\frac13,\frac{15}{32}\right),
\]
with
\[
 \bar h_k=n_{\min,k}^{-1/3},\qquad
 \bar w_k=K(0)n_{L,k}^{-15/32},\qquad
 K(0)=(2\pi)^{-1/2}.
\]
The numerical implementation uses the $C^3$ connector
\[
 r(x)=
 \begin{cases}
 1,&x\le1/2,\\
 p_7\{(x-1/2)/(3/2)\},&1/2<x<2,\\
 x,&x\ge2,
 \end{cases}
 \qquad
 p_7(u)=1+\frac{25}{2}u^4-\frac{51}{2}u^5+19u^6-5u^7.
\]
Appendix~\ref{subsec:app-rate-conditions} verifies admissibility for this geometry.

We report G1--G3, $\mathrm{Ave}=(\Delta_1+\Delta_2+\Delta_3)/3$, and $W=0.2\Delta_1+0.3\Delta_2+0.5\Delta_3$. SUP and SSL denote the supervised and semi-supervised estimators, respectively. Across $R=500$ independent Monte Carlo replications, we report SSL bias, empirical standard deviation (EmpSD), and RMSE, together with $\mathrm{RE}=\mathrm{RMSE}_{\mathrm{SUP}}/\mathrm{RMSE}_{\mathrm{SSL}}$ and
\[
 \text{MSE reduction (\%)}=100\left(1-\frac{\mathrm{MSE}_{\mathrm{SSL}}}{\mathrm{MSE}_{\mathrm{SUP}}}\right).
\]
Bias is the signed Monte Carlo mean error, EmpSD measures sampling variation across replications, and RMSE combines the two. Thus $\mathrm{RE}>1$ and a positive MSE reduction favour SSL relative to SUP. The aggregate rows are evaluated at their prespecified targets, so their performance reflects both groupwise estimation and the chosen aggregation rule.
\subsection{Point-Estimation Performance}\label{subsec:simulation-estimation}
The regular probability design isolates the role of additional covariate information. SSL bias remained small, and SSL RMSE was lower than SUP RMSE for every reported group and aggregate target. As $N/n$ increased from $1$ to $5$, MSE reductions rose from approximately $17$--$20\%$ to $28$--$36\%$ (Figure~\ref{fig:ch5-mse-reduction}); Table~\ref{tab:ch5-regular} gives the exact values. This pattern is the empirical counterpart of the variance decomposition: additional unlabelled covariates refine the outer covariate-distribution component, while labelled-sample uncertainty remains.
\begin{figure}[H]
\centering
\includegraphics[width=.97\linewidth]{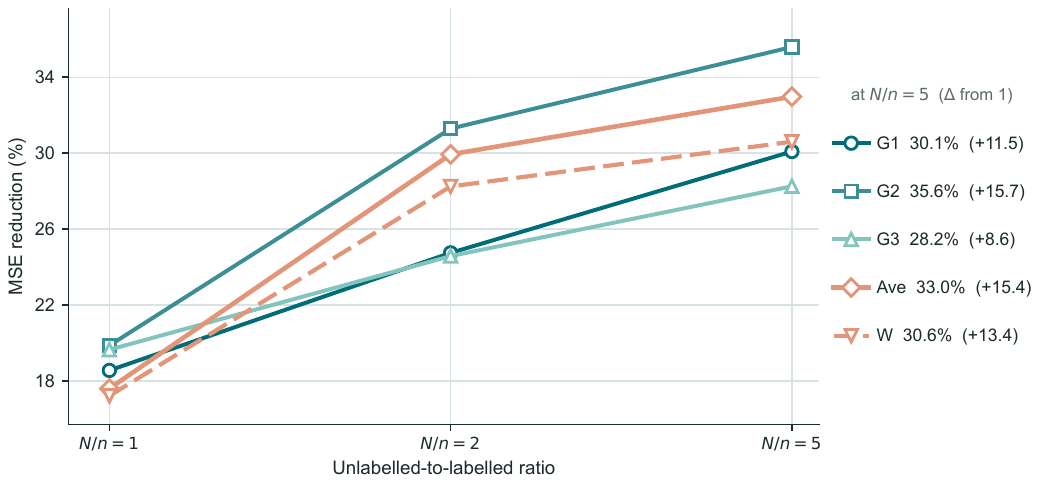}
\caption{MSE reduction of SSL relative to SUP as the unlabelled-to-labelled ratio $N/n$ increases in the regular probability setting ($n=100$ per arm and group). Endpoint labels give the reduction at $N/n=5$, with the increase from $N/n=1$ in parentheses.}
\label{fig:ch5-mse-reduction}
\end{figure}
Table~\ref{tab:ch5-regular} gives the numerical counterpart of Figure~\ref{fig:ch5-mse-reduction}. The small biases and the close alignment of EmpSD and RMSE show that the reported improvements are primarily precision differences rather than corrections of a large systematic error. The aggregate rows have smaller sampling variation than the individual group rows, while the distinction between Ave and W records the scientific choice of how group targets are combined.

The progression across ratios also shows how the gain accumulates. For every reported target, the increase in MSE reduction from $N/n=2$ to $N/n=5$ is smaller than that from $N/n=1$ to $N/n=2$, despite the larger absolute increase in $N$. This attenuation is consistent with increasingly precise estimation of the outer covariate distribution: once that component has been reduced, uncertainty from learning the conditional comparison with labelled outcomes remains.

The regular design uses the same data-generating mechanism and sample sizes for G1--G3. Their curves show a common monotone improvement, with modest differences across the Monte Carlo runs. Ave has slightly smaller RMSE than W, consistent with equal weighting of three independent, equally informative groups. W places half of its weight on G3 and consequently gives that group's sampling variation a larger contribution to the aggregate variance. The difference between Ave and W thus reflects their prespecified scientific weights.
\begin{table}[H]
\centering
\caption{Point-estimation performance under the regular probability setting at three unlabelled-to-labelled ratios.}
\label{tab:ch5-regular}
\small
\setlength{\tabcolsep}{5.2pt}
\begin{tabular}{@{}lrrrrr@{}}
\toprule
Unit & Bias & EmpSD & RMSE & RE & MSE reduction (\%) \\
\midrule
\multicolumn{6}{l}{\textit{Panel A: }$N/n=1$} \\
G1 & -0.0017 & 0.0377 & 0.0377 & 1.108 & 18.6 \\
G2 & -0.0007 & 0.0364 & 0.0363 & 1.117 & 19.9 \\
G3 & 0.0012 & 0.0366 & 0.0365 & 1.116 & 19.6 \\
Ave & -0.0004 & 0.0216 & 0.0216 & 1.102 & 17.6 \\
W & 0.0000 & 0.0231 & 0.0231 & 1.099 & 17.2 \\
\addlinespace[2pt]
\multicolumn{6}{l}{\textit{Panel B: }$N/n=2$} \\
G1 & 0.0015 & 0.0373 & 0.0373 & 1.153 & 24.7 \\
G2 & 0.0012 & 0.0338 & 0.0338 & 1.206 & 31.3 \\
G3 & 0.0009 & 0.0344 & 0.0344 & 1.151 & 24.6 \\
Ave & 0.0012 & 0.0205 & 0.0205 & 1.195 & 29.9 \\
W & 0.0011 & 0.0215 & 0.0215 & 1.180 & 28.2 \\
\addlinespace[2pt]
\multicolumn{6}{l}{\textit{Panel C: }$N/n=5$} \\
G1 & -0.0008 & 0.0341 & 0.0341 & 1.196 & 30.1 \\
G2 & 0.0023 & 0.0327 & 0.0328 & 1.246 & 35.6 \\
G3 & 0.0008 & 0.0345 & 0.0345 & 1.181 & 28.2 \\
Ave & 0.0008 & 0.0187 & 0.0187 & 1.221 & 33.0 \\
W & 0.0009 & 0.0202 & 0.0202 & 1.200 & 30.6 \\
\bottomrule
\end{tabular}
\vspace{0.35em}
\begin{minipage}{0.97\linewidth}
\footnotesize\textit{Note.} Results use $R=500$ Monte Carlo replications, with $n=100$ labelled observations per arm and group. Bias, EmpSD, and RMSE refer to SSL; RE $=\mathrm{RMSE}_{\mathrm{SUP}}/\mathrm{RMSE}_{\mathrm{SSL}}$. Ave uses equal group weights and W uses $(0.2,0.3,0.5)$.
\end{minipage}

\end{table}

The three nonbaseline settings examine different departures from the reference design. Figure~\ref{fig:ch5-nonbaseline} displays their common pattern, and Table~\ref{tab:ch5-nonbaseline} reports the corresponding numerical results under common accuracy and precision measures.

In HET, the group truths differ by construction, so the signed-error panels assess groupwise accuracy while the RMSE panels assess precision around different targets. The SSL errors remain small and SSL RMSE is lower than SUP RMSE across the group-specific and aggregate rows. The result shows that the precision gain does not depend on the groups sharing a common comparison probability.

CLIP evaluates a bounded contrast on a different numerical scale from a probability. Its RMSE values should therefore be compared between SUP and SSL within the CLIP setting rather than across settings. Within that comparison, the SSL estimator again has smaller RMSE for all reported targets, showing that the gain extends beyond the probabilistic comparison functional.

UAL makes the information imbalance explicit: G1, G2, and G3 receive $(n,N)=(60,120)$, $(100,200)$, and $(140,280)$, respectively. The groupwise precision follows these different information levels, while the Ave and W rows continue to estimate their prespecified aggregate targets. In particular, W is determined by the scientific weights $(0.2,0.3,0.5)$ rather than by the observed group sample sizes.
\begin{figure}[!t]
\centering
\includegraphics[width=.98\linewidth]{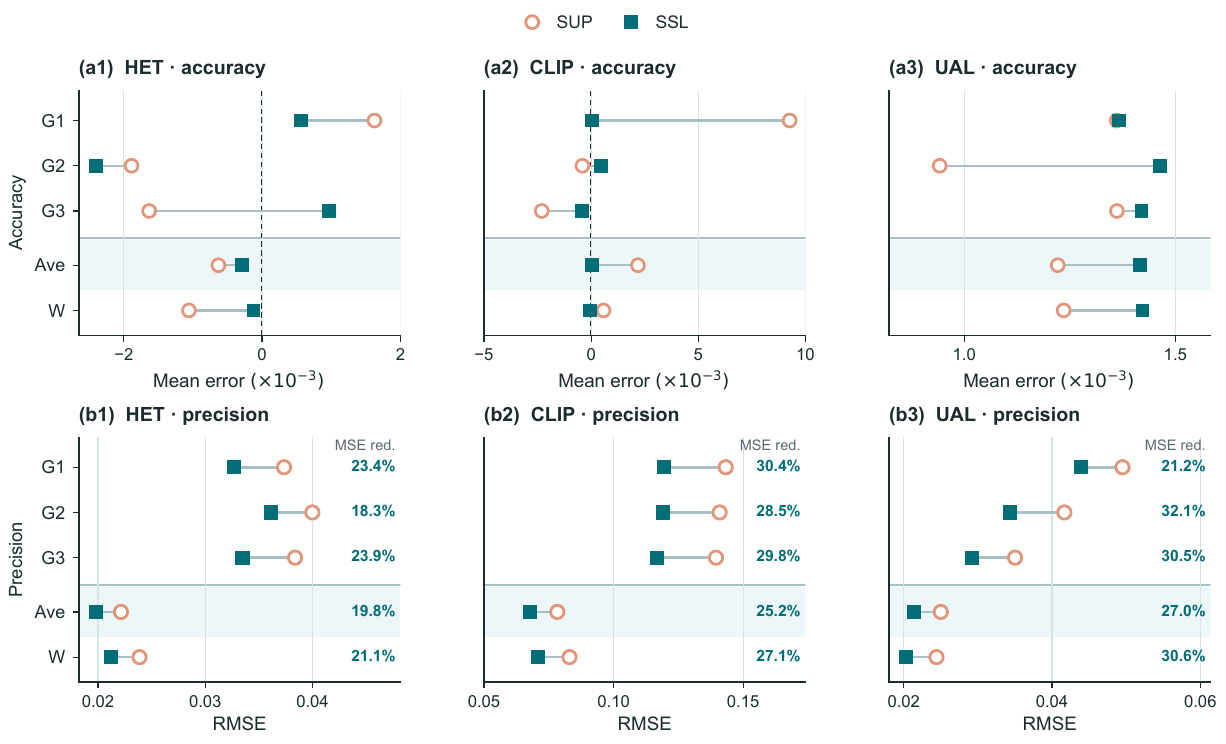}
\caption{Accuracy and precision of SUP and SSL under the heterogeneous probability (HET), clipped-contrast (CLIP), and unequal-allocation (UAL) settings. The top row shows signed Monte Carlo mean errors, and the bottom row compares RMSE; row-end labels report the corresponding SSL MSE reduction relative to SUP.}
\label{fig:ch5-nonbaseline}
\end{figure}
Table~\ref{tab:ch5-nonbaseline} reports the same bias, EmpSD, RMSE, and relative-efficiency summaries used in the regular design. Across its three panels, SSL combines small signed errors with lower RMSE for the group-specific and aggregate targets. The magnitude of the gain varies with target scale and group allocation, as reflected in the setting-specific reductions reported below.

In HET, the groupwise MSE reductions range from $18.3\%$ to $23.9\%$ across truths spanning $0.35$ to $0.65$. The improvement therefore occurs both at the central comparison probability and at the two asymmetric group targets. In CLIP, the corresponding reductions range from $28.5\%$ to $30.4\%$. Within each functional, these percentages express the precision gain relative to its supervised benchmark. Together with the small signed errors, they show that the gains extend across the distinct target values and response scales represented in these designs.

UAL links groupwise precision to the available information. G1, with the smallest labelled sample, has the largest SSL RMSE ($0.0439$), whereas G3, with the largest sample, has the smallest ($0.0292$). Under the prespecified weights, W assigns less weight to G1 and more to G3 than Ave, yielding RMSEs of $0.0203$ and $0.0214$, respectively. Here the scientific weights favour the more precisely estimated groups, reversing the Ave/W ordering in the regular design. The comparison illustrates how allocation and aggregation weights jointly determine aggregate precision.
\begin{table}[H]
\centering
\caption{Point-estimation performance across the heterogeneous probability (HET), clipped-contrast (CLIP), and unequal-allocation (UAL) settings.}
\label{tab:ch5-nonbaseline}
\small
\setlength{\tabcolsep}{3.4pt}
\begin{tabular}{@{}lrrrrrrrr@{}}
\toprule
Unit & Truth & $n/N$ & Bias & EmpSD & RMSE & RE & MSE reduction (\%) \\
\midrule
\multicolumn{8}{l}{\textit{Panel A: HET} ($n=100$, $N=200$)} \\
G1 & 0.350 & 100/200 & 0.0006 & 0.0327 & 0.0327 & 1.143 & 23.4 \\
G2 & 0.500 & 100/200 & -0.0024 & 0.0361 & 0.0361 & 1.106 & 18.3 \\
G3 & 0.650 & 100/200 & 0.0010 & 0.0335 & 0.0335 & 1.146 & 23.9 \\
Ave & 0.500 & 100/200 & -0.0003 & 0.0198 & 0.0198 & 1.116 & 19.8 \\
W & 0.545 & 100/200 & -0.0001 & 0.0212 & 0.0212 & 1.126 & 21.1 \\
\addlinespace[2pt]
\multicolumn{8}{l}{\textit{Panel B: CLIP} ($n=100$, $N=200$)} \\
G1 & 0 & 100/200 & 0.0001 & 0.1196 & 0.1195 & 1.199 & 30.4 \\
G2 & 0 & 100/200 & 0.0005 & 0.1193 & 0.1192 & 1.182 & 28.5 \\
G3 & 0 & 100/200 & -0.0004 & 0.1170 & 0.1169 & 1.194 & 29.8 \\
Ave & 0 & 100/200 & 0.0000 & 0.0678 & 0.0677 & 1.157 & 25.2 \\
W & 0 & 100/200 & -0.0001 & 0.0709 & 0.0709 & 1.171 & 27.1 \\
\addlinespace[2pt]
\multicolumn{8}{l}{\textit{Panel C: UAL} ($n/N=60/120$, $100/200$, and $140/280$ for G1--G3)} \\
G1 & 0.500 & 60/120 & 0.0014 & 0.0440 & 0.0439 & 1.126 & 21.2 \\
G2 & 0.500 & 100/200 & 0.0015 & 0.0343 & 0.0343 & 1.214 & 32.1 \\
G3 & 0.500 & 140/280 & 0.0014 & 0.0292 & 0.0292 & 1.200 & 30.5 \\
Ave & 0.500 & --- & 0.0014 & 0.0213 & 0.0214 & 1.171 & 27.0 \\
W & 0.500 & --- & 0.0014 & 0.0203 & 0.0203 & 1.200 & 30.6 \\
\bottomrule
\end{tabular}
\vspace{0.35em}
\begin{minipage}{0.97\linewidth}
\footnotesize\textit{Note.} Results use $R=500$ Monte Carlo replications. Bias, EmpSD, and RMSE refer to SSL; RE $=\mathrm{RMSE}_{\mathrm{SUP}}/\mathrm{RMSE}_{\mathrm{SSL}}$. In the UAL panel, $n/N$ gives the labelled/unlabelled sample sizes per arm. Ave uses equal group weights and W uses $(0.2,0.3,0.5)$.
\end{minipage}

\end{table}

\FloatBarrier

\subsection{Perturbation-Based Inference}\label{subsec:simulation-inference}
The perturbation experiment uses the regular probability setting with $n=50$ labelled and $N=100$ unlabelled observations per arm and group, $R=500$ Monte Carlo replications, and $B=200$ perturbation replicates per dataset. Each perturbation replicate recomputes the score, profile fit, hard-floor estimate, and full outer average, so the exercise evaluates the complete refitting procedure rather than a fixed-fit variance calculation. We report Monte Carlo bias, EmpSD, mean perturbation standard error, the SE/SD ratio, empirical coverage of nominal $95\%$ intervals, and mean interval length.

Monte Carlo bias is small relative to sampling variability. Mean perturbation standard errors closely track empirical sampling variability, with SE/SD ranging from $0.996$ to $1.045$ (Figure~\ref{fig:ch5-inference-summary} and Table~\ref{tab:ch5-inference}). This calibration is visible for both group-specific and aggregate targets and is consistent with the original-observation construction used in the perturbation theory.

The small biases and near-unity SE/SD ratios indicate accurate centring and variance calibration under complete refitting. The largest absolute bias is $0.0015$, compared with groupwise EmpSD values near $0.05$. For Ave and W, EmpSD decreases to $0.0298$ and $0.0314$, and the mean perturbation standard errors follow this reduction closely. The resulting aggregate intervals are shorter while maintaining coverage comparable to the groupwise intervals. This agreement across group and aggregate targets is consistent with the original-observation variance representation, in which the same labelled observations contribute to both fitting and outer averaging.
\begin{figure}[H]
\centering
\includegraphics[width=\textwidth]{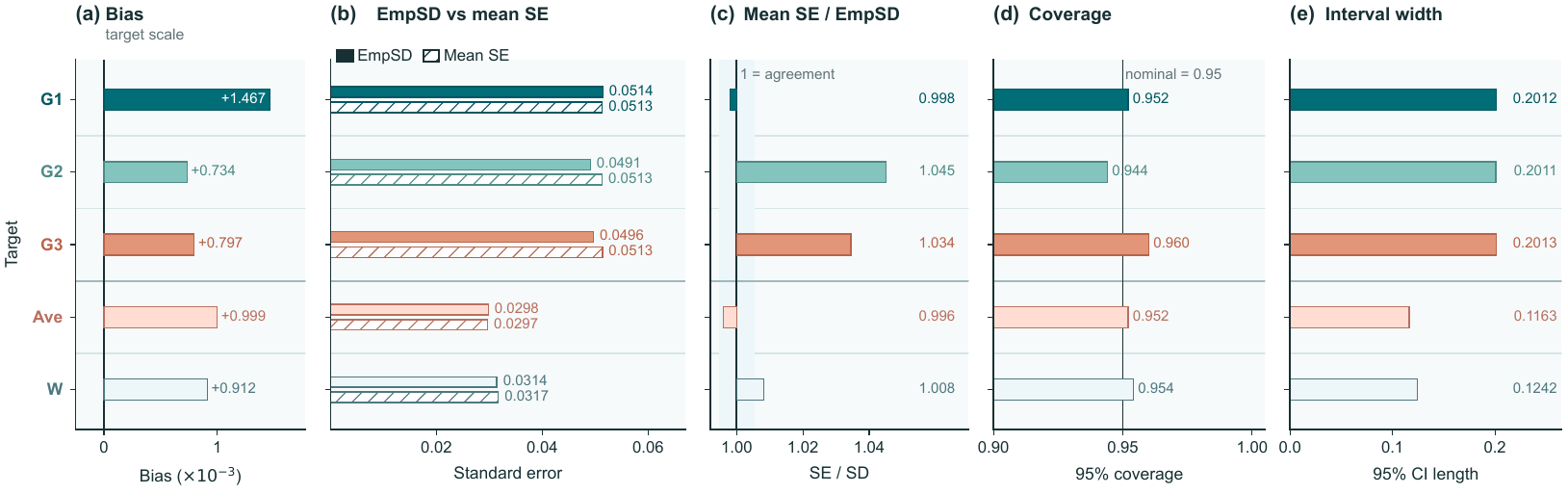}
\caption{Finite-sample performance of full-refit multinomial perturbation inference in the regular probability setting ($n=50$, $N=100$, $R=500$, $B=200$). Panel (a) shows Monte Carlo bias; (b) compares empirical standard deviation (EmpSD) with the mean perturbation standard error; (c) reports Mean SE/EmpSD; (d) reports empirical coverage of nominal 95\% Wald intervals; and (e) reports mean 95\% confidence-interval length. Reference lines in (c) and (d) mark 1 and 0.95, respectively.}
\label{fig:ch5-inference-summary}
\end{figure}
\begin{table}[H]
\centering
\caption{Finite-sample performance of multinomial perturbation inference in the regular probability setting.}
\label{tab:ch5-inference}
\small
\setlength{\tabcolsep}{5.0pt}
\begin{tabular}{@{}lrrrrrr@{}}
\toprule
Unit & Bias & EmpSD & Mean SE & SE/SD & CP & 95\% CI Len. \\
\midrule
G1 & 0.0015 & 0.0514 & 0.0513 & 0.998 & 0.952 & 0.2012 \\
G2 & 0.0007 & 0.0491 & 0.0513 & 1.045 & 0.944 & 0.2011 \\
G3 & 0.0008 & 0.0496 & 0.0513 & 1.034 & 0.960 & 0.2013 \\
Ave & 0.0010 & 0.0298 & 0.0297 & 0.996 & 0.952 & 0.1163 \\
W & 0.0009 & 0.0314 & 0.0317 & 1.008 & 0.954 & 0.1242 \\
\bottomrule
\end{tabular}
\vspace{0.35em}
\begin{minipage}{0.97\linewidth}
\footnotesize\textit{Note.} $n=50$ and $N=100$ per arm and group; $R=500$ and $B=200$. Mean SE is the mean perturbation standard error, CP is empirical coverage, and 95\% CI Len. is the mean confidence-interval length.
\end{minipage}

\end{table}

The empirical coverage ranges from $0.944$ to $0.960$ around the nominal $0.95$ level. Aggregate targets have shorter intervals, consistent with their smaller sampling variability, while their coverage remains close to the groupwise values. The regular-design experiment thus supports finite-sample calibration of the full-refit perturbation procedure for both group-specific and aggregate inference.
\FloatBarrier

\section{Real Data Application}\label{sec:realdata}
We use NHANES 2011--2018 to illustrate grouped semi-supervised comparison in a real covariate distribution. The analysis is descriptive rather than causal and concerns the analysed sample rather than survey-weighted national inference.

The outcome is serum creatinine (LBXSCR, mg/dL). Arm 1 comprises adults without a reported history of clinician-diagnosed hypertension (BPQ020=2), and arm 2 those with such a history (BPQ020=1). We consider $\Delta_k=P(Y_{2k}>Y_{1k})$ within G1 (20--39 years), G2 (40--59 years), and G3 (60+ years), using age, sex, body mass index, and income-to-poverty ratio as covariates.

Within the complete-case analysis set, controlled masking retained $n=50$ labelled observations per arm and treated outcomes for an additional $N=100$ observations per arm as unobserved. This masking design defines the semi-supervised observation pattern. Ave is the equal-group average, whereas W uses the prespecified weights $(0.2,0.3,0.5)$; W is unrelated to the NHANES survey weights. SUP standard errors use the corresponding two-sample U-statistic H\'{a}jek plug-in variance estimator, whereas SSL standard errors use full-refit multinomial perturbation with $B=200$.

SUP and SSL point estimates are similar for all five quantities (Table~\ref{tab:nhanes-creatinine-hypertension-efficiency} and Figure~\ref{fig:nhanes-forest-precision}), while SSL has the smaller standard error in every comparison. The SUP-to-SSL standard-error ratios range from $1.08$ to $1.12$. Consistent with the variance mechanism in Section~\ref{sec:theory}, additional covariate information improves estimation of the outer covariate-distribution component without changing the target. All five SSL $95\%$ confidence intervals include $0.5$.
\begin{table}[!ht]
\centering
\small
\setlength{\tabcolsep}{6pt}
\renewcommand{\arraystretch}{1.20}
\caption{Supervised and semi-supervised estimates of the serum-creatinine comparison probability in the NHANES 2011--2018 illustration.}
\label{tab:nhanes-creatinine-hypertension-efficiency}
\resizebox{\linewidth}{!}{%
\begin{tabular}{@{}lccccc@{}}
\toprule
Unit & SUP estimate (95\% CI) & SSL estimate (95\% CI) &
$SE_{\mathrm{SUP}}$ & $SE_{\mathrm{SSL}}$ & $SE_{\mathrm{SUP}}/SE_{\mathrm{SSL}}$ \\
\midrule
G1 (20--39) &
0.483 [0.368, 0.598] &
0.483 [0.377, 0.589] &
0.0586 & 0.0542 & \textbf{1.08} \\
G2 (40--59) &
0.586 [0.474, 0.698] &
0.593 [0.493, 0.693] &
0.0573 & 0.0511 & \textbf{1.12} \\
G3 (60+) &
0.503 [0.388, 0.618] &
0.515 [0.409, 0.620] &
0.0586 & 0.0538 & \textbf{1.09} \\
\addlinespace
\textbf{Ave} &
0.524 [0.458, 0.590] &
0.530 [0.470, 0.590] &
0.0336 & 0.0306 & \textbf{1.10} \\
\textbf{W} &
0.524 [0.453, 0.594] &
0.532 [0.468, 0.596] &
0.0360 & 0.0328 & \textbf{1.10} \\
\bottomrule
\end{tabular}%
}
\parbox{0.98\linewidth}{\footnotesize\emph{Note.} The controlled-masking analysis uses $n=50$ labelled and $N=100$ unlabelled observations per arm. The final column is $SE_{\mathrm{SUP}}/SE_{\mathrm{SSL}}$; values above one indicate smaller SSL standard errors. SSL inference uses $B=200$ full-refit multinomial perturbations.}
\end{table}

Table~\ref{tab:nhanes-creatinine-hypertension-efficiency} separates changes in the centre of the estimate from changes in precision. Across the five quantities, the largest absolute difference between the SUP and SSL point estimates is $0.012$, small relative to their standard errors. Every SSL standard error is smaller, so the additional covariates chiefly sharpen the estimated comparison probabilities. The resulting intervals provide greater precision around broadly similar point estimates.

The group-specific SSL estimates are $0.483$, $0.593$, and $0.515$ for G1--G3, respectively. G2 has the largest point estimate, while G1 and G3 are closer to $0.5$; the overlapping intervals and their inclusion of $0.5$ make this ordering descriptive. The precision gain is distributed across the age groups: $SE_{\mathrm{SUP}}/SE_{\mathrm{SSL}}$ ranges only from $1.08$ to $1.12$, with the largest ratio in G2.

Aggregation gives a complementary summary. Ave and W yield similar SSL estimates, $0.530$ and $0.532$, and smaller standard errors than the individual group estimates. Their standard-error ratios are both $1.10$, so the semi-supervised gain persists under equal-group and prespecified weighted aggregation. The close Ave and W estimates indicate that the two prespecified aggregation schemes give similar summaries in this analysed sample.
\begin{figure}[!htbp]
\centering
\includegraphics[width=.97\linewidth]{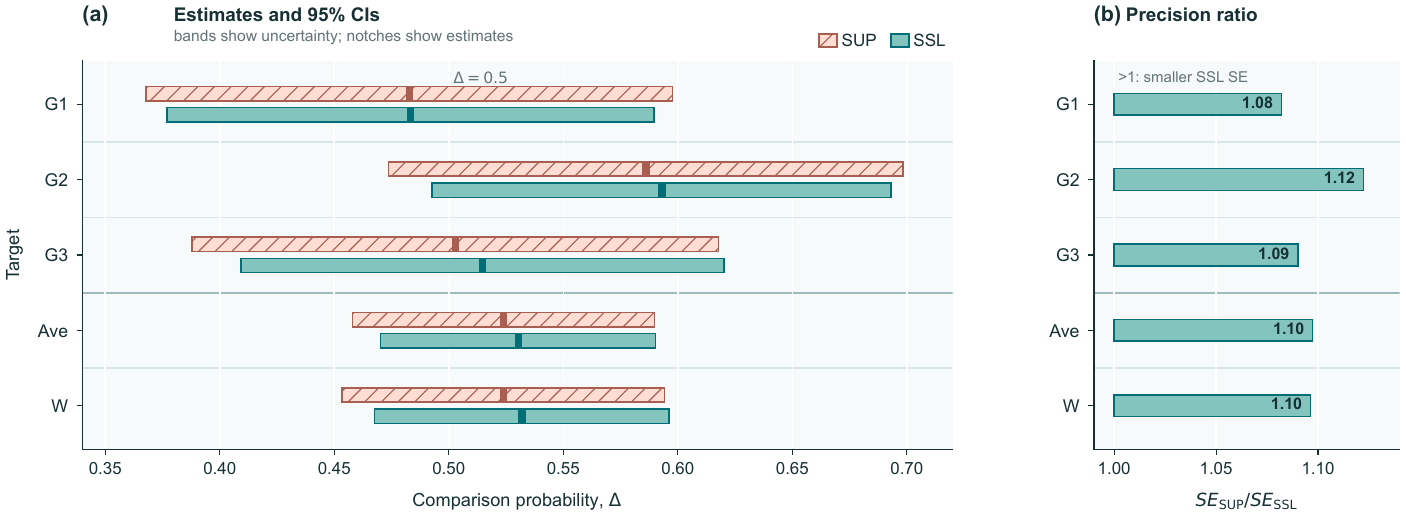}
\caption{NHANES 2011--2018 controlled-masking illustration. Panel (a) shows SUP and SSL estimates with 95\% confidence intervals; the dashed line marks $\Delta=0.5$. Panel (b) reports $SE_{\mathrm{SUP}}/SE_{\mathrm{SSL}}$, with values above one indicating smaller SSL standard errors.}
\label{fig:nhanes-forest-precision}
\end{figure}
\FloatBarrier

\section{Conclusions and Discussion}\label{sec:discussion}

This paper develops semi-supervised inference for group-specific two-sample comparison functionals and their prespecified aggregates. Treating the group-specific functionals as the primary estimands keeps heterogeneity visible, while equal-group and deterministic weighted summaries retain explicit scientific interpretations rather than being induced by pooling.

The main theoretical contribution is an original-observation four-role representation. Labelled outcomes determine the conditional-comparison component, whereas labelled and unlabelled covariates contribute to the outer covariate-distribution component, with their overlap retained in the first-order variance. Under correct single-index specification, additional covariates reduce the outer component but cannot remove uncertainty from learning the conditional comparison with a finite labelled sample. The centred low-density construction preserves the target and full outer averaging without discarding attainable evaluation regions.

The empirical results reflect this decomposition. Across the point-estimation studies, increasing unlabelled information yields progressively larger precision gains while bias remains small, and the pattern persists for heterogeneous group targets, the bounded CLIP functional, and unequal allocation. Full-refit multinomial perturbation preserves the dependence between labelled fitting and outer averaging; its standard errors track empirical sampling variability and coverage remains near the nominal level. In the NHANES analysis, SUP and SSL point estimates are similar, with smaller SSL standard errors throughout.

The scope of the present theory is defined by a bounded comparison response with a correctly specified single-index conditional mean, a fixed number of mutually independent groups and independent arms, prespecified deterministic aggregation weights, and nested labelled and full covariate samples drawn from the same arm-specific distributions. The current first-order analysis therefore does not describe the effects of index-model approximation, covariate shift between labelled and unlabelled samples, or dependence from paired or complex sampling.

These boundaries point to extensions that retain the grouped target structure. Broader outcome classes and more flexible conditional estimators would permit a direct study of approximation error beyond the working single-index model, while transport or reweighting methods could accommodate external unlabelled samples drawn from a different covariate distribution. Extending the original-unit variance accounting to growing collections of groups, estimated aggregation weights, and dependent sampling would broaden the framework to paired and survey designs. More broadly, the analysis clarifies how unlabelled covariate information can sharpen grouped two-sample inference while leaving the prespecified scientific target unchanged.

\section*{Acknowledgement}

This research was supported by the Natural Science Foundation of
Xinjiang Uygur Autonomous Region (No.~2025D14015) and the National
Natural Science Foundation of China (No.~72571102).

\clearpage
\appendix

\makeatletter
\begingroup
\def\@seccntformat#1{Appendix~\csname the#1\endcsname.\quad}
\section{Notation and Conventions}\label{app:A}
\endgroup
\makeatother

\subsection{Empirical Operators and Score Notation}

Suppress the group index $k$. Let $O_{ra}=(Y_{ra},\mathbf{X}_{ra})$,
$\mathbb P_{r,n}$ be the empirical measure of the $n_r$ labelled units,
and $\mathbb Q_{r,M}$ the covariate empirical measure of all $M_r$ units:
\[
 \mathbb P_{r,n}f=\frac1{n_r}\sum_{a=1}^{n_r}f(O_{ra}),\qquad
 \mathbb Q_{r,M}f=\frac1{M_r}\sum_{a=1}^{M_r}f(\mathbf{X}_{ra}).
\]
Their two-arm product operators are
\[
 \mathbb P_{21}f
 =\frac1{n_2n_1}\sum_{i=1}^{n_2}\sum_{j=1}^{n_1}f(O_{2i},O_{1j}),
 \qquad
 \mathbb Q_M f
 =\frac1{M_2M_1}\sum_{a=1}^{M_2}\sum_{b=1}^{M_1}f(\mathbf{X}_{2a},\mathbf{X}_{1b}).
\]
Let $P=P_2P_1$ denote expectation under an independent labelled
treatment--control pair, and let $Q=Q_2Q_1$ denote expectation under an
independent covariate pair.
The first $n_r$ entries of each full covariate array are the labelled
covariates, so $\mathbb P_{21}$ and $\mathbb Q_M$ overlap.

Let $\mathbf{V}_{ij}=(\mathbf{X}_{2i}^\top,\mathbf{X}_{1j}^\top)^\top$,
$H_{ij}=L(Y_{2i},Y_{1j},t)$, $H_{c,ij}=H_{ij}-c_H$, and
$S_{ij}(\boldsymbol{\beta})=\mathbf{V}_{ij}^\top\boldsymbol{\beta}$, and write
$\mathbf{V}^q=(\mathbf{X}_2^{q\top},\mathbf{X}_1^{q\top})^\top$ for an independent symbolic
covariate pair. Near the true direction, use the local tangent parametrisation
\[
 \boldsymbol{\beta}(\boldsymbol{\vartheta})=\frac{\boldsymbol{\beta}_0+\mathbf{R}\boldsymbol{\vartheta}}
 {\|\boldsymbol{\beta}_0+\mathbf{R}\boldsymbol{\vartheta}\|},
 \qquad \mathbf{R}^\top \mathbf{R}=\mathbf{I},\qquad \mathbf{R}^\top\boldsymbol{\beta}_0=0.
\]
Zero coordinates of $\boldsymbol{\beta}_0$ do not reduce the tangent dimension.

For $K_h(u)=h^{-1}K(u/h)$ define
\[
 \begin{aligned}
 d_{\boldsymbol{\beta}}(s)&=P K_h(S_{\boldsymbol{\beta}}-s),&
 g_{\boldsymbol{\beta}}(s)&=P\{K_h(S_{\boldsymbol{\beta}}-s)H\},\\
 q_{\boldsymbol{\beta},c}(s)&=g_{\boldsymbol{\beta}}(s)-c_Hd_{\boldsymbol{\beta}}(s)
 =P\{K_h(S_{\boldsymbol{\beta}}-s)H_c\}.
 \end{aligned}
\]
Hats denote replacement of $P$ by $\mathbb P_{21}$.

\subsection{Projection and Overlap Notation}

The limiting roles $\phi_{X2},\phi_{X1},\phi_{Y2},\phi_{Y1}$ correspond,
respectively, to the two outer full-covariate empirical distributions and
the two labelled residual projections. The overlap labels
$L_{00},L_{10},L_{01},L_{11}$ denote, respectively, no shared fitting unit,
a shared treatment unit, a shared control unit, and both units shared with
the outer pair; this notation retains the leave-in estimator.

\begingroup
\UseAppendixBEnvironments
\makeatletter
\begingroup
\def\@seccntformat#1{Appendix~\csname the#1\endcsname.\quad}
\section{Auxiliary Lemmas}\label{app:B}
\endgroup
\makeatother

Throughout Appendix~B, group indices are suppressed when no confusion arises;
we use the notation of Appendix~\ref{app:A}, the tuning sequences in
\eqref{eq:bandwidth-rule}--\eqref{eq:floor-rule}, and the assumptions of
Section~\ref{sec:theory} as indicated.

\phantomsection\label{app:B:assumption-technical}
Throughout Appendix~B, Assumption~\ref{ass:E} is used in the following
uniform form. On each moving-support stratum, $c_{\pm,k,\boldsymbol{\vartheta}}$ are uniformly bounded
and bounded away from zero. The conditional moment-density maps and their
level, local-coordinate, bandwidth, and score-scale derivatives have uniformly smooth
versions with bounded endpoint derivatives and endpoint-compatible residual
orders; the moving denominator satisfies the local Lipschitz bound required
for floor crossing.

The level and derivative kernel/deletion families, including row, column, and
row--column intersection deletions, hard-floor and overlap indices, and
once- and twice-deleted arrays, form a finite-dimensional or VC-type family
with entropy factor $\ell_n=1+\log(n_L/h)$, uniformly over bounded positive
score-scale perturbations and shrinking bandwidth neighbourhoods. The
required primitive second moments, armwise random-evaluation $L_2$
first-projection envelopes, and canonical-pair and overlap integrability
bounds hold uniformly. The two-arm Hoeffding/VC maximal inequality,
including its exponential-tail form, and the required stochastic
equicontinuity also hold uniformly. The global numerator, denominator, and leave-in
profile-criterion classes are Glivenko--Cantelli, supplying the uniform law
for Proposition~\ref{prop:profile-consistency}.

\phantomsection\label{subsec:app-rate-conditions}
Suppressing the group index in local expressions, on a local stratum with
endpoint order $p$, write
\[
 \gamma=\frac{p}{p-1},\qquad
 \kappa_r=\frac{p_r-1}{p-1},\qquad
 \kappa_{12}=\frac{p-2}{p-1}=2-\gamma.
\]
For the common fixed-$K$ rates, let $p_* = \max_{1\le k\le K}p_{*,k}$ and
$\gamma_*=p_*/(p_*-1)$; let $\kappa_{r,*}$ and $\kappa_{12,*}$ denote the
corresponding worst-case overlap exponents. All $\asymp$ statements are
uniform over the fixed groups and allow deterministic multiplicative
constants bounded away from zero and infinity. For
\[
h\asymp n_{\min}^{-\alpha},\qquad
w\asymp n_L^{-\eta},\qquad b_n=n_L^{-1/2},
\]
the rate region \textup{(R)} in Section~\ref{sec:theory} implies
\[
 \sqrt{n_L}h^2\to0,\qquad
 \sqrt{n_L}w^{\gamma_*}\to0,\qquad
 \frac{b_n}{h}\to0,\qquad \frac{b_n}{w}\to0,
\]
\[
 \eta\kappa_{12,*}<\frac12,\qquad
 1-\alpha-\eta\kappa_{r,*}>0\quad(r=1,2),\qquad
 3-2\alpha-\eta\kappa_{12,*}>0,
\]
and $w/h^{p-1}\to\infty$ uniformly over the relevant local strata.

For the Section~\ref{sec:simulation} design, the true-index and conservative
endpoint orders are $p_0=4$ and $p_*=8$, hence
$\gamma_0=4/3$ and $\gamma_*=8/7$.
For the numerical choice $(\alpha,\eta)=(1/3,15/32)$,
\[
 \frac14<\frac13<\frac25,
 \qquad
 \frac{1}{2\gamma_*}=\frac7{16}<\frac{15}{32}<\frac12
 =1-\frac{3(1/3)}2.
\]
Hence the Section~\ref{sec:simulation} choice lies strictly inside
\textup{(R)}.
\subsection{Low-Density and Two-Sample Bounds}\label{app:B:oracle-algebra}

\phantomsection\label{sec:ld-ratio-decomp}
For the hard-floor linearisation at a fixed evaluation score $s$, write
\[
 A_{c,h}(s)=\widehat q_{c,h}(s)-q_{c,h}(s)
            =A_h(s)-c_HB_h(s),\qquad
 B_h(s)=\widehat d_h(s)-d_h(s),
\]
$T_w(x)=x\vee w$, and $J_{h,w}(s)=\mathbf 1\{d_h(s)>w\}$. Define
\begin{align}
 H^c_{h,w}(u_2,u_1;s)
 ={}&\frac{K_h\{S(u_2,u_1)-s\}H_c(u_2,u_1)-q_{c,h}(s)}{T_w\{d_h(s)\}}
 \notag\\
 &-J_{h,w}(s)\frac{q_{c,h}(s)
 [K_h\{S(u_2,u_1)-s\}-d_h(s)]}{d_h(s)^2}.
 \label{eq:ld-H-definition}
\end{align}
Then $PH^c_{h,w}(\cdot;s)=0$ for every $s$. Equivalently,
\[
 H^c_{h,w}(u;s)=
 \begin{cases}
 \displaystyle
 \frac{K_h(S_u-s)\{H_{c,u}-(m_h(s)-c_H)\}}{d_h(s)},&d_h(s)>w,\\[1.2ex]
 \displaystyle
 \frac{K_h(S_u-s)H_{c,u}-q_{c,h}(s)}{w},&d_h(s)\le w.
 \end{cases}
\]
\begin{lemma}[Low-Density Consequences]\label{lem:b1-low-density}
Under the low-density score-geometry clause of Assumption~\ref{ass:E}, there
are constants $0<c<C<\infty$ and $h_0,r_0>0$ such that the following hold.
\begin{enumerate}[label=(\roman*),leftmargin=*]
\item \emph{Smoothed endpoint order.} For $0<h<h_0$ and $s\in[a,b]$ with
$r(s):=\min(s-a,b-s)\le r_0$,
\[
 c\{r(s)+h\}^{p-1}\le d_h(s)\le C\{r(s)+h\}^{p-1}.
\]
On the remaining compact interior, $d_h$ is bounded above and away from
zero, and, uniformly for sufficiently small $h$,
\[
 f(s)\le C d_h(s),\qquad s\in[a,b].
\]

\item \emph{Low-density mass and margin.} Uniformly for $u\downarrow0$ with
$u/h^{p-1}\to\infty$,
\[
 P\{d_h(S)\le u\}\le Cu^\gamma,
 \qquad \gamma=\frac{p}{p-1},
\]
and, for $0<v\le u/2$,
\[
 P\{|d_h(S)-u|\le v\}\le Cv u^{\gamma-1}.
\]

\item \emph{Integrated inverse moments.} With $D_h=d_h(S)$, for every fixed
$q>0$ and $w/h^{p-1}\to\infty$,
\[
 E\{(D_h\vee w)^{-q}\}\le C_q
 \begin{cases}
 1,&q<\gamma,\\
 1+\log(1/w),&q=\gamma,\\
 w^{\gamma-q},&q>\gamma.
 \end{cases}
\]

\item \emph{Population floor bias.} Let
\[
 m_h(s)=g_h(s)/d_h(s),\qquad
 m^c_{h,w}(s)=c_H+
 \{g_h(s)-c_Hd_h(s)\}/\{d_h(s)\vee w\},
\]
and define $\Delta_h=Qm_h(S^q)$ and
$\Delta^c_{h,w}=Qm^c_{h,w}(S^q)$. Then
\[
 |\Delta^c_{h,w}-\Delta_h|\le Cw^\gamma.
\]
If $\nu\asymp n_L$ under comparable labelled arm sizes and
$w\asymp n_L^{-\eta}$ with $\eta\gamma>1/2$, then
\[
 \sqrt{\nu}\,|\Delta^c_{h,w}-\Delta_h|\to0.
\]

\item \emph{Integrated hard-floor kernel bound.} Define
\[
 G_{h,w}(u_2,u_1)
 =QH^c_{h,w}(u_2,u_1;\mathbf{X}_2^q,\mathbf{X}_1^q).
\]
Then
\[
 \sup_{u_2,u_1}|G_{h,w}(u_2,u_1)|\le C
\]
uniformly over the admissible $h,w$ regime.
\end{enumerate}

\begin{proof}
At the lower endpoint, write $s=a+r$ and
$f(a+x)=x^{p-1}c_0(x)$, where $c_0$ is bounded above and away from zero. Then
\[
 d_h(a+r)=\int_0^{b-a}h^{-1}K\{(x-r)/h\}x^{p-1}c_0(x)\,dx.
\]
The bound $x^{p-1}\le C\{r^{p-1}+|x-r|^{p-1}\}$ and the finite Gaussian
$(p-1)$st moment give the upper estimate. For the lower estimate, integration
over $[r,r+h]$ for $r\ge h$ and $[h,2h]$ for $r<h$, with kernel positivity on
the rescaled intervals, gives $cr^{p-1}$ and $ch^{p-1}$. The symmetric
upper-endpoint argument, compact-interior positivity, and the
approximate-identity property yield
$d_h(a+r)\asymp(r+h)^{p-1}$ at both endpoints and the interior bounds;
$f(a+r)\asymp r^{p-1}$ then gives $f(s)\le Cd_h(s)$.

For $u/h^{p-1}\to\infty$, the endpoint order gives
$d_h(S)\le u\Rightarrow r(S)\lesssim u^{1/(p-1)}$, and hence
\[
 P\{d_h(S)\le u\}
 \le C\int_0^{Cu^{1/(p-1)}}r^{p-1}\,dr
 \le Cu^{p/(p-1)}.
\]
On the shrinking endpoint region,
\[
 |d_h'(a+r)|\asymp(r+h)^{p-2},
\]
and similarly at the upper endpoint. At $d_h\asymp u$, a denominator band of
width $2v$ therefore has endpoint-distance length at most
$Cv u^{-(p-2)/(p-1)}$; since the score density is of order $u$,
\[
 P\{|d_h(S)-u|\le v\}\le Cv u^{1/(p-1)}
 =Cv u^{\gamma-1},\qquad 0<v\le u/2.
\]
For $F_h(u)=P(D_h\le u)$, the boundary term is
$w^{-q}F_h(w)\le Cw^{\gamma-q}$, and integration by parts gives
\[
 E\{D_h^{-q}\mathbf 1\{D_h>w\}\}
 \le C+q\int_w^{C_0}u^{-q-1}F_h(u)\,du
 \le C+Cq\int_w^{C_0}u^{\gamma-q-1}\,du.
\]
Evaluating this integral gives (iii). For the floor bias,
\[
 m^c_{h,w}(s)-m_h(s)
 =\left(1-\frac{d_h(s)}w\right)\{c_H-m_h(s)\}
 \mathbf 1\{d_h(s)<w\}.
\]
The bracketed factor is uniformly bounded, so the mass bound gives
$|\Delta^c_{h,w}-\Delta_h|=O(w^\gamma)$. With $\nu\asymp n_L$ and
$w\asymp n_L^{-\eta}$, this is root-scale negligible when
$\eta\gamma>1/2$.

For the integrated kernel bound, when $d_h(s)>w$,
\[
 K_h(S_u-s)\{H_{c,u}-(m_h(s)-c_H)\}/d_h(s)
 =K_h(S_u-s)\{H_u-m_h(s)\}/d_h(s).
\]
The bound $f(s)/d_h(s)\le C$ controls its evaluation-score integral by
$C\int K_h(S_u-s)\,ds$. On $d_h(s)\le w$,
$f(s)\le Cd_h(s)\le Cw$ and $|q_{c,h}(s)|\le Cd_h(s)\le Cw$ control the
second piece, proving the uniform bound for $G_{h,w}$.
\end{proof}
\end{lemma}

\begin{lemma}[Hard-Floor Expansion and Integrated Remainder]\label{lem:b2-floor-expansion}
\begin{enumerate}[label=(\roman*),leftmargin=*]
\item \emph{Process bound.} Let
\(\ell_n=1+\log(n_L/h)\) and set
\[
 a_n=\sqrt{\ell_n/n_2}+\sqrt{\ell_n/n_1}
     +\sqrt{\ell_n/(n_2n_1h)}.
\]
Then
\[
 \|\widehat q_{c,h}-q_{c,h}\|_\infty
 \vee\|\widehat d_h-d_h\|_\infty=O_p(a_n),
\]
with the same bound for the finite deleted collection, and $a_n=o(w)$ under
\textup{(R)}.

\item \emph{Exact hard-floor expansion and pointwise remainder.} For every
evaluation score $s$,
\begin{equation}
 \widehat m^c_{h,w}(s)-m^c_{h,w}(s)
 = (\mathbb P_{21}-P)H^c_{h,w}(\cdot;s)+R^c_{h,w}(s),
 \label{eq:ld-exact-floor-identity}
\end{equation}
where
\begin{align}
 R^c_{h,w}(s)
 ={}&A_{c,h}(s)\left[
 \frac1{T_w\{d_h(s)+B_h(s)\}}
 -\frac1{T_w\{d_h(s)\}}
 \right]\notag\\
 &+q_{c,h}(s)\left[
 \frac1{T_w\{d_h(s)+B_h(s)\}}
 -\frac1{T_w\{d_h(s)\}}
 +J_{h,w}(s)\frac{B_h(s)}{d_h(s)^2}
 \right].
 \label{eq:ld-exact-remainder}
\end{align}
Let
\[
 \mathcal E_n(\delta)=
 \{\|A_{c,h}\|_\infty\vee\|B_h\|_\infty\le\delta\},
 \qquad 0<\delta\le w/4.
\]
On $\mathcal E_n(\delta)$, with $d=d_h(s)$,
\[
 |R^c_{h,w}(s)|\le
 C\delta^2d^{-2}\mathbf 1\{d>w+2\delta\}
 +C\delta w^{-1}\mathbf 1\{|d-w|\le2\delta\},
\]
and $R^c_{h,w}(s)=0$ when $d<w-2\delta$. The identity remains valid
across floor crossings.

\item \emph{Integrated remainder.} Under the standing low-density and
kernel-process conditions,
\[
 \mathbb Q_M|R^c_{h,w}(S^q)|
 =O_p\{a_n^2w^{\gamma-2}\}.
\]
Under the comparable-arm sampling conditions and \textup{(R)},
\[
 \sqrt\nu\,\mathbb Q_M R^c_{h,w}(S^q)=o_p(1).
\]
\end{enumerate}

\begin{proof}
\noindent
The two-arm Hoeffding decomposition and the primitive VC/envelope maximal
inequality stated at the beginning of Appendix~B give part (i), including the
finite deleted collection. Substituting
$\widehat q_{c,h}=q_{c,h}+A_{c,h}$ and
$\widehat d_h=d_h+B_h$, and adding and subtracting
$A_{c,h}/T_w(d_h)-J_{h,w}q_{c,h}B_h/d_h^2$, gives
\eqref{eq:ld-exact-floor-identity}--\eqref{eq:ld-exact-remainder}. The
centring identities
\[
 P\{K_hH_c-q_{c,h}\}=0,\qquad P\{K_h-d_h\}=0
\]
identify the added term as $(\mathbb P_{21}-P)H^c_{h,w}$.

\medskip
\noindent
On $\mathcal E_n(\delta)$, write $d=d_h(s)$, $A_c=A_{c,h}(s)$, and
$B=B_h(s)$. If $d<w-2\delta$, then $d+B<w$, both denominators in
\eqref{eq:ld-exact-remainder} equal $w$, and $J_{h,w}=0$, so $R^c=0$.
If $d>w+2\delta$, both denominators are unfloored and
\[
 R^c=-\frac{A_cB}{d(d+B)}
       +\frac{q_{c,h}B^2}{d^2(d+B)}.
\]
Since $|q_{c,h}|\le Cd$ and $|B|\le\delta\le d/2$,
$|R^c|\le C\delta^2d^{-2}$. On $|d-w|\le2\delta$, the map
$x\mapsto1/(x\vee w)$ is $w^{-2}$-Lipschitz; with
$|q_{c,h}(s)|\le Cd\le Cw$ and $\delta\le w/4$, this gives
$|R^c|\le C\delta/w$.

\medskip
\noindent
Choose $\delta_n=Ca_n$ with $C$ sufficiently large. Since $a_n=o(w)$,
$\delta_n\le w/4$ eventually, and the exponential-tail version of the
primitive process bound makes
$P\{\mathcal E_n(\delta_n)^c\}$ negligible at the polynomial orders needed
here. On $\mathcal E_n(\delta_n)$, Lemma~\ref{lem:b1-low-density}(iii)
and (ii), respectively, give
\begin{align*}
 Q|R^c_{h,w}(S)|
 &\le C\delta_n^2E\{D_h^{-2}\mathbf 1\{D_h>w\}\}
 +C\frac{\delta_n}{w}P\{|D_h-w|\le2\delta_n\}\\
 &\le C\delta_n^2w^{\gamma-2}
 +C\frac{\delta_n}{w}\delta_nw^{\gamma-1}
 \le C\delta_n^2w^{\gamma-2}.
\end{align*}
The first term uses the inverse-moment bound and the second the margin bound
in Lemma~\ref{lem:b1-low-density}. The right-hand side is a deterministic
function of the evaluation score, so equality of the evaluation marginals and
Markov's inequality transfer the same $O_p$ order to
$\mathbb Q_M|R^c_{h,w}(S^q)|$, regardless of fitting--evaluation index
overlap. On $\mathcal E_n(\delta_n)^c$, bounded marks and polynomial
$h^{-1},w^{-1}$ envelopes are dominated by the same exponential tail. Since
$a_n^2=O(\ell_n/n_L)$, the root-scale conclusion follows from
\textup{(R)}.
\end{proof}
\end{lemma}

\phantomsection\label{sec:ld-pair-dependence}
To account for dependence between fitting and evaluation pairs, let
$H^c_{h,w}$ be defined by \eqref{eq:ld-H-definition}. A fitting pair and an
evaluation pair may be distinct, share the treatment unit, share the control
unit, or share both units.
\begin{lemma}[Two-Sample Projection, Index-Overlap, and Canonical Bounds]\label{lem:b3-collision-canonical}
\hfill\break
\begin{enumerate}[label=(\roman*),leftmargin=*]
\item \emph{Moments.} Under Assumption~\ref{ass:E},
\begin{align*}
 E (H^c_{h,w})^2&\le C h^{-1},&&L_{00},\\
 E (H^c_{h,w})^2&\le C h^{-1}w^{-\kappa_2},&&L_{10},\\
 E (H^c_{h,w})^2&\le C h^{-1}w^{-\kappa_1},&&L_{01},\\
 E (H^c_{h,w})^2&\le C h^{-2}w^{-\kappa_{12}},&&L_{11}.
\end{align*}
Here $L_{10}$ shares the arm-2 unit and $L_{01}$ shares the arm-1 unit.
\[
 \kappa_2=\frac{p_2-1}{p-1},\qquad
 \kappa_1=\frac{p_1-1}{p-1},\qquad
 \kappa_{12}=\frac{p-2}{p-1}=2-\gamma.
\]
The class $L_{00}$ has no shared unit, whereas $L_{11}$ shares both units.
Moreover, $G_{h,w}=QH^c_{h,w}$ is uniformly bounded by
Lemma~\ref{lem:b1-low-density}(v).

\item \emph{Hoeffding decomposition.} For any square-integrable pair
kernel $r(u_2,u_1)$,
\begin{align*}
 \mathbb P_{2,n_2}\mathbb P_{1,n_1}r-Pr
 ={}&(\mathbb P_{2,n_2}-P_2)P_1r
 +(\mathbb P_{1,n_1}-P_1)P_2r\\
 &+(\mathbb P_{2,n_2}-P_2)(\mathbb P_{1,n_1}-P_1)r.
\end{align*}
The final term is canonical in both arm roles and has variance
\[
\frac{\|\Pi_{21}r\|_2^2}{n_2n_1}.
\]
The identity remains valid when the labelled observations are nested in a
larger outer covariate sample; nesting changes cross-covariances with outer
terms but does not change the algebra.

\item \emph{Mixed remainder.} Let
$R_{4,h,w}^{\mathrm{lin}}$ be the sum of the nine terms containing at
least one fitting empirical operator and at least one outer empirical
operator in the exact four-component expansion. Then
\begin{align}
 E\{(R_{4,h,w}^{\mathrm{lin}})^2\}
 \le C r_{4,h,w}^2,
 \label{eq:ld-r4-bound}
\end{align}
where
\begin{align*}
r_{4,h,w}^2={}&
 \frac1h\left(
 \frac1{n_2M_2}+\frac1{n_2M_1}
   +\frac1{n_1M_2}+\frac1{n_1M_1}\right)\\
&+\frac{w^{-\kappa_2}}{M_2^2h}
 +\frac{w^{-\kappa_1}}{M_1^2h}
 +\frac{w^{-\kappa_{12}}}{M_2^2M_1^2h^2}
 +\left(\frac1{M_2}+\frac1{M_1}\right)^2.
\end{align*}
Under the standing sample-size conditions and \textup{(R)},
\[
 R_{4,h,w}^{\mathrm{lin}}=o_{L_2}(\nu^{-1/2}).
\]

\item \emph{Canonical pair remainders.} Let $R_X^d$ be the canonical
two-arm outer remainder generated by the bounded hard-floor population kernel
$m^c_{h,w}\{S^q(\mathbf{X}_2,\mathbf{X}_1)\}$, and let $R_Y^d$ be the canonical labelled
remainder generated by $G_{h,w}$. Then
\[
 E(R_X^d)^2=O\{(M_2M_1)^{-1}\},
 \qquad
 E(R_Y^d)^2=O\{(n_2n_1)^{-1}\},
\]
and both are $o_{L_2}(\nu^{-1/2})$.
\end{enumerate}

\begin{proof}
\noindent
By Lemma~\ref{lem:b1-low-density}(v), $G_{h,w}$ is uniformly bounded.
Boundedness of $H_c$ and $|q_{c,h}|\le Cd_h$ reduce the four overlap
configurations, up to bounded additive terms, to squared-kernel integrals
divided by $(d_h(S^q)\vee w)^2$. For distinct fitting and evaluation roles,
\[
 E\{K_h(S-S^q)^2\mid S^q=s\}
 \le Ch^{-1}d_{ch}(s)\le Ch^{-1}d_h(s),
\]
so, with $D_h=d_h(S^q)$,
\[
 E\{(H^c_{h,w})^2\}
 \le Ch^{-1}E\left\{\frac{D_h}{(D_h\vee w)^2}\right\}+C
 =O(h^{-1})
\]
by Lemma~\ref{lem:b1-low-density}(iii). The displayed two-arm decomposition
is the expansion of the product empirical operator; its last term is
canonical in both arm roles and has the stated variance.
For $L_{10}$, the fitting and evaluation arm-2 units coincide. Near a lower
endpoint, let $x$ and $y$ be the arm-2 and arm-1 endpoint distances. Their
density orders are $x^{p_2-1}$ and $y^{p_1-1}$; the nonshared-arm squared
kernel contributes $h^{-1}(y+h)^{p_1-1}$, while the full denominator is of
order $(x+y+h)^{p-1}$. With $r_w=w^{1/(p-1)}$, the simplex coordinates
$x=rt$, $y=r(1-t)$ give
\begin{align*}
 E\{(H^c_{h,w})^2\}
 &\le Ch^{-1}\left(
 w^{-2}\int_0^{r_w}r^{2p_1+p_2-2}\,dr
 +\int_{r_w}^{r_0}r^{-p_2}\,dr+1\right)\\
 &\le Ch^{-1}w^{-(p_2-1)/(p-1)}.
\end{align*}
The upper endpoint is analogous and the compact interior contributes
$O(h^{-1})$. Interchanging the arms gives the $L_{01}$ order. For $L_{11}$,
$K_h(0)^2=O(h^{-2})$, and Lemma~\ref{lem:b1-low-density}(iii) with $q=2$
gives
\[
 E\{(H^c_{h,w})^2\}
 \le Ch^{-2}E\{(D_h\vee w)^{-2}\}
 \le Ch^{-2}w^{\gamma-2}
 =Ch^{-2}w^{-\kappa_{12}}.
\]

\medskip
\noindent
Partition the nine mixed fitting--outer terms by the overlap classes
$L_{00},L_{10},L_{01},L_{11}$ defined in Appendix~\ref{app:A}. The
distinct-role terms after the two-arm Hoeffding projection give the first line
of \eqref{eq:ld-r4-bound}. The terms with one or both arm indices shared use
the $L_{10}$, $L_{01}$, and $L_{11}$ bounds together with their outer-operator
coefficients. These coefficients have orders one, $M_2^{-1}$, $M_1^{-1}$,
and $(M_2M_1)^{-1}$ for the four overlap classes, respectively; squaring them
and applying the corresponding overlap bounds gives the next three terms. Exact
index-deleted normalizers differ from their population counterparts by
$O(M_2^{-1}+M_1^{-1})$, giving the final term. After the relevant deletion,
canonical components supported on distinct original-unit roles are orthogonal.
With $M_r\ge n_r\asymp n_L$ and $\nu\asymp n_L$, the strict inequalities
\[
 1-\alpha>0,\qquad
 1-\alpha-\eta\kappa_r>0\quad(r=1,2),\qquad
 3-2\alpha-\eta\kappa_{12}>0
\]
imply $r_{4,h,w}^2=o(\nu^{-1})$ and hence
$R_{4,h,w}^{\mathrm{lin}}=o_{L_2}(\nu^{-1/2})$.

\medskip
\noindent
The population kernel defining $R_X^d$ is bounded because $H_c$ is bounded
and $|q_{c,h}|\le Cd_h$; Lemma~\ref{lem:b1-low-density}(v) bounds
$G_{h,w}$.
Applying the canonical variance formula from part (ii) yields
\[
 E(R_X^d)^2=O\{(M_2M_1)^{-1}\},\qquad
 E(R_Y^d)^2=O\{(n_2n_1)^{-1}\},
\]
which are both $o_{L_2}(\nu^{-1/2})$ under the standing sample-size
conditions.
\end{proof}
\end{lemma}

\subsection{Profile and Index Estimation}\label{app:B:profile-index}
\phantomsection\label{sec:idx-profile-estimator}
We work in population-standardized score coordinates. Using the local tangent
parametrisation and pair notation of Appendix~\ref{app:A}, set
\(S_{\boldsymbol{\vartheta}}(\mathbf{V})=\mathbf{V}^\top\boldsymbol{\beta}(\boldsymbol{\vartheta})\).

For each $\boldsymbol{\vartheta}$, let
\[
 \mu_{\boldsymbol{\vartheta}}=E\{S_{\boldsymbol{\vartheta}}(\mathbf{V})\},\qquad
 \sigma_{\boldsymbol{\vartheta}}^2=\Var\{S_{\boldsymbol{\vartheta}}(\mathbf{V})\},\qquad
 T_{\boldsymbol{\vartheta}}(\mathbf{V})=\frac{S_{\boldsymbol{\vartheta}}(\mathbf{V})-\mu_{\boldsymbol{\vartheta}}}{\sigma_{\boldsymbol{\vartheta}}},
\]
with $\bar h\asymp n_{\min}^{-\alpha}$ and
\[
 K_{\boldsymbol{\vartheta}}(u,v)
 =K_{\bar h}\{T_{\boldsymbol{\vartheta}}(u)-T_{\boldsymbol{\vartheta}}(v)\}.
\]
Lemma~\ref{lem:b6-candidate-scale}(i)
gives the equivalent raw-score scales
$\widetilde h_{\boldsymbol{\vartheta}}=\sigma_{\boldsymbol{\vartheta}}\bar h$ and
$\widetilde w_{\boldsymbol{\vartheta}}=\bar w/\sigma_{\boldsymbol{\vartheta}}$; bounded positive local
rescaling preserves the endpoint orders and rates. Population and empirical
numerator, denominator, and centred-numerator operators use $K_{\boldsymbol{\vartheta}}$
and, respectively, $P$ and $\mathbb P_{21}$.

For $w=\bar w\asymp n_L^{-\eta}$, let
$\rho_w(d)=w r(d/w)$, where $r$ is the stabilizer in
Section~\ref{sec:methodology}.
Then
\[
 c(d\vee w)\le\rho_w(d)\le C(d\vee w),
\]
\[
 0\le\rho_w'(d)\le C,
 \quad |\rho_w''(d)|\le Cw^{-1}\mathbf 1\{w/2<d<2w\},
\]
\[
 |\rho_w'''(d)|\le Cw^{-2}\mathbf 1\{w/2<d<2w\}.
\]
The centred profile population and empirical fits are
\[
 m^{\rho,c}_{\boldsymbol{\vartheta}}(v)=c_H+\frac{q_{\boldsymbol{\vartheta},c}(v)}
 {\rho_w\{d_{\boldsymbol{\vartheta}}(v)\}},\qquad
 \widehat m^{\rho,c}_{\boldsymbol{\vartheta}}(v)=c_H+
 \frac{\widehat q_{\boldsymbol{\vartheta},c}(v)}
 {\rho_w\{\widehat d_{\boldsymbol{\vartheta}}(v)\}},
\]
where $\widehat q_{\boldsymbol{\vartheta},c}=\widehat g_{\boldsymbol{\vartheta}}-c_H\widehat
d_{\boldsymbol{\vartheta}}$. The population fit equals the ordinary Nadaraya--Watson
ratio when $d_{\boldsymbol{\vartheta}}(v)\ge2w$.
The leave-in empirical profile criterion is
\[
 \widehat Q^{\rho,c}(\boldsymbol{\vartheta})
 =\mathbb P_{21}
 \left[\{H-\widehat m^{\rho,c}_{\boldsymbol{\vartheta}}(\mathbf{V})\}^2\right],
\]
and $\widehat{\boldsymbol{\vartheta}}$ is a sign-anchored local/global minimizer satisfying
the consistency condition stated below.

For a fixed evaluation point $v$, write
\[
 N_{\boldsymbol{\vartheta}}=\widehat g_{\boldsymbol{\vartheta}}(v),\quad
 D_{\boldsymbol{\vartheta}}=\widehat d_{\boldsymbol{\vartheta}}(v),\quad
 Q_{\boldsymbol{\vartheta}}=N_{\boldsymbol{\vartheta}}-c_HD_{\boldsymbol{\vartheta}},\quad
 R_{\boldsymbol{\vartheta}}=\rho_w(D_{\boldsymbol{\vartheta}}).
\]
Let dots denote derivatives with respect to $\boldsymbol{\vartheta}$. In population-standardized
coordinates $\bar h$ is fixed with respect to $\boldsymbol{\vartheta}$. Then
\[
 \dot{\widehat m}^{\rho,c}_{\boldsymbol{\vartheta}}
 =\frac{\dot Q_{\boldsymbol{\vartheta}}}{R_{\boldsymbol{\vartheta}}}
 -\frac{Q_{\boldsymbol{\vartheta}}\rho_w'(D_{\boldsymbol{\vartheta}})\dot D_{\boldsymbol{\vartheta}}}
 {R_{\boldsymbol{\vartheta}}^2}.
\]
Here, exactly,
\[
 \dot Q_{\boldsymbol{\vartheta}}=\dot N_{\boldsymbol{\vartheta}}-c_H\dot D_{\boldsymbol{\vartheta}},
 \qquad
 \ddot Q_{\boldsymbol{\vartheta}}=\ddot N_{\boldsymbol{\vartheta}}-c_H\ddot D_{\boldsymbol{\vartheta}}.
\]
Define
\[
 \dot R_{\boldsymbol{\vartheta}}=\rho_w'(D_{\boldsymbol{\vartheta}})\dot D_{\boldsymbol{\vartheta}},
\]
\[
 \ddot R_{\boldsymbol{\vartheta}}
 =\rho_w''(D_{\boldsymbol{\vartheta}})
   \dot D_{\boldsymbol{\vartheta}}\dot D_{\boldsymbol{\vartheta}}^\top
  +\rho_w'(D_{\boldsymbol{\vartheta}})\ddot D_{\boldsymbol{\vartheta}}.
\]
The exact Hessian is
\begin{align*}
 \ddot{\widehat m}^{\rho,c}_{\boldsymbol{\vartheta}}
 ={}&\frac{\ddot Q_{\boldsymbol{\vartheta}}}{R_{\boldsymbol{\vartheta}}}
 -\frac{\dot Q_{\boldsymbol{\vartheta}}\dot R_{\boldsymbol{\vartheta}}^\top
       +\dot R_{\boldsymbol{\vartheta}}\dot Q_{\boldsymbol{\vartheta}}^\top}
       {R_{\boldsymbol{\vartheta}}^2}
 -\frac{Q_{\boldsymbol{\vartheta}}\ddot R_{\boldsymbol{\vartheta}}}{R_{\boldsymbol{\vartheta}}^2}
 +\frac{2Q_{\boldsymbol{\vartheta}}
       \dot R_{\boldsymbol{\vartheta}}\dot R_{\boldsymbol{\vartheta}}^\top}
       {R_{\boldsymbol{\vartheta}}^3}.
\end{align*}
Consequently
\[
 \dot{\widehat Q}^{\rho,c}(\boldsymbol{\vartheta})
 =-2\mathbb P_{21}
 \left[\{H-\widehat m^{\rho,c}_{\boldsymbol{\vartheta}}(\mathbf{V})\}
 \dot{\widehat m}^{\rho,c}_{\boldsymbol{\vartheta}}(\mathbf{V})\right],
\]
\begin{align*}
 \ddot{\widehat Q}^{\rho,c}(\boldsymbol{\vartheta})
 ={}&2\mathbb P_{21}
 [\dot{\widehat m}^{\rho,c}_{\boldsymbol{\vartheta}}(\mathbf{V})
  \dot{\widehat m}^{\rho,c}_{\boldsymbol{\vartheta}}(\mathbf{V})^\top]\\
 &-2\mathbb P_{21}
 [\{H-\widehat m^{\rho,c}_{\boldsymbol{\vartheta}}(\mathbf{V})\}
  \ddot{\widehat m}^{\rho,c}_{\boldsymbol{\vartheta}}(\mathbf{V})].
\end{align*}
\phantomsection\label{sec:idx-density-cancel}
For $q=0,1,2$, let $G_{q,\boldsymbol{\vartheta},h}(U;v)$ denote a primitive level or
local-coordinate derivative kernel block, including score and bandwidth derivatives,
of derivative order $q$.
Let
\[
 D_{\boldsymbol{\vartheta},h}(v)=P K_{\boldsymbol{\vartheta}}(U,v).
\]
\begin{lemma}[Uniform Profile Approximation and Derivative Bounds]\label{lem:b4-profile-control}
Under the low-density score-geometry and kernel-process clauses of
Assumption~\ref{ass:E}, the following statements hold uniformly on the
fixed local neighbourhood.
\begin{enumerate}[label=(\roman*)]
\item For $q=0,1,2$, uniformly in $\boldsymbol{\vartheta}$ and $v$,
\[
 P\{\|G_{q,\boldsymbol{\vartheta},h}(U;v)\|^2\}
 \le C h^{-(2q+1)}D_{\boldsymbol{\vartheta},h}(v),
\]
$f_{\boldsymbol{\vartheta}}(s)\le C D_{\boldsymbol{\vartheta},h}(s)$, and
\[
 Q\left[
 \frac{P\{\|G_{q,\boldsymbol{\vartheta},h}(U;\mathbf{V}^q)\|^2\}}
 {\rho_w\{D_{\boldsymbol{\vartheta},h}(\mathbf{V}^q)\}^2}
 \right]
 \le C h^{-(2q+1)}.
\]
The same conclusion holds with $D\vee w$ in place of $\rho_w(D)$.

\item All transition-derivative terms arising in the first two local-coordinate
derivatives and local Hessian equicontinuity, including the required
$\rho_w''$ and $\rho_w'''$ factors, are uniformly bounded on their support.
Their random-evaluation $L_r$ norms are $O(w^{\gamma_*/r})$,
$1\le r<\infty$.

\item At $\boldsymbol{\vartheta}=0$,
\begin{align*}
 \|m^{\rho,c}_0(\mathbf{V})-m(S_0)\|_{L_2}
 &=O(h^2+w^{\gamma_0/2}),\\
 \|\dot m^{\rho,c}_0(\mathbf{V})-\mathbf{d}(\mathbf{V})\|_{L_2}
 &=O(h+w^{\gamma_0/2}),\\
 \|\ddot m^{\rho,c}_0(\mathbf{V})\|_{L_2}&=O(1).
\end{align*}
Moreover,
\[
 |Q^{\rho,c}_{h,w}(0)-Q(0)|=O(h^2+w^{\gamma_0}),
\]
and
\[
 \|\ddot Q^{\rho,c}_{h,w}(0)-2\mathbf{H}\|=o(1).
\]

\item With $\ell_n=1+\log(n_L/h)$, define, for $q=0,1,2$,
\[
 \kappa^\rho_{q,n}
 =h+w^{\gamma_*/2}
 +\sqrt{\ell_n/n_2}+\sqrt{\ell_n/n_1}
 +\sqrt{\ell_n/(n_2n_1h^{2q+1})},
\]
and, for $q=0,1$,
\[
 \lambda^\rho_{q,n}
 =n_2^{-1}+n_1^{-1}
 +\frac1{n_2\sqrt{n_1h^{2q+1}}}
 +\frac1{n_1\sqrt{n_2h^{2q+1}}}
 +\frac1{n_2n_1h^{q+1}}.
\]
Then, for $q=0,1,2$,
\[
 \sup_{\boldsymbol{\vartheta}\in\Theta_\delta}
 \|\partial_{\boldsymbol{\vartheta}}^q\widehat m^{\rho,c}_{\boldsymbol{\vartheta}}
 -\partial_{\boldsymbol{\vartheta}}^qm^{\rho,c}_{\boldsymbol{\vartheta}}\|_{L_2(P_{21})}
 =O_p(\kappa^\rho_{q,n}).
\]
The same rate holds for the finite collection of once- and twice-deleted
fits. Additional deletion of one treatment row and one control column
changes derivative order $q=0,1$ by
$O_p(\lambda^\rho_{q,n})$ in random-evaluation $L_2$.
\end{enumerate}

\begin{proof}
Each Gaussian score/bandwidth derivative has form
\(h^{-(q+1)}P_q((S_{\boldsymbol{\vartheta}}(U)-S_{\boldsymbol{\vartheta}}(v))/h)K(\cdot)\).
Squaring and changing variables yields
\(P\{\|G_{q,\boldsymbol{\vartheta},h}(U;v)\|^2\}
\le C h^{-(2q+1)}D_{\boldsymbol{\vartheta},h}(v)\); the endpoint calculation in
Lemma~\ref{lem:b1-low-density} supplies \(f_{\boldsymbol{\vartheta}}\le C D_{\boldsymbol{\vartheta},h}\).
Since $\rho_w(D)\ge c(D\vee w)$ and $f_{\boldsymbol{\vartheta}}\le CD$,
\[
 Q\left[
 \frac{P\|G_q\|^2}{\rho_w(D)^2}
 \right]
 \le Ch^{-(2q+1)}
 \int_{\mathcal S}
 \frac{D(s)f_{\boldsymbol{\vartheta}}(s)}{(D(s)\vee w)^2}\,ds
 \le Ch^{-(2q+1)},
\]
because the moving score support has uniformly bounded length.

The derivatives $\rho_w''$ and $\rho_w'''$ are supported on
$w/2<d<2w$. The endpoint geometry in Lemma~\ref{lem:b1-low-density} gives,
on this band,
\[
 q_c=O(w),\quad \partial_{\boldsymbol{\vartheta}}^j q_c=O(w),\quad
 \partial_{\boldsymbol{\vartheta}}^j d=O(w),\qquad j=1,2.
\]
Every quotient term in the exact first and second derivative formulas is
therefore $O(1)$. For example,
\[
 \frac{q_c\rho_w''(d)\dot d\dot d^\top}{\rho_w(d)^2}
 =O\left(\frac{w\cdot w^{-1}\cdot w^2}{w^2}\right)=O(1).
\]
The band has probability $O(w^{\gamma_*})$, proving (ii). On $d>2w$,
moving-index kernel expansions give level and first-derivative biases
$O(h^2)$ and $O(h)$; on $d\le2w$, endpoint compatibility and (ii) give
uniformly bounded derivatives on a set of probability $O(w^{\gamma_0})$.
Splitting these regions yields the $L_2$ bounds in (iii).

For $H=m(S_0)+\epsilon$, $E(\epsilon\mid \mathbf{V})=0$ removes the residual cross
term, leaving the integrated squared approximation and smoothing bias.
Differentiation gives
\[
 \ddot Q^{\rho,c}_{h,w}(0)
 =2E(\dot m^{\rho,c}_0\dot m_0^{\rho,c\top})
 -2E\{(H-m_0^{\rho,c})\ddot m_0^{\rho,c}\}.
\]
The first term tends to $2E(\mathbf{d}\mathbf{d}^\top)=2\mathbf{H}$; conditioning removes the
$\epsilon$ component of the second, and Cauchy--Schwarz with the bounds in
part~(iii) makes its remainder $o(1)$.

For (iv), differentiate the primitive blocks and apply the two-arm Hoeffding
decomposition. The entropy and first-projection envelopes give
$\sqrt{\ell_n/n_2}+\sqrt{\ell_n/n_1}$, while density cancellation,
decoupling, and the canonical maximal inequality give
$\sqrt{\ell_n/(n_2n_1h^{2q+1})}$.

The transition bound prevents the smooth ratio map from introducing a
power of $w^{-1}$ after random-evaluation integration. The population
approximation contributes $h+w^{\gamma_*/2}$, yielding
$\kappa^\rho_{q,n}$.

For deletion, decompose the removed row, column, and their intersection.
Density cancellation and the maximal inequality for the deleted classes give
\[
 n_2^{-1}+\{n_2\sqrt{n_1h^{2q+1}}\}^{-1},
 \qquad
 n_1^{-1}+\{n_1\sqrt{n_2h^{2q+1}}\}^{-1},
\]
and the intersection contributes
$(n_2n_1h^{q+1})^{-1}$. Summation gives
$\lambda^\rho_{q,n}$.
\end{proof}
\end{lemma}

\phantomsection\label{sec:profile-sample-scale-bridge}
To incorporate the candidate score scale, for $a\in[1/2,2]$ set
\[
 T_{\boldsymbol{\vartheta},a}(\mathbf{V})=\frac{S_{\boldsymbol{\vartheta}}(\mathbf{V})-\mu_{\boldsymbol{\vartheta}}}
 {a\sigma_{\boldsymbol{\vartheta}}},\qquad
 K_{\boldsymbol{\vartheta},a}(u,v)
 =K_{\bar h}\{T_{\boldsymbol{\vartheta},a}(u)-T_{\boldsymbol{\vartheta},a}(v)\}.
\]
Replacing $T_{\boldsymbol{\vartheta}}$ by $T_{\boldsymbol{\vartheta},a}$ defines
$m^{\rho,c}_{\boldsymbol{\vartheta},a}$ and $\widehat m^{\rho,c}_{\boldsymbol{\vartheta},a}$. Let
\[
 F_n(\boldsymbol{\vartheta},a)=\mathbb P_{21}
 [\{H-\widehat m^{\rho,c}_{\boldsymbol{\vartheta},a}(\mathbf{V})\}^2],\qquad
 Q^{\rho,c}(\boldsymbol{\vartheta},a)=P
 [\{H-m^{\rho,c}_{\boldsymbol{\vartheta},a}(\mathbf{V})\}^2].
\]
The reference criterion corresponds to $a=1$; sample standardization uses
$a=a_{\boldsymbol{\vartheta}}=\widehat\sigma_{\boldsymbol{\vartheta}}/\sigma_{\boldsymbol{\vartheta}}$.

\begin{lemma}[Uniform Local Curvature]\label{lem:b5-curvature-transfer}
Under the low-density score-geometry and kernel-process clauses of
Assumption~\ref{ass:E}, on a sufficiently small
fixed local neighbourhood,
\[
 \sup_{a\in[1/2,2]}\sup_{\|\boldsymbol{\vartheta}\|\le\delta}
 \|F_{n,\boldsymbol{\vartheta}\boldsymbol{\vartheta}}(\boldsymbol{\vartheta},a)
   -Q^{\rho,c}_{\boldsymbol{\vartheta}\boldsymbol{\vartheta}}(\boldsymbol{\vartheta},a)\|=o_p(1),
\]
and
\[
 \sup_{a\in[1/2,2]}
 \|Q^{\rho,c}_{\boldsymbol{\vartheta}\boldsymbol{\vartheta}}(0,a)-2\mathbf{H}\|=o(1).
\]
Consequently, after reducing $\delta$,
\[
 \inf_{a\in[1/2,2]}\inf_{\|\boldsymbol{\vartheta}\|\le\delta}
 \lambda_{\min}\{F_{n,\boldsymbol{\vartheta}\boldsymbol{\vartheta}}(\boldsymbol{\vartheta},a)\}
 \ge c_{\mathrm{curv}}>0
\]
with probability tending to one.

\begin{proof}
Use the exact empirical partial Hessian in
Section~\ref{sec:idx-profile-estimator}, holding $a$ fixed.
Lemma~\ref{lem:b4-profile-control} controls the first-derivative products
and residual--second-derivative term through its $q=0,1,2$ bounds; in
particular, $\kappa^\rho_{2,n}=o(1)$. Its transition bound handles terms
containing $\rho_w''$ or $\rho_w'''$, and the uniform law handles the
remaining derivative classes.
The scalar $a$ adds one compact finite-dimensional coordinate and preserves
the derivative-class moment and entropy orders; the $q=2$ canonical term is
$o(1)$ under \eqref{eq:admissible-rate-region}. Lemma~\ref{lem:b4-profile-control}
is uniform over bounded positive score rescalings, so
$\sup_a\|Q^{\rho,c}_{\boldsymbol{\vartheta}\boldsymbol{\vartheta}}(0,a)-2\mathbf{H}\|=o(1)$.
Positive definiteness of $\mathbf{H}$ and population-Hessian
equicontinuity yield uniform positive curvature on a sufficiently small
local neighbourhood; uniform empirical convergence gives the final display.
\end{proof}
\end{lemma}
\begin{lemma}[Sample Standardization and Profile-Criterion Transfer]\label{lem:b6-candidate-scale}
\begin{enumerate}[label=(\roman*)]
\item For a candidate $\boldsymbol{\beta}$, let $\widehat\mu_{\boldsymbol{\beta}}$ and
$\widehat\sigma_{\boldsymbol{\beta}}>0$ be the labelled pair-score mean and standard
deviation, and set
\[
 z=\frac{s-\widehat\mu_{\boldsymbol{\beta}}}{\widehat\sigma_{\boldsymbol{\beta}}},\qquad
 Z=\frac{S-\widehat\mu_{\boldsymbol{\beta}}}{\widehat\sigma_{\boldsymbol{\beta}}},\qquad
 h_{\boldsymbol{\beta}}=\widehat\sigma_{\boldsymbol{\beta}}\bar h.
\]
For the Gaussian kernel,
\[
 K_{h_{\boldsymbol{\beta}}}(S-s)
 =\widehat\sigma_{\boldsymbol{\beta}}^{-1}K_{\bar h}(Z-z),
\]
and hence
\[
 \bigl(\widehat d^S_{\boldsymbol{\beta},h_{\boldsymbol{\beta}}},
       \widehat g^S_{\boldsymbol{\beta},h_{\boldsymbol{\beta}}},
       \widehat q^S_{\boldsymbol{\beta},c,h_{\boldsymbol{\beta}}}\bigr)(s)
 =\widehat\sigma_{\boldsymbol{\beta}}^{-1}
 \bigl(\widehat d^{\mathrm{std}}_{\boldsymbol{\beta},\bar h},
       \widehat g^{\mathrm{std}}_{\boldsymbol{\beta},\bar h},
       \widehat q^{\mathrm{std}}_{\boldsymbol{\beta},c,\bar h}\bigr)(z).
\]
With $w^S_{\boldsymbol{\beta}}=\bar w/\widehat\sigma_{\boldsymbol{\beta}}$ and
\[
 \rho^S_{\boldsymbol{\beta}}(d)=\widehat\sigma_{\boldsymbol{\beta}}^{-1}
 \rho_{\bar w}(\widehat\sigma_{\boldsymbol{\beta}} d),
\]
the pointwise coordinate identities are
\[
 c_H+\frac{\widehat q^S_{\boldsymbol{\beta},c,h_{\boldsymbol{\beta}}}(s)}
 {\widehat d^S_{\boldsymbol{\beta},h_{\boldsymbol{\beta}}}(s)\vee w^S_{\boldsymbol{\beta}}}
 =c_H+\frac{\widehat q^{\mathrm{std}}_{\boldsymbol{\beta},c,\bar h}(z)}
 {\widehat d^{\mathrm{std}}_{\boldsymbol{\beta},\bar h}(z)\vee\bar w},
\]
\[
 c_H+\frac{\widehat q^S_{\boldsymbol{\beta},c,h_{\boldsymbol{\beta}}}(s)}
 {\rho^S_{\boldsymbol{\beta}}\{\widehat d^S_{\boldsymbol{\beta},h_{\boldsymbol{\beta}}}(s)\}}
 =c_H+\frac{\widehat q^{\mathrm{std}}_{\boldsymbol{\beta},c,\bar h}(z)}
 {\rho_{\bar w}\{\widehat d^{\mathrm{std}}_{\boldsymbol{\beta},\bar h}(z)\}}.
\]

Let $\boldsymbol{\beta}_2$ and $\boldsymbol{\beta}_1$ denote the arm components, so that
\[
 \boldsymbol{\beta}=(\boldsymbol{\beta}_2^\top,\boldsymbol{\beta}_1^\top)^\top,
\]
and write
\[
 S_{ij}(\boldsymbol{\beta})=\mathbf{X}_{2i}^\top\boldsymbol{\beta}_2+\mathbf{X}_{1j}^\top\boldsymbol{\beta}_1.
\]
If $\widehat\sigma_{\boldsymbol{\beta}}^2$ is the variance of the $n_2n_1$ labelled pair
scores computed with divisor $n_2n_1$, then, exactly,
\[
 \widehat\sigma_{\boldsymbol{\beta}}^2
 =\frac1{n_2}\sum_{i=1}^{n_2}
   (\mathbf{X}_{2i}^\top\boldsymbol{\beta}_2-\overline{\mathbf{X}}_2^\top\boldsymbol{\beta}_2)^2
 +\frac1{n_1}\sum_{j=1}^{n_1}
   (\mathbf{X}_{1j}^\top\boldsymbol{\beta}_1-\overline{\mathbf{X}}_1^\top\boldsymbol{\beta}_1)^2.
\]
Let $a_{\boldsymbol{\vartheta}}=\widehat\sigma_{\boldsymbol{\vartheta}}/\sigma_{\boldsymbol{\vartheta}}$. Under the
boundedness and sampling clauses of Assumption~\ref{ass:A}, the
single-index identification of Assumption~\ref{ass:C}, and the global
score-scale clause of Assumption~\ref{ass:E},
\[
 \sup_{\boldsymbol{\vartheta}\in\Theta_\delta}
 \left\|\partial_{\boldsymbol{\vartheta}}^j(a_{\boldsymbol{\vartheta}}-1)\right\|
 =O_p(n_L^{-1/2}),\qquad j=0,1,2.
\]
Moreover, for every fitting/evaluation pair of scores,
\[
 \frac{S_u-\widehat\mu_{\boldsymbol{\vartheta}}}{\widehat\sigma_{\boldsymbol{\vartheta}}}
 -\frac{S_v-\widehat\mu_{\boldsymbol{\vartheta}}}{\widehat\sigma_{\boldsymbol{\vartheta}}}
 =\frac{S_u-S_v}{\widehat\sigma_{\boldsymbol{\vartheta}}}.
\]
Thus the random centring mean has no separate first-order effect on the
kernel fit.

\item For the scale-extended Gaussian kernel blocks, let
$\mathcal K_{\boldsymbol{\vartheta},a}$ denote a primitive block in
population-standardized coordinates. For
$\mathcal I=\{(0,0),(1,0),(0,1),(0,2),(1,1)\}$, put
\[
 G_{q,r}=\partial_{\boldsymbol{\vartheta}}^q\partial_a^r
 \mathcal K_{\boldsymbol{\vartheta},a},
 \qquad (q,r)\in\mathcal I.
\]
Uniformly in its indices,
\[
 P\{|G_{q,r}(U;v)|^2\}
 \le C\bar h^{-(2q+1)}D_{\boldsymbol{\vartheta},a}(v).
\]
The same bound holds for the corresponding centred numerator blocks and
their once- and twice-deleted versions. With
\[
 r_{q,n}=\sqrt{\ell_n/n_2}+\sqrt{\ell_n/n_1}
 +\sqrt{\ell_n/(n_1n_2\bar h^{2q+1})},
\]
the associated scale-extended empirical derivative processes for $q=0,1$
obey, for every numerator or denominator block needed below,
\[
 \sup_{\boldsymbol{\vartheta},a,v}
 \left|\{\mathbb P_{21}-P\}G_{q,r}(\boldsymbol{\vartheta},a;v)\right|
 =O_p(r_{q,n}),
\]
where $(q,r)\in\mathcal I$. Under
\eqref{eq:admissible-rate-region}, $r_{1,n}/\bar w\to0$.
Consequently their errors are $o_p(\bar w)$ for $q=0,1$.
\item Uniformly for
$\boldsymbol{\vartheta}\in\Theta_\delta$ and $a\in[1/2,2]$,
\[
 \|\widehat m_a^{\rho,c}\|_{L_2(\mathbb P_{21})}=o_p(1),\qquad
 \|\widehat m_{\boldsymbol{\vartheta} a}^{\rho,c}\|_{L_2(\mathbb P_{21})}=o_p(1),
\]
and
\[
 \|\widehat m_{aa}^{\rho,c}\|_{L_2(\mathbb P_{21})}=O_p(1).
\]
Accordingly,
\[
 \sup_{\boldsymbol{\vartheta},a}|F_{n,a}(\boldsymbol{\vartheta},a)|=o_p(1),\qquad
 \sup_{\boldsymbol{\vartheta},a}\|F_{n,\boldsymbol{\vartheta} a}(\boldsymbol{\vartheta},a)\|=o_p(1),
\]
and
\[
 \sup_{\boldsymbol{\vartheta},a}|F_{n,aa}(\boldsymbol{\vartheta},a)|=O_p(1).
\]
\item For the sample-standardized criterion
$F_n^{\mathrm{ss}}(\boldsymbol{\vartheta})=F_n(\boldsymbol{\vartheta},a_{\boldsymbol{\vartheta}})$,
\[
 \nabla F_n^{\mathrm{ss}}(0)-F_{n,\boldsymbol{\vartheta}}(0,1)
 =o_p(n_L^{-1/2}),
\]
and, uniformly along every consistent local sequence
$\widetilde{\boldsymbol{\vartheta}}=o_p(1)$,
\[
 \nabla^2F_n^{\mathrm{ss}}(\widetilde{\boldsymbol{\vartheta}})=2\mathbf{H}+o_p(1).
\]
\item Let $\widehat{\boldsymbol{\beta}}^G$ be the exact global minimizer after it has
entered the locally strongly convex neighbourhood and let $\widetilde{\boldsymbol{\beta}}$ be a measurable
numerical solution in the same locally strongly convex basin. If either
\[
 F_n^{\mathrm{ss}}(\widetilde{\boldsymbol{\vartheta}})
 -F_n^{\mathrm{ss}}(\widehat{\boldsymbol{\vartheta}}^G)=o_p(b_n^2)
\]
or the projected profile score at $\widetilde{\boldsymbol{\vartheta}}$ is $o_p(b_n)$, then
\[
 \|\widetilde{\boldsymbol{\beta}}-\widehat{\boldsymbol{\beta}}^G\|=o_p(b_n).
\]
\end{enumerate}
\begin{proof}
The Gaussian scaling and centred-score identities give the two pointwise
coordinate identities. Writing the centred pair score as $A_i+B_j$, with
$\sum_iA_i=\sum_jB_j=0$, gives
\[
 \widehat\sigma_{\boldsymbol{\vartheta}}^2
 =\boldsymbol{\beta}_2(\boldsymbol{\vartheta})^\top\widehat{\boldsymbol{\Sigma}}_2\boldsymbol{\beta}_2(\boldsymbol{\vartheta})
  +\boldsymbol{\beta}_1(\boldsymbol{\vartheta})^\top\widehat{\boldsymbol{\Sigma}}_1\boldsymbol{\beta}_1(\boldsymbol{\vartheta}).
\]
Uniformly on the local neighbourhood, the difference between this empirical
quadratic form and its population counterpart, together with the corresponding
first two derivative differences, is
\[
 O_p(n_1^{-1/2}+n_2^{-1/2})=O_p(n_L^{-1/2}).
\]
Nondegeneracy of $\sigma_{\boldsymbol{\vartheta}}$ and the square-root chain rule then give
the stated scale-ratio rate. Set $a_0=a_{\boldsymbol{\vartheta}}|_{\boldsymbol{\vartheta}=0}$.

Put
\[
 u=\{T_{\boldsymbol{\vartheta},a}(U)-T_{\boldsymbol{\vartheta},a}(v)\}/\bar h.
\]
Since $\partial_a u=-u/a$, direct differentiation gives
\[
 \partial_a\{\bar h^{-1}K(u)\}
 =-a^{-1}\bar h^{-1}uK'(u),
\]
\[
 \partial_{aa}\{\bar h^{-1}K(u)\}
 =a^{-2}\bar h^{-1}\{2uK'(u)+u^2K''(u)\}.
\]
The $a$-derivatives are polynomial--Gaussian factors and introduce no
additional inverse power of $\bar h$; the technical entropy/envelope
conditions give the stated moment and process bounds uniformly over indices
and deletions, with (R) implying $r_{1,n}/\bar w\to0$. On the empirical
transition band, the resulting $o_p(\bar w)$ primitive errors imply
$d\asymp\bar w$, and endpoint compatibility gives the corresponding
$O_p(\bar w)$ derivative blocks. For example, with
$R=\rho_{\bar w}(\widehat d)$,
\[
 \frac{\widehat q\rho''(\widehat d)
       \widehat d_{\boldsymbol{\vartheta}}\widehat d_a}{R^2}
 =O_p(\bar w\bar w^{-1}\bar w^2/\bar w^2)=O_p(1).
\]
Lemma~\ref{lem:b4-profile-control} supplies the transition-band,
ordinary-ratio, below-floor, and deletion bounds for these blocks, giving
\[
 \|\widehat m_a-m_a\|_{L_2(\mathbb P_{21})}=O_p(\kappa^\rho_{0,n}),
 \qquad
 \|\widehat m_{\boldsymbol{\vartheta} a}-m_{\boldsymbol{\vartheta} a}\|_{L_2(\mathbb P_{21})}
 =O_p(\kappa^\rho_{1,n}),
\]
and \(\|\widehat m_{aa}\|_{L_2(\mathbb P_{21})}=O_p(1)\). Positive score
rescaling leaves the ideal profile invariant, and Lemma~\ref{lem:b4-profile-control}
gives, uniformly in $a$,
\[
 \|m_a^{\rho,c}\|_{L_2}=O(\bar h^2+\bar w^{\gamma_*/2}),\qquad
 \|m_{\boldsymbol{\vartheta} a}^{\rho,c}\|_{L_2}
 =O(\bar h+\bar w^{\gamma_*/2}),
\]
and $\|m_{aa}^{\rho,c}\|_{L_2}=O(1)$. Boundedness and empirical
Cauchy--Schwarz in the derivative formulas for
$F_n=\mathbb P_{21}(H-\widehat m)^2$ give part (iii).

By the score-coordinate identities in Lemma~\ref{lem:b6-candidate-scale}(i),
$F_n(\boldsymbol{\vartheta},a_{\boldsymbol{\vartheta}})$ is the sample-standardized criterion and
$F_n(\boldsymbol{\vartheta},1)$ its population-scale reference.
The total score satisfies
\[
 \nabla F_n^{\mathrm{ss}}(\boldsymbol{\vartheta})
 =F_{n,\boldsymbol{\vartheta}}(\boldsymbol{\vartheta},a_{\boldsymbol{\vartheta}})
  +F_{n,a}(\boldsymbol{\vartheta},a_{\boldsymbol{\vartheta}})\dot a_{\boldsymbol{\vartheta}}.
\]
At zero, integration in $a$ and parts (i)--(iii) give
\[
 F_{n,\boldsymbol{\vartheta}}(0,a_0)-F_{n,\boldsymbol{\vartheta}}(0,1)
 =(a_0-1)\int_0^1
 F_{n,\boldsymbol{\vartheta} a}\{0,1+t(a_0-1)\}\,dt
 =o_p(n_L^{-1/2}),
\]
and $F_{n,a}(0,a_0)\dot a_0=o_p(n_L^{-1/2})$.

The exact total Hessian is
\begin{align*}
 \nabla^2F_n^{\mathrm{ss}}
 ={}&F_{n,\boldsymbol{\vartheta}\boldsymbol{\vartheta}}
 +F_{n,\boldsymbol{\vartheta} a}\dot a^\top
 +\dot aF_{n,a\boldsymbol{\vartheta}}
 +F_{n,aa}\dot a\dot a^\top
 +F_{n,a}\ddot a,
\end{align*}
with all terms evaluated at $(\boldsymbol{\vartheta},a_{\boldsymbol{\vartheta}})$. By parts (i)--(iii)
and Lemma~\ref{lem:b5-curvature-transfer}, the leading term is
$2\mathbf{H}+o_p(1)$, while the remaining four terms are $o_p(1)$.

The curvature bound in part (iv) gives, with probability tending to one,
\[
 \frac{c_{\mathrm{curv}}}{2}
 \|\widetilde{\boldsymbol{\vartheta}}-\widehat{\boldsymbol{\vartheta}}^G\|^2
 \le F_n^{\mathrm{ss}}(\widetilde{\boldsymbol{\vartheta}})
      -F_n^{\mathrm{ss}}(\widehat{\boldsymbol{\vartheta}}^G),
\]
and the projected-score alternative follows from the mean-value expansion and
uniform Hessian invertibility. The local parametrisation transfers both conclusions
to $\boldsymbol{\beta}$.
\end{proof}
\end{lemma}

\subsection{Hard-Floor Estimation and Target Approximation}\label{app:B:hard-floor}
\phantomsection\label{sec:idx-activation}
For the point estimator, let
\[
 T^{H,c}_{h,w}(\boldsymbol{\vartheta})
 =Q\left\{c_H+\frac{q_{\boldsymbol{\vartheta},c}(\mathbf{V}^M)}
 {d_{\boldsymbol{\vartheta}}(\mathbf{V}^M)\vee w}\right\}.
\]
Let $T(\boldsymbol{\vartheta})=E\{E(H\mid S_{\boldsymbol{\vartheta}})\}$. By iterated expectations,
$T(\boldsymbol{\vartheta})=EH$ for every $\boldsymbol{\vartheta}$ for which the conditional mean is
defined; this identity does not require the single-index model to be correct
away from zero.
\begin{lemma}[Projection Stability under Index and Scale Perturbations]\label{lem:b7-projection-stability}
\begin{enumerate}[label=(\roman*)]
\item \emph{Reference convergence.}
Let
\[
 \phi_{X2}(\mathbf{x}_2)=E\{m(S^q(\mathbf{x}_2,\mathbf{X}_1))\}-\Delta,\qquad
 \phi_{X1}(\mathbf{x}_1)=E\{m(S^q(\mathbf{X}_2,\mathbf{x}_1))\}-\Delta,
\]
and, with $\epsilon=H-m(S)=H_c-\{m(S)-c_H\}$,
\[
 \phi_{Y2}(u_2)=E(\epsilon\mid O_2=u_2),\qquad
 \phi_{Y1}(u_1)=E(\epsilon\mid O_1=u_1).
\]
Under the low-density score-geometry clause of Assumption~\ref{ass:E},
\[
 \|\phi_{Xr,h,w}-\phi_{Xr}\|_2\to0,\qquad
 \|\phi_{Yr,h,w}-\phi_{Yr}\|_2\to0,\qquad r=1,2.
\]

\item \emph{Beta-direction stability.} A change in floor status satisfies
\[
 P\left[
 \mathbf 1\{d_{\boldsymbol{\vartheta},h}(S_{\boldsymbol{\vartheta}})\le w\}
 \ne\mathbf 1\{d_{0,h}(S_0)\le w\}
 \right]
 \le Crw^{\gamma_*-1}+o(rw^{\gamma_*-1}),
\]
uniformly for $\|\boldsymbol{\vartheta}\|\le r$. Moreover, uniformly for
$\|\boldsymbol{\vartheta}\|\le r$,
\[
 \|\phi_{\rho,h,w,\boldsymbol{\vartheta}}-\phi_{\rho,h,w,0}\|_{L_2}
 \le C\left[r+
 \{rw^{\gamma_*-1}\}^{1/2}+h^2+w^{\gamma_*/2}\right]
 =:\omega_n(r),
\]
for each $\rho\in\{X2,X1,Y2,Y1\}$. In particular,
$\omega_n(Cb_n)=o(1)$.

\item \emph{Beta-direction empirical equicontinuity.}
Let $L_n^H(\boldsymbol{\vartheta})$ be the sum of the four centred original-arm empirical
averages in the oracle hard-floor expansion. Then, for each fixed $C$,
\[
 \sup_{\|\boldsymbol{\vartheta}\|\le Cb_n}
 |L_n^H(\boldsymbol{\vartheta})-L_n^H(0)|
 =o_p(\nu^{-1/2}).
\]

\item \emph{Scale-perturbation stability.}
For a candidate local coordinate $\boldsymbol{\vartheta}$ and a positive scale multiplier
$\lambda$, let $\phi_{\rho,n}(\boldsymbol{\vartheta},\lambda)$ denote the finite-$(h,w)$
centred first projection for role
$\rho\in\{X2,X1,Y2,Y1\}$ obtained from the exact hard-floor four-component
decomposition on the raw-score path
\[
 h_{\boldsymbol{\vartheta},\lambda}=\lambda h_{\boldsymbol{\vartheta},1},\qquad
 w_{\boldsymbol{\vartheta},\lambda}=w_{\boldsymbol{\vartheta},1}/\lambda,
 \qquad
 h_{\boldsymbol{\vartheta},1}=\sigma_{\boldsymbol{\vartheta}}\bar h,\qquad
 w_{\boldsymbol{\vartheta},1}=\bar w/\sigma_{\boldsymbol{\vartheta}}.
\]
Here $\lambda=1$ is the population-SD reference scale and
$\lambda=a_{\boldsymbol{\vartheta}}=\widehat\sigma_{\boldsymbol{\vartheta}}/\sigma_{\boldsymbol{\vartheta}}$ is the
candidate-dependent sample scale. Let $P_{\rho,N_\rho}$ and $P_\rho$ be the empirical
and population laws of the original observations used by role $\rho$, and put
$b_n=n_L^{-1/2}$. Uniformly on the set below,
\[
 P\left[
 \mathbf 1\{d_{\boldsymbol{\vartheta},\lambda}\le w_{\boldsymbol{\vartheta},\lambda}\}
 \ne
 \mathbf 1\{d_{\boldsymbol{\vartheta},1}\le w_{\boldsymbol{\vartheta},1}\}
 \right]
 \le C|\lambda-1|w^{\gamma_*-1},
\]
and, for every role $\rho$,
\[
 \|\phi_{\rho,n}(\boldsymbol{\vartheta},\lambda)
       -\phi_{\rho,n}(\boldsymbol{\vartheta},1)\|_2
 \le C\left[|\lambda-1|
 +\{|\lambda-1|w^{\gamma_*-1}\}^{1/2}
 +h^2+w^{\gamma_*/2}\right]
 =:\omega_n^{\mathrm{sc}}(|\lambda-1|).
\]
In particular, $\omega_n^{\mathrm{sc}}(C b_n)\to0$ and
$\sqrt{\ell_n}\,\omega_n^{\mathrm{sc}}(C b_n)\to0$. Moreover, for every fixed
$C_{\boldsymbol{\vartheta}},C_\lambda<\infty$,
\[
 \sup_{\substack{\|\boldsymbol{\vartheta}\|\le C_{\boldsymbol{\vartheta}} b_n\\
                   |\lambda-1|\le C_\lambda b_n}}
 \left|
 (P_{\rho,N_\rho}-P_\rho)
 \{\phi_{\rho,n}(\boldsymbol{\vartheta},\lambda)
       -\phi_{\rho,n}(\boldsymbol{\vartheta},1)\}
 \right|
 =o_p(\nu^{-1/2}).
\]
\end{enumerate}
\begin{proof}
The population floored fit is uniformly bounded and converges to $m$ on
compact interior subintervals; the endpoint complement has vanishing
probability by Lemma~\ref{lem:b1-low-density}. Dominated convergence and
conditional-expectation contraction give the two outer limits. On the
interior, $f/d_h\to1$ and Gaussian approximate-identity arguments give the
labelled limit $H(u_2,u_1)-m\{S(u_2,u_1)\}$; Lemma~\ref{lem:b1-low-density}(v)
controls the endpoint complement.

For $\|\boldsymbol{\vartheta}\|\le r$, a floor crossing implies
\[
 |d_{0,h}(S_0)-w|
 \le |d_{\boldsymbol{\vartheta},h}(S_{\boldsymbol{\vartheta}})-d_{0,h}(S_0)|.
\]
The moving-denominator Lipschitz condition and the margin bound in
Lemma~\ref{lem:b1-low-density} therefore give the crossing probability in
part (ii). Density-cancelled differentiation gives an $O(r)$ increment on
the common high-density region, while the below-floor and crossing
regions contribute $O(h^2+w^{\gamma_*/2})$ and
$O\{(rw^{\gamma_*-1})^{1/2}\}$ in $L_2$, respectively. The local VC
maximal inequality then yields, for each armwise role,
\[
 O_p\left\{
 \frac{\sqrt{\ell_n}\,\omega_n(Cb_n)}{\sqrt{N_\rho}}
 \right\}=o_p(\nu^{-1/2}),
\]
because $N_\rho\ge cn_L$; summing the four roles proves part (iii).

For the scale path, the bounds in Lemma~\ref{lem:b6-candidate-scale} and the
endpoint-compatible score derivatives give, uniformly on the displayed local
set,
\[
 \sup_v|d_{\boldsymbol{\vartheta},\lambda}(v)-d_{\boldsymbol{\vartheta},1}(v)|
 \le C|\lambda-1|,
 \qquad
 |w_{\boldsymbol{\vartheta},\lambda}-w_{\boldsymbol{\vartheta},1}|
 \le Cw|\lambda-1|.
\]
Since $b_n/w\to0$ by \eqref{eq:admissible-rate-region}, Lemma~\ref{lem:b1-low-density}'s
margin bound yields
\[
 P\left[
 \mathbf 1\{d_{\boldsymbol{\vartheta},\lambda}\le w_{\boldsymbol{\vartheta},\lambda}\}
 \ne
 \mathbf 1\{d_{\boldsymbol{\vartheta},1}\le w_{\boldsymbol{\vartheta},1}\}
 \right]
 \le C|\lambda-1|w^{\gamma_*-1}.
\]

The noncrossing, crossing, and endpoint regions contribute, respectively,
$O(|\lambda-1|)$, $O\{(|\lambda-1|w^{\gamma_*-1})^{1/2}\}$, and
$O(h^2+w^{\gamma_*/2})$ in $L_2$, yielding the modulus in part (iv); (R)
gives $\sqrt{\ell_n}\,\omega_n^{\mathrm{sc}}(C_\lambda b_n)\to0$.

Let
\[
 \mathcal G_{\rho,n}=\left\{
 \phi_{\rho,n}(\boldsymbol{\vartheta},\lambda)
 -\phi_{\rho,n}(\boldsymbol{\vartheta},1):
 \|\boldsymbol{\vartheta}\|\le C_{\boldsymbol{\vartheta}} b_n,
 |\lambda-1|\le C_\lambda b_n\right\}.
\]
The bounded scalar scale coordinate and one-dimensional floor threshold preserve
the finite-dimensional/VC-type complexity of the parent classes in
Assumption~\ref{ass:E}; conditional-expectation contraction transfers their
entropy and $L_2$ modulus to the projection increments. With
$\mathbb G_{\rho,N_\rho}=\sqrt{N_\rho}(P_{\rho,N_\rho}-P_\rho)$, the local
maximal inequality gives
\[
 \sup_{g\in\mathcal G_{\rho,n}}
 |\mathbb G_{\rho,N_\rho}g|
 =O_p\left\{
 \sqrt{\ell_n}\,\omega_n^{\mathrm{sc}}(C_\lambda b_n)
 +\frac{\ell_n}{\sqrt{N_\rho}}
 \right\}
 =o_p(1).
\]
Dividing by $\sqrt{N_\rho}$ and summing the four roles gives the final
$o_p(\nu^{-1/2})$ bound.
\end{proof}

\end{lemma}

\begin{lemma}[Uniform Oracle Expansion for the Hard-Floor Estimator]\label{lem:b8-local-hard-floor}
\begin{enumerate}[label=(\roman*),leftmargin=*]
\item \emph{Population flatness.}
Under the relevant low-density score-geometry, kernel-process, and global-profile
clauses of Assumption~\ref{ass:E},
\[
 \sup_{\boldsymbol{\vartheta}\in\Theta_\delta}
 |T^{H,c}_{h,w}(\boldsymbol{\vartheta})-EH|
 \le C(h^2+w^{\gamma_*}).
\]
Therefore, for any random $\widetilde{\boldsymbol{\vartheta}}=O_p(b_n)$,
\[
 T^{H,c}_{h,w}(\widetilde{\boldsymbol{\vartheta}})-T^{H,c}_{h,w}(0)
 =o_p(\nu^{-1/2}).
\]

\item \emph{Local-uniform oracle remainder.}
Under the relevant low-density score-geometry, kernel-process, and global-profile
clauses of Assumption~\ref{ass:E}, for every fixed
$C<\infty$,
\[
 \sup_{\boldsymbol{\vartheta}\in\Theta_n(C)}
 \left|
 \widehat T^{H,c}_{h,w}(\boldsymbol{\vartheta})-T^{H,c}_{h,w}(\boldsymbol{\vartheta})
 -L_n^H(\boldsymbol{\vartheta})
 \right|
 =o_p(\nu^{-1/2}).
\]

\end{enumerate}

\begin{proof}
For each fixed $\boldsymbol{\vartheta}$, the unfloored conditional profile integrates to
$EH$. Uniform endpoint-compatible kernel bias contributes $O(h^2)$, while
the hard floor contributes $O(w^{\gamma_*})$ on the low-density set. Uniform
local conditions give the bound in (i), and (R) makes both terms
$o(\nu^{-1/2})$.

The exact four-component decomposition holds pointwise on \(\Theta_n(C)\).
Lemma~\ref{lem:b2-floor-expansion}, together with the uniform inverse-moment
and margin consequences in Lemma~\ref{lem:b1-low-density} and the primitive
process classes in the kernel-process clause of Assumption~\ref{ass:E}, gives
\[
 \sup_{\Theta_n(C)}\mathbb Q_M|R_{\boldsymbol{\vartheta},h,w}|
 =O_p(a_n^2w^{\gamma_*-2})=o_p(\nu^{-1/2}).
\]
For the bounded outer and integrated labelled kernels,
Lemma~\ref{lem:b3-collision-canonical}, with the primitive uniform process
classes in the kernel-process clause of Assumption~\ref{ass:E}, gives
\[
 \sup_{\Theta_n(C)}(|R_X^d|+|R_Y^d|)
 =O_p\!\left[\sqrt{\ell_n}
 \{(M_2M_1)^{-1/2}+(n_2n_1)^{-1/2}\}\right]
 =o_p(\nu^{-1/2}).
\]
The shared-index-overlap bounds of Lemma~\ref{lem:b3-collision-canonical},
combined with the same primitive entropy/envelope conditions, give
\[
 \max_{j\le9}\sup_{\Theta_n(C)}|R^{\mathrm{lin}}_{4,j,\boldsymbol{\vartheta}}|
 =o_p(\nu^{-1/2}).
\]
Summing these three remainder bounds proves the expansion.
\end{proof}
\end{lemma}

\begin{lemma}[Approximation to the Target Functional]\label{lem:b9-target-transfer}
\begin{enumerate}[label=(\roman*),leftmargin=*]
\item \emph{Integrated smoothing bias.}
Under the low-density and target-transfer conditions in
Assumptions~\ref{ass:E} and~\ref{ass:D},
\[
 |\Delta_h-\Delta|\le Ch^2.
\]
The bound is uniform for $h'$ in any local multiplicative neighbourhood
$|h'/h-1|\le c<1$.

\item \emph{Floor and smoothing transfer.}
Let
\[
 \Delta^c_{h,w}
 =
 E\left\{
 c_H+\frac{g_h(S)-c_Hd_h(S)}{d_h(S)\vee w}
 \right\}.
\]
Under the true-index low-density and scientific-target transfer assumptions,
\[
 |\Delta^c_{h,w}-\Delta|
 \le C\{w^{\gamma_0}+h^2\}.
\]
If
\[
 h\asymp n_{\min}^{-\alpha},\quad
 w\asymp n_L^{-\eta},\quad
 \alpha>1/4,\quad \eta\gamma_0>1/2,
\]
then
\[
 \Delta^c_{h,w}-\Delta=o(\nu^{-1/2}).
\]

\end{enumerate}

\begin{proof}
Under Assumptions~\ref{ass:E} and~\ref{ass:D}, uniformly for sufficiently small $h$,
\[
 d_h(s)\asymp \{r(s)+h\}^{p-1}
\]
on fixed endpoint neighbourhoods, and
\[
 |d_h'(s)|\le C\{r(s)+h\}^{p-2}.
\]
On the remaining compact interior, $d_h$ is bounded away from zero and
$d_h'$ is bounded. Since the zero extension of $f$ is absolutely
continuous and $f(a)=f(b)=0$, convolution differentiation gives
\[
 d_h'=K_h*f'.
\]
Near the lower endpoint, split the convolution into a fixed lower-endpoint
neighbourhood and its complement. On the first part,
$|f'(a+x)|\le Cx^{p-2}$, and the Gaussian moment inequality gives
\[
 \int K_h(a+x-a-r)x^{p-2}\,dx\le C(r+h)^{p-2}.
\]
The complement is separated from the evaluation point by a fixed distance and is
exponentially small in $h^{-2}$, hence is absorbed by the displayed
polynomial bound. The upper endpoint is identical. Interior boundedness
follows from $f'\in L_\infty$ locally and the unit mass of $K_h$.

These endpoint bounds also give
\[
 \sup_{0<h<h_0}
 \int_a^b f(s)\frac{|d_h'(s)|}{d_h(s)}\,ds<\infty.
\]
On a fixed interior this follows from the positive lower bound for $d_h$ and
boundedness of $f$ and $d_h'$. Near an endpoint,
\[
 f(a+r)\frac{|d_h'(a+r)|}{d_h(a+r)}
 \le C\frac{r^{p-1}}{r+h}\le Cr^{p-2},
\]
which is integrable because $p\ge4$; the upper endpoint is the same.

For the standard Gaussian kernel,
\[
 \partial_sK_h(u-s)=\frac{u-s}{h^2}K_h(u-s).
\]
Differentiation under the integral therefore gives
\[
 \int_a^b K_h(u-s)(u-s)f(u)\,du=h^2d_h'(s).
\]

The exact identity is
\[
 \Delta_h-\Delta
 =
 \int_a^b\frac{f(s)}{d_h(s)}
 \int_a^bK_h(u-s)\{m(u)-m(s)\}f(u)\,du\,ds.
\]
Write $\Delta_h-\Delta=I_{1,h}+I_{2,h}$. Taylor's theorem gives
\[
 m(u)-m(s)=m'(s)(u-s)+R_m(u,s),
 \qquad |R_m(u,s)|\le C(u-s)^2.
\]
For the quadratic remainder, the true-index endpoint bound $f(s)/d_h(s)\le C$
implies
\begin{align*}
 |I_{2,h}|
 &\le C\int_a^b\int_a^b
 K_h(u-s)(u-s)^2f(u)\,du\,ds\\
 &\le Ch^2\int_a^bf(u)\,du
 =O(h^2),
\end{align*}
because the truncated inner Gaussian second moment is no larger than the
full second moment.

For the linear term,
\[
 I_{1,h}
 =h^2\int_a^bm'(s)f(s)\frac{d_h'(s)}{d_h(s)}\,ds.
\]
The integrated log-derivative bound and boundedness of $m'$ give
$|I_{1,h}|\le Ch^2$. Hence (i) holds uniformly on a fixed multiplicative
bandwidth neighbourhood. Lemma~\ref{lem:b1-low-density}'s population-floor
bound gives $|\Delta^c_{h,w}-\Delta_h|\le Cw^{\gamma_0}$; combining it with
the smoothing bound proves (ii), including its root-scale assertion.
\end{proof}
\end{lemma}

\begin{lemma}[Effect of Sample Standardization on the Hard-Floor Estimator]\label{lem:b10-hard-floor-transfer}
\textnormal{(i) Consistency level.}
For $a\in[1/2,2]$, let
\[
 \widehat m^{H,c}_{\boldsymbol{\vartheta},a}(v)
 =c_H+
 \frac{\widehat q_{\boldsymbol{\vartheta},a,c}(v)}
      {\widehat d_{\boldsymbol{\vartheta},a}(v)\vee\bar w},
 \qquad
 \widehat T^{H,c}_{h,w}(\boldsymbol{\vartheta},a)
 =\mathbb Q_M\widehat m^{H,c}_{\boldsymbol{\vartheta},a}(\mathbf{V}^M).
\]
Here $a=1$ is the population-standard-deviation reference estimator of
Lemma~\ref{lem:b8-local-hard-floor},
whereas $a=a_{\boldsymbol{\vartheta}}=\widehat\sigma_{\boldsymbol{\vartheta}}/\sigma_{\boldsymbol{\vartheta}}$ gives the
hard-floor estimator with candidate-specific sample standardization.
For every fixed $C<\infty$,
\[
 \sup_{\|\boldsymbol{\vartheta}\|\le Cb_n}
 \left|
 \widehat T^{H,c}_{h,w}(\boldsymbol{\vartheta},a_{\boldsymbol{\vartheta}})
 -\widehat T^{H,c}_{h,w}(\boldsymbol{\vartheta},1)
 \right|=o_p(1),
 \qquad b_n=n_L^{-1/2}.
\]
Part (i) gives consistency; parts (ii)--(iii) give the root-scale transfers.
\begin{proof}[Proof of part (i)]
Lemma~\ref{lem:b6-candidate-scale} gives
\[
 \sup_{\boldsymbol{\vartheta}\in\Theta_\delta}|a_{\boldsymbol{\vartheta}}-1|=O_p(b_n).
\]
On the event that this supremum is at most $1/2$, the mean-value path lies in
$[1/2,2]$, and the $(q=0,r=1)$ scale-derivative bound in
Lemma~\ref{lem:b6-candidate-scale} gives
\[
 \sup_{\boldsymbol{\vartheta}\in\Theta_\delta,\,a\in[1/2,2],\,v}
 \left\{
 |\partial_a\widehat q_{\boldsymbol{\vartheta},a,c}(v)|
 +|\partial_a\widehat d_{\boldsymbol{\vartheta},a}(v)|
 \right\}=O_p(1).
\]
The mean-value theorem yields, uniformly for
$\|\boldsymbol{\vartheta}\|\le Cb_n$,
\[
 \sup_v\left\{
 |\widehat q_{\boldsymbol{\vartheta},a_{\boldsymbol{\vartheta}},c}(v)
       -\widehat q_{\boldsymbol{\vartheta},1,c}(v)|
 +|\widehat d_{\boldsymbol{\vartheta},a_{\boldsymbol{\vartheta}}}(v)
       -\widehat d_{\boldsymbol{\vartheta},1}(v)|
 \right\}=O_p(b_n).
\]
Because the centred mark is bounded, the elementary hard-floor inequality
\[
 \left|\frac{q_1}{d_1\vee\bar w}
       -\frac{q_2}{d_2\vee\bar w}\right|
 \le C\frac{|q_1-q_2|+|d_1-d_2|}{\bar w}
\]
holds without any lower bound on $d_1$ or $d_2$. Hence
\[
 \sup_{\|\boldsymbol{\vartheta}\|\le Cb_n}\sup_v
 \left|\widehat m^{H,c}_{\boldsymbol{\vartheta},a_{\boldsymbol{\vartheta}}}(v)
       -\widehat m^{H,c}_{\boldsymbol{\vartheta},1}(v)\right|
 =O_p(b_n/\bar w)=o_p(1),
\]
by \eqref{eq:admissible-rate-region}; the full outer average is bounded by the
same supremum.
\end{proof}
\textnormal{(ii) Oracle first-order equivalence under sample standardization.}
Suppress the group index and let $\sigma_S$ and $\widehat\sigma_S$ denote,
respectively, the population and empirical labelled-sample standard deviations
of the pair score at the true index. Put
\[
 a_0=\widehat\sigma_S/\sigma_S,
 \qquad \bar h\asymp n_{\min}^{-\alpha},
 \qquad \bar w\asymp n_L^{-\eta},
\]
and, in population-score coordinates, define
\[
 h_0=\sigma_S\bar h,\qquad w_0=\bar w/\sigma_S,
 \qquad h_\lambda=\lambda h_0,\qquad w_\lambda=w_0/\lambda.
\]
Let $\widehat\Delta^{\mathrm{or},c}(\lambda)$ be the all-pair true-index
oracle statistic formed with the centred hard floor and the full outer average
at $(h_\lambda,w_\lambda)$, and let $\Delta^c_{h_\lambda,w_\lambda}$ be its
finite-$(h,w)$ population target. Thus $\lambda=1$ is the
population-scale reference statistic $\widehat\Delta_{h_0,w_0}^{\mathrm{or},c,\mathrm{ref}}$,
whereas $\lambda=a_0$ is exactly the Section~\ref{sec:methodology} sample-standardized
oracle estimator by Lemma~\ref{lem:b6-candidate-scale}(i).

For fixed $C<\infty$, let
\[
 \mathcal A_n(C)=\{\lambda>0:|\lambda-1|\le Cn_L^{-1/2}\}.
\]
For every $\varepsilon>0$, there exists a finite $C_\varepsilon$ such that
\[
 \limsup_{n\to\infty}P\{a_0\notin\mathcal A_n(C_\varepsilon)\}
 \le\varepsilon.
\]
For every fixed $C<\infty$, with
\[
 L_n=M_2^{-1}\sum_a\phi_{X2}(\mathbf{X}_{2a})
    +M_1^{-1}\sum_b\phi_{X1}(\mathbf{X}_{1b})
    +n_2^{-1}\sum_i\phi_{Y2}(O_{2i})
    +n_1^{-1}\sum_j\phi_{Y1}(O_{1j}),
\]
the uniform representation and target transfer are
\[
 \sup_{\lambda\in\mathcal A_n(C)}
 \left|\widehat\Delta^{\mathrm{or},c}(\lambda)
       -\Delta^c_{h_\lambda,w_\lambda}-L_n\right|
 =o_p(\nu^{-1/2}),
 \qquad
 \sup_{\lambda\in\mathcal A_n(C)}
 |\Delta^c_{h_\lambda,w_\lambda}-\Delta|
 =o(\nu^{-1/2}).
\]
Consequently,
\[
 \widehat\Delta^{\mathrm{or},c}
 -\widehat\Delta^{\mathrm{or},c,\mathrm{ref}}
 =o_p(\nu^{-1/2}).
\]
\textnormal{(iii) Local candidate root scale.}
Suppress the group index and put
$b_n=n_L^{-1/2}$. For each candidate $\boldsymbol{\vartheta}$, let
$a_{\boldsymbol{\vartheta}}=\widehat\sigma_{\boldsymbol{\vartheta}}/\sigma_{\boldsymbol{\vartheta}}$ and write
\[
 h_{\boldsymbol{\vartheta},1}=\sigma_{\boldsymbol{\vartheta}}\bar h,\qquad
 w_{\boldsymbol{\vartheta},1}=\bar w/\sigma_{\boldsymbol{\vartheta}},\qquad
 h_{\boldsymbol{\vartheta},\lambda}=\lambda h_{\boldsymbol{\vartheta},1},\qquad
 w_{\boldsymbol{\vartheta},\lambda}=w_{\boldsymbol{\vartheta},1}/\lambda.
\]
For every fixed $C<\infty$,
\[
 \sup_{\|\boldsymbol{\vartheta}\|\le Cb_n}
 \left|
 \widehat T^{H,c}_{h,w}(\boldsymbol{\vartheta},a_{\boldsymbol{\vartheta}})
 -\widehat T^{H,c}_{h,w}(\boldsymbol{\vartheta},1)
 \right|=o_p(\nu^{-1/2}).
\]
Here $\lambda=1$ is the population-SD reference statistic and
$\lambda=a_{\boldsymbol{\vartheta}}$ is the labelled-sample-SD statistic by
Lemma~\ref{lem:b6-candidate-scale}(i). Write
$T^{H,c}_{h,w}(\boldsymbol{\vartheta},\lambda)$ for the corresponding population hard-floor
average and $L_n^H(\boldsymbol{\vartheta},\lambda)$ for the sum of its four finite-$(h,w)$
first projections.
\begin{proof}[Proof of parts (ii)--(iii)]
For fixed $C_{\boldsymbol{\vartheta}},C_\lambda<\infty$, let
\[
 \mathcal J_n=\{(\boldsymbol{\vartheta},\lambda):
 \|\boldsymbol{\vartheta}\|\le C_{\boldsymbol{\vartheta}} b_n,\ |\lambda-1|\le C_\lambda b_n\}.
\]
 The scale path has ratios $h_{\boldsymbol{\vartheta},\lambda}/h_{\boldsymbol{\vartheta},1}=\lambda$
 and $w_{\boldsymbol{\vartheta},\lambda}/w_{\boldsymbol{\vartheta},1}=\lambda^{-1}$.
 Lemmas~\ref{lem:b2-floor-expansion}--\ref{lem:b3-collision-canonical} and
 Lemma~\ref{lem:b8-local-hard-floor} give, uniformly on $\mathcal J_n$,
\[
 \sup_{\mathcal J_n}
 |\widehat T^{H,c}_{h,w}(\boldsymbol{\vartheta},\lambda)
 -T^{H,c}_{h,w}(\boldsymbol{\vartheta},\lambda)-L_n^H(\boldsymbol{\vartheta},\lambda)|
 =o_p(\nu^{-1/2}).
\]

For the projections, use the exact decomposition
\[
 \begin{aligned}
 L_n^H(\boldsymbol{\vartheta},\lambda)-L_n
 ={}&[L_n^H(\boldsymbol{\vartheta},\lambda)-L_n^H(\boldsymbol{\vartheta},1)]\\
 &+[L_n^H(\boldsymbol{\vartheta},1)-L_n^H(0,1)]
   +[L_n^H(0,1)-L_n].
 \end{aligned}
\]
The three brackets are $o_p(\nu^{-1/2})$ by
Lemma~\ref{lem:b7-projection-stability}(iv), parts (ii)--(iii), and part (i),
respectively.

At $\boldsymbol{\vartheta}=0$, Lemma~\ref{lem:b9-target-transfer} and
Assumption~\ref{ass:D} give the true-index bound
$C(h_\lambda^2+w_\lambda^{\gamma_0})$. Positive score rescaling and
Lemma~\ref{lem:b8-local-hard-floor} give
\[
 \sup_{\mathcal J_n}|T^{H,c}_{h,w}(\boldsymbol{\vartheta},\lambda)-EH|
 \le C\sup_{\mathcal J_n}
 \{h_{\boldsymbol{\vartheta},\lambda}^2+w_{\boldsymbol{\vartheta},\lambda}^{\gamma_*}\}
 =o(\nu^{-1/2}).
\]
The shared representation then gives
\[
 \sup_{\mathcal J_n}
 |\widehat T^{H,c}_{h,w}(\boldsymbol{\vartheta},\lambda)-EH-L_n|
 =o_p(\nu^{-1/2}),
\]
whose restriction at the true direction gives part (ii).

For localization at the true direction, every \(\varepsilon>0\) admits, by
Lemma~\ref{lem:b6-candidate-scale}, a finite
$C_\varepsilon$ with
\(\limsup_nP\{a_0\notin\mathcal A_n(C_\varepsilon)\}\le\varepsilon\).
For \(\delta>0\),
\[
\begin{split}
 P\{\sqrt\nu|\widehat\Delta^{\mathrm{or},c}(a_0)
                 -\widehat\Delta^{\mathrm{or},c}(1)|>\delta\}
 &\le P\{a_0\notin\mathcal A_n(C_\varepsilon)\}\\
 &\quad+P\{\sqrt\nu\sup_{\mathcal A_n(C_\varepsilon)}
 |\widehat\Delta^{\mathrm{or},c}(\lambda)
       -\widehat\Delta^{\mathrm{or},c}(1)|>\delta\}.
\end{split}
\]
The second probability vanishes by the uniform representation; letting
$\varepsilon\downarrow0$ proves part (ii).

The same localization argument, uniformly over
\(\|\boldsymbol{\vartheta}\|\le Cb_n\), gives, for every \(\varepsilon>0\), a finite
$C_{\lambda,\varepsilon}$ such that
\[
 \limsup_nP\{\sup_{\Theta_\delta}|a_{\boldsymbol{\vartheta}}-1|
                  >C_{\lambda,\varepsilon}b_n\}\le\varepsilon.
\]
For \(\mathcal J_{n,\varepsilon}=\mathcal J_n(C,C_{\lambda,\varepsilon})\),
\[
\begin{split}
 P\{\sqrt\nu\sup_{\|\boldsymbol{\vartheta}\|\le Cb_n}
 |\widehat T^{H,c}(\boldsymbol{\vartheta},a_{\boldsymbol{\vartheta}})
       -\widehat T^{H,c}(\boldsymbol{\vartheta},1)|>\delta\}
 &\le P\{\sup_{\Theta_\delta}|a_{\boldsymbol{\vartheta}}-1|
                 >C_{\lambda,\varepsilon}b_n\}\\
 &\quad+P\{\sqrt\nu\sup_{\mathcal J_{n,\varepsilon}}
 |\widehat T^{H,c}(\boldsymbol{\vartheta},\lambda)
       -\widehat T^{H,c}(\boldsymbol{\vartheta},1)|>\delta\}.
\end{split}
\]
The second probability vanishes by the uniform representation, and the
localization bound proves part (iii); Lemma~\ref{lem:b6-candidate-scale}(i)
identifies the random-scale statistics with the feasible estimators.
\end{proof}
\end{lemma}

\endgroup
\makeatletter
\begingroup
\def\@seccntformat#1{Appendix~\csname the#1\endcsname.\quad}
\section{Proofs of Main Results}\label{app:C}
\endgroup
\makeatother

The proofs below use the notation and auxiliary results of Appendices~\ref{app:A}--\ref{app:B}.

\subsection{Point-estimation results}\label{app:C-point}

\paragraph{Proof of Proposition~\ref{prop:profile-consistency}}
\label{proof:profile-consistency}

\begin{proof}
For part (i), suppress $k$. Boundedness in Assumption~\ref{ass:A}, $|H_c|\le C_c$, and
the stabilizer inequality
\[
 \rho_w(d)\ge c_\rho(d\vee w),
\]
give, by nonnegativity of the kernel,
\[
 |\widehat m^{\rho,c}_{\boldsymbol{\beta}}(s)|
 \le |c_H|+C_c\frac{\widehat d_{\boldsymbol{\beta}}(s)}
 {c_\rho\{\widehat d_{\boldsymbol{\beta}}(s)\vee w\}}
 \le |c_H|+C_c/c_\rho.
\]
The squared-residual map is uniformly Lipschitz. The global
Glivenko--Cantelli/profile condition and the finite-$(h,w)$ population
approximation in Assumption~\ref{ass:E}(c) give
\begin{equation}\label{eq:proof-global-uniform}
\begin{aligned}
 \sup_{\boldsymbol{\beta}\in\mathcal B}|\widehat Q^{\rho,c}(\boldsymbol{\beta})-Q(\boldsymbol{\beta})|
 &\le \sup_{\boldsymbol{\beta}\in\mathcal B}
 |\widehat Q^{\rho,c}(\boldsymbol{\beta})-Q^{\rho,c}_{h,w}(\boldsymbol{\beta})|\\
 &\quad+\sup_{\boldsymbol{\beta}\in\mathcal B}
 |Q^{\rho,c}_{h,w}(\boldsymbol{\beta})-Q(\boldsymbol{\beta})|
 =o_p(1).
\end{aligned}
\end{equation}
The bounded-covariate, fixed-dimensional law also gives
\(\Pr\{\inf_{\boldsymbol{\beta}\in\mathcal B}\widehat\sigma_{\boldsymbol{\beta}}>c_\sigma/2\}\to1\),
so the sample-standardized criterion is well defined with probability tending
to one. For every candidate $\boldsymbol{\beta}$,
\[
 q_{\boldsymbol{\beta},c}(s)=d_{\boldsymbol{\beta}}(s)\{m_{\boldsymbol{\beta}}(s)-c_H\},
 \qquad
 Q_c(\boldsymbol{\beta})=Q(\boldsymbol{\beta}).
\]
Let $\widehat{\boldsymbol{\beta}}^G$ be an exact global minimizer. Assumption~\ref{ass:C}
and compactness give, for each $\epsilon>0$, a $c_\epsilon>0$ with
$\inf_{\|\boldsymbol{\beta}-\boldsymbol{\beta}_0\|\ge\epsilon}\{Q(\boldsymbol{\beta})-Q(\boldsymbol{\beta}_0)\}\ge c_\epsilon$.
On the event in \eqref{eq:proof-global-uniform} with error below
$c_\epsilon/3$, every exact global minimizer lies in the $\epsilon$-ball;
hence $\widehat{\boldsymbol{\beta}}^G\xrightarrow{p}\boldsymbol{\beta}_0$.

For an approximate numerical solution $\widetilde{\boldsymbol{\beta}}$ satisfying the
stated tolerance, Lemma~\ref{lem:b6-candidate-scale}(v) gives local curvature
and numerical transfer. If the objective gap is
$o_p(b_n^2)$, local strong convexity yields
\[
 c\|\widetilde{\boldsymbol{\vartheta}}-\widehat{\boldsymbol{\vartheta}}^G\|^2
 \le
 \widehat Q^{\rho,c}(\widetilde{\boldsymbol{\vartheta}})
 -\widehat Q^{\rho,c}(\widehat{\boldsymbol{\vartheta}}^G)
 =o_p(b_n^2).
\]
If instead the projected score is $o_p(b_n)$, the mean-value expansion of the
score and invertibility of the same Hessian give the identical conclusion.
Thus $\|\widetilde{\boldsymbol{\beta}}-\widehat{\boldsymbol{\beta}}^G\|=o_p(b_n)$, in particular
$\widetilde{\boldsymbol{\beta}}\xrightarrow{p}\boldsymbol{\beta}_0$.
\label{proof:beta-expansion}
For part (ii), we first derive the profile-score expansion at the population
reference scale. At $\boldsymbol{\vartheta}=0$, write
\[
 m_0(\mathbf{V})=m(S_0),\qquad \epsilon=H-m(S_0),\qquad
 \mathbf{d}(\mathbf{V})=m'(S_0)\mathbf{R}^\top\{\mathbf{V}-E(\mathbf{V}\mid S_0)\}.
\]
Equivalently,
$\epsilon=H_c-\{m(S_0)-c_H\}=H-m(S_0)$, so centring does not alter
the leading residual.
Let
\[
 \eta_0=\widehat m_0^{\rho,c}-m_0^{\rho,c},\qquad
 \eta_1=\dot{\widehat m}_0^{\rho,c}-\dot m_0^{\rho,c}.
\]
For an evaluation pair $q=(i,j)$, let superscript `$-$' denote the fit with row
$i$ and column $j$ deleted, and write
\[
 L_r=\eta_r-\eta_r^-,\qquad r=0,1.
\]
Using these deleted fits, the smoothly stabilized profile score satisfies
\begin{align}
 -\frac12\dot{\widehat Q}^{\rho,c}(0)
 ={}&\mathbb P_{21}(\epsilon \mathbf{d})+B_{h,w}
 +T_{1n}-T_{2n}-T_{3n},                                  \label{eq:idx-score-split}
\end{align}
where
\[
 |B_{h,w}|=O(h^2+w^{\gamma_*})+o_p(b_n),
\]
and, up to terms of that deterministic order,
\begin{align*}
 T_{1n}&=\mathbb P_{21}(\epsilon\eta_1^-)
          +\mathbb P_{21}(\epsilon L_1),\\
 T_{2n}&=\mathbb P_{21}(\mathbf{d}\eta_0^-)
          +\mathbb P_{21}(\mathbf{d}L_0),\\
 T_{3n}&=\mathbb P_{21}(\eta_0^-\eta_1^-)
 +\mathbb P_{21}(\eta_0^-L_1)
 +\mathbb P_{21}(L_0\eta_1^-)
 +\mathbb P_{21}(L_0L_1).
\end{align*}

To control the resulting covariance terms, let
\[
 e_{q,n}=\kappa^\rho_{q,n}+\lambda^\rho_{q,n},\qquad
 \Lambda_{q,n}=\lambda^\rho_{q,n},\qquad q=0,1.
\]
Then the two deleted averages have mean zero and
\begin{align*}
 E\|\mathbb P_{21}(\epsilon\eta_1^-)\|^2
 \le C\{&b_n^2e_{1,n}^2+\Lambda_{1,n}e_{1,n}
 +\Lambda_{1,n}^2+(n_2n_1)^{-1}e_{1,n}^2\},\\
 E\|\mathbb P_{21}(\mathbf{d}\eta_0^-)\|^2
 \le C\{&b_n^2e_{0,n}^2+\Lambda_{0,n}e_{0,n}
 +\Lambda_{0,n}^2+(n_2n_1)^{-1}e_{0,n}^2\}.
\end{align*}

Applying the two-arm Hoeffding decomposition to the leading score term, define
\[
 \boldsymbol{\chi}(O_2,O_1)=\epsilon \mathbf{d}(\mathbf{V}),\qquad
 \boldsymbol{\zeta}_2(O_2)=E(\boldsymbol{\chi}\mid O_2),\qquad
 \boldsymbol{\zeta}_1(O_1)=E(\boldsymbol{\chi}\mid O_1).
\]
Then
\[
 \dot{\widehat Q}^{\rho,c}(0)
 =-2\left\{
 \frac1{n_2}\sum_{i=1}^{n_2}\boldsymbol{\zeta}_2(O_{2i})
 +\frac1{n_1}\sum_{j=1}^{n_1}\boldsymbol{\zeta}_1(O_{1j})
 \right\}+o_p(b_n).
\]

For the deleted score, start from the exact identity
\[
 -\frac12\dot{\widehat Q}^{\rho,c}(0)
 =\mathbb P_{21}\{(H-\widehat m_0^{\rho,c})
                    \dot{\widehat m}_0^{\rho,c}\}.
\]
Add and subtract $m(S_0)$, $m_0^{\rho,c}$, $\mathbf{d}$, and
$\dot m_0^{\rho,c}$, then substitute $\eta_r=\eta_r^-+L_r$.
Population approximation terms are absorbed in $B_{h,w}$, and conditioning on
$\mathbf{V}$ removes cross terms containing $\epsilon$ times a population function of
$\mathbf{V}$; in particular,
\[
 \|m_0^{\rho,c}-m\|_2\|\dot m_0^{\rho,c}-\mathbf{d}\|_2
 =O\{(h^2+w^{\gamma_0/2})(h+w^{\gamma_0/2})\},
\]
and the direct integrated profile drift is $O(h^2+w^{\gamma_0})$, hence also
$O(h^2+w^{\gamma_*})$. This gives the displayed decomposition.

Expand each covariance double sum into the same pair, a shared treatment
unit only, a shared control unit only, and four distinct original units. The
first three classes have normalized proportions
$O((n_2n_1)^{-1})$, $O(n_2^{-1})$, and $O(n_1^{-1})$ and are bounded by
conditional Cauchy--Schwarz.

For four distinct units, replace both fits by the common double-deleted fit.
For the $\epsilon\eta_1^-$ term, conditional mean zero follows from
$E(\epsilon\mid \mathbf{V})=0$. For the $\mathbf{d}\eta_0^-$ term, the deleted level fit is a
function of the deleted fitting array and the scalar evaluation score, whereas
$E\{\mathbf{d}(\mathbf{V})\mid S_0\}=0$. The common double-deleted product therefore has mean
zero. Replacing a double-deleted fit by a once-deleted fit costs at least one
$L_2$ difference of order $\Lambda_{q,n}$. The stated bounds follow.

Under \eqref{eq:admissible-rate-region},
\begin{align*}
 &\kappa^\rho_{2,n}=o(1),\qquad
 \lambda^\rho_{0,n}+\lambda^\rho_{1,n}=o(b_n),\\
 &\lambda^\rho_{1,n}\kappa^\rho_{1,n}=o(b_n^2),\\
 &\kappa^\rho_{0,n}\kappa^\rho_{1,n}=o(b_n),\\
 &\kappa^\rho_{0,n}\lambda^\rho_{1,n}
 +\kappa^\rho_{1,n}\lambda^\rho_{0,n}
 +\lambda^\rho_{0,n}\lambda^\rho_{1,n}=o(b_n),\\
 &h^2+w^{\gamma_*}=o(b_n).
\end{align*}

The exact two-arm Hoeffding decomposition gives
\[
 \mathbb P_{21}\boldsymbol{\chi}
 =n_2^{-1}\sum_i\boldsymbol{\zeta}_2(O_{2i})
 +n_1^{-1}\sum_j\boldsymbol{\zeta}_1(O_{1j})
 +\mathbb P_{21}\boldsymbol{\chi}^d,
\]
with $E\|\mathbb P_{21}\boldsymbol{\chi}^d\|^2=O((n_2n_1)^{-1})=o(b_n^2)$.
These bounds make the deleted linear nuisance averages, reinsertion terms,
nuisance products, and profile drift all $o_p(b_n)$; the canonical pair term is
$o_p(b_n)$ by the two-arm Hoeffding variance. Substitution into
\eqref{eq:idx-score-split} gives the stated score expansion.

Lemma~\ref{lem:b6-candidate-scale} transfers the reference-scale score
expansion and the curvature result in Lemma~\ref{lem:b5-curvature-transfer}
to the criterion standardized by the candidate-specific sample scale. Its
exact first-order condition is
\begin{equation}\label{eq:proof-beta-foc}
 0=\dot{\widehat Q}^{\rho,c}(0)
 +\ddot{\widehat Q}^{\rho,c}(\widetilde{\boldsymbol{\vartheta}})
  \widehat{\boldsymbol{\vartheta}}_G
\end{equation}
for an intermediate local point \(\widetilde{\boldsymbol{\vartheta}}\). Hence
\[
 \dot{\widehat Q}^{\rho,c}(0)
 =-2\left\{n_2^{-1}\sum_i\boldsymbol{\zeta}_2(O_{2i})
          +n_1^{-1}\sum_j\boldsymbol{\zeta}_1(O_{1j})\right\}+o_p(b_n)
\]
and
\[
 \ddot{\widehat Q}^{\rho,c}(\widetilde{\boldsymbol{\vartheta}})
 =2\mathbf{H}+o_p(1).
\]

Uniform curvature then yields the local solution expansion. Solving
\eqref{eq:proof-beta-foc} gives
\begin{equation}\label{eq:proof-theta-expansion}
 \widehat{\boldsymbol{\vartheta}}_G
 =\mathbf{H}^{-1}\left\{n_2^{-1}\sum_i\boldsymbol{\zeta}_2(O_{2i})
          +n_1^{-1}\sum_j\boldsymbol{\zeta}_1(O_{1j})\right\}+o_p(b_n).
\end{equation}
The two displayed sums are over independent original observations within
their respective arms; the canonical pair component is negligible by the
Hoeffding decomposition above. The ordinary two-arm triangular-array CLT gives the stated tangent-contrast
normality. Lemma~\ref{lem:b6-candidate-scale} transfers the same expansion
to any numerical solution satisfying the stated
objective-gap or projected-score condition.

The local-coordinate expansion transfers to the unit-sphere parameter through
\[
 \boldsymbol{\beta}(\boldsymbol{\vartheta})=\boldsymbol{\beta}_0+\mathbf{R}\boldsymbol{\vartheta}
 -\tfrac12\boldsymbol{\beta}_0\|\boldsymbol{\vartheta}\|^2+O(\|\boldsymbol{\vartheta}\|^3).
\]
Substitution of \eqref{eq:proof-theta-expansion} gives
\eqref{eq:beta-expansion}; its normal component is only \(O_p(b_n^2)\).
The leading residual is unchanged because
\(H_c-(m-c_H)=H-m\), so centring changes neither \(\boldsymbol{\zeta}_r\) nor
\(\mathbf{H}\).
\end{proof}

\paragraph{Proof of Theorem~\ref{thm:consistency}}
\label{proof:consistency}

\begin{proof}
Fix $k$ and suppress the group index in the local rates. By
Proposition~\ref{prop:profile-consistency},
$\widehat{\boldsymbol{\vartheta}}_k=O_p(b_{n,k})$. Write
$\widehat T_k(\boldsymbol{\vartheta},a)=\widehat T^{H,c}_{h_k,w_k}(\boldsymbol{\vartheta},a)$ and
$T_k(\boldsymbol{\vartheta},a)$ for its population counterpart. The feasible estimator
satisfies the single decomposition
\[
 \widehat\Delta_k-\Delta_k=A_{1k}+A_{2k}+A_{3k},
\]
where
\begin{align*}
 A_{1k}&=\widehat T_k(\widehat{\boldsymbol{\vartheta}}_k,a_{\widehat{\boldsymbol{\vartheta}}_k})
          -\widehat T_k(\widehat{\boldsymbol{\vartheta}}_k,1),\\
 A_{2k}&=\widehat T_k(\widehat{\boldsymbol{\vartheta}}_k,1)
          -T_k(\widehat{\boldsymbol{\vartheta}}_k,1),\\
 A_{3k}&=T_k(\widehat{\boldsymbol{\vartheta}}_k,1)-\Delta_k.
\end{align*}
By Lemma~\ref{lem:b10-hard-floor-transfer}(i), $A_{1k}=o_p(1)$;
the local-uniform expansion in Lemma~\ref{lem:b8-local-hard-floor}
and its bounded four-component leading term give $A_{2k}=o_p(1)$. Its population
flatness statement gives $A_{3k}=o_p(1)$, since $EH=\Delta_k$ and
$h_k^2+w_k^{\gamma_*}=o(1)$. Hence
$\widehat\Delta_k\xrightarrow{p}\Delta_k$. Fixed-$K$ deterministic aggregation gives
the same conclusion for Ave and W. The sample-size conditions in
Assumption~\ref{ass:A} also permit proportional unequal allocation on the
group-specific scales.
\end{proof}

\paragraph{Proof of Proposition~\ref{prop:oracle-decomp}}
\label{proof:oracle-decomp}

\begin{proof}
All fixed-$(h,w)$ quantities in this proof use the population-SD reference
score scale of Proposition~\ref{prop:oracle-decomp}.
Suppress $k$ and define, for a fixed evaluation score $s$,
\begin{align*}
 H^c_{h,w}(u_2,u_1;s)={}&
 \frac{K_h\{S(u_2,u_1)-s\}H_c(u_2,u_1)-q_c(s)}{T_w\{d(s)\}}\\
 &-J(s)\frac{q_c(s)
  [K_h\{S(u_2,u_1)-s\}-d(s)]}{d(s)^2}.
\end{align*}
It is centred under the product law of the fitting pair. The add-back
constant $c_H$ has no stochastic part. The exact fit identity in
Lemma~\ref{lem:b2-floor-expansion} is, for each evaluation score,
\[
 \widehat m^c_{h,w}(s)-m^c_{h,w}(s)
 =(\mathbb P_{21}-P)H^c_{h,w}(\cdot;s)+R^{ND,c}_{h,w}(s),
\]
where $R^{ND,c}_{h,w}$ denotes the exact nonlinear remainder in that
pointwise identity. Consequently,
\begin{equation}\label{eq:proof-oracle-start}
 \widehat\Delta^{\mathrm{or},c,\mathrm{ref}}_{h,w}-\Delta^c_{h,w}
 =(\mathbb Q_M-Q)m^c_{h,w}
 +\mathbb Q_M(\mathbb P_{21}-P)H^c_{h,w}
 +\mathbb Q_MR^{ND,c}.
\end{equation}

Apply Lemma~\ref{lem:b3-collision-canonical} to the first term. It gives the two
outer projections $\phi_{X2,h,w}$ and $\phi_{X1,h,w}$ and the outer
canonical remainder $R_X^d$. For the second term write
\[
 \mathbb P_{r,n_r}=P_r+\mathbb A_r,\qquad
 \mathbb Q_{r,M_r}=Q_r+\mathbb B_r,\qquad r=1,2.
\]
Because $P_2P_1H^c_{h,w}(\cdot;s)=0$ for every fixed evaluation point, all terms without
a fitting empirical operator vanish. The two terms with exactly one fitting
operator and no outer operator are
\[
 \mathbb A_2P_1Q_2Q_1H^c_{h,w},\qquad
 P_2\mathbb A_1Q_2Q_1H^c_{h,w},
\]
which are the labelled first projections
$\phi_{Y2,h,w}$ and $\phi_{Y1,h,w}$. The term with both fitting operators
and no outer operator is the labelled canonical remainder $R_Y^d$.

Every remaining operator product contains at least one $\mathbb A_r$ and at
least one $\mathbb B_r$. There are nine such terms: three with only
$\mathbb A_2$, three with only $\mathbb A_1$, and three with
$\mathbb A_2\mathbb A_1$; within each row the outer operator is
$\mathbb B_2$, $\mathbb B_1$, or $\mathbb B_2\mathbb B_1$. Their sum is
$R_4^{\mathrm{lin}}$. The fitting--outer index configurations fall into
$L_{00},L_{10},L_{01},L_{11}$ according to whether neither, one, or both
original-unit indices are shared; no independence is imposed between the
nested labelled and outer arrays. The last term in
\eqref{eq:proof-oracle-start} is
$R_4^{ND,c}=\mathbb Q_MR^{ND,c}$. Collecting the two outer
projections, two labelled projections, and four remainder blocks gives
\eqref{eq:exact-four-role} exactly.
\end{proof}

\paragraph{Proof of Theorem~\ref{thm:oracle-adapt-practical}}
\label{proof:oracle-adapt-practical}

\begin{proof}
Start from the exact reference-scale decomposition \eqref{eq:exact-four-role}.
Lemma~\ref{lem:b2-floor-expansion}, Lemma~\ref{lem:b3-collision-canonical},
Lemma~\ref{lem:b7-projection-stability}, and
Lemma~\ref{lem:b9-target-transfer} respectively control the nonlinear,
index-overlap, projection, and target-approximation terms at the required root scale.
Consequently, before the score-scale substitution,
\begin{equation*}
 \widehat\Delta_{k,h,w}^{\mathrm{or},c,\mathrm{ref}}-\Delta_k
 =L_{n,k}+o_p(\nu_k^{-1/2}),
\end{equation*}
where
\[
 L_{n,k}=M_{2k}^{-1}\sum_a\phi_{X2,k}(\mathbf{X}_{2ka})
       +M_{1k}^{-1}\sum_b\phi_{X1,k}(\mathbf{X}_{1kb})
       +n_{2k}^{-1}\sum_i\phi_{Y2,k}(O_{2ki})
       +n_{1k}^{-1}\sum_j\phi_{Y1,k}(O_{1kj}).
\]
This is the population-scale reference ALR. Lemma~\ref{lem:b10-hard-floor-transfer} and the exact hard-floor score-coordinate
identity in Lemma~\ref{lem:b6-candidate-scale}(i), with
$(h,w)=(h_0,w_0)$ in the reference raw-score coordinates, then give
\begin{equation*}
 \widehat\Delta_k^{\mathrm{or},c}
 -\widehat\Delta_{k,h,w}^{\mathrm{or},c,\mathrm{ref}}
 =o_p(\nu_k^{-1/2}).
\end{equation*}
Here $\widehat\Delta_k^{\mathrm{or},c}$ is the oracle statistic in
Section~\ref{sec:methodology} with sample score standardization. Together,
the two displays give \eqref{eq:oracle-alr}.

Proposition~\ref{prop:profile-consistency} gives
$\widehat{\boldsymbol{\vartheta}}_k=O_p(b_{n,k})$. On
$\|\widehat{\boldsymbol{\vartheta}}_k\|\le Cb_{n,k}$, replacing the population score scale
by its candidate-specific sample estimate gives the three-bracket decomposition
\begin{align*}
&\widehat T^{H,c}_{h,w}(\widehat{\boldsymbol{\vartheta}}_k,a_{\widehat{\boldsymbol{\vartheta}}_k})
 -\widehat T^{H,c}_{h,w}(0,a_0)\\
={}&
 [\widehat T^{H,c}_{h,w}(\widehat{\boldsymbol{\vartheta}}_k,a_{\widehat{\boldsymbol{\vartheta}}_k})
  -\widehat T^{H,c}_{h,w}(\widehat{\boldsymbol{\vartheta}}_k,1)]\\
&+
 [\widehat T^{H,c}_{h,w}(\widehat{\boldsymbol{\vartheta}}_k,1)
  -\widehat T^{H,c}_{h,w}(0,1)]\\
&+
 [\widehat T^{H,c}_{h,w}(0,1)
  -\widehat T^{H,c}_{h,w}(0,a_0)].
\end{align*}
The first bracket is $o_p(\nu_k^{-1/2})$ by part~(iii) of
Lemma~\ref{lem:b10-hard-floor-transfer}; the second is controlled by
Lemma~\ref{lem:b8-local-hard-floor} together with the beta-direction
parts of Lemma~\ref{lem:b7-projection-stability}; and the third is
$o_p(\nu_k^{-1/2})$ by part~(ii) of Lemma~\ref{lem:b10-hard-floor-transfer}.
The localization bounds in Lemma~\ref{lem:b10-hard-floor-transfer} apply to
all three brackets, whose sum gives \eqref{eq:beta-adaptivity}.

Equations \eqref{eq:oracle-alr} and \eqref{eq:beta-adaptivity} give
\eqref{eq:practical-alr}; centring leaves all four limiting influence
functions unchanged because $H_c-(m-c_H)=H-m$.
\end{proof}

\paragraph{Proof of Theorem~\ref{thm:oracle-adapt-practical} (parts (iv)--(v))}
\label{proof:group-clt}

\begin{proof}
Regroup the four leading averages in \eqref{eq:practical-alr} by original
unit. For \(a\le n_{rk}\), the summand is
\[
 \eta_{rka,k}=M_{rk}^{-1}\phi_{Xr,k}(\mathbf{X}_{rka})
              +n_{rk}^{-1}\phi_{Yr,k}(O_{rka}),
\]
whereas for \(a>n_{rk}\) it is
\(\eta_{rka,k}=M_{rk}^{-1}\phi_{Xr,k}(\mathbf{X}_{rka})\).
The original units are independent across arms and across distinct indices;
the first \(n_{rk}\) units alone carry both contributions. Therefore
\[
 \Var(L_{n,k})
 =\sum_{r=1}^2\left\{
 \frac{\Var(\phi_{Xr,k})}{M_{rk}}
 +\frac{\Var(\phi_{Yr,k})}{n_{rk}}
  +\frac{2}{M_{rk}}\Cov(\phi_{Xr,k},\phi_{Yr,k})\right\},
\]
which is \eqref{eq:group-var}. The covariance contribution is
\[
 2n_{rk}(M_{rk}^{-1})(n_{rk}^{-1})
 \Cov(\phi_{Xr,k},\phi_{Yr,k})
 =\frac{2}{M_{rk}}\Cov(\phi_{Xr,k},\phi_{Yr,k}).
\]
The coefficient \(2/M_{rk}\) is a nesting identity, not an independence
approximation.

Under correct specification, with
\(\epsilon_k=H_k-m_k(S_k)\),
\[
 E\{\phi_{Yr,k}(O_{rk})\mid \mathbf{X}_{rk}\}
 =E(\epsilon_k\mid \mathbf{X}_{rk})=0.
\]
Since \(\phi_{Xr,k}\) is \(\mathbf{X}_{rk}\)-measurable, this gives
\(\Cov(\phi_{Xr,k},\phi_{Yr,k})=0\). Thus the displayed variance is the
sum of the armwise variance contributions.

Bounded projections and the standing sample-ratio conditions give the original
unit Lyapunov bound
\[
 \sum_{r=1}^2\sum_{a=1}^{M_{rk}}E|\eta_{rka,k}|^3
 \le C\underline n_k^{-2},\qquad
 \underline n_k=\min(n_{1k},n_{2k}),
\]
and, by variance nondegeneracy,
\[
 \sigma_{\mathrm{SSL},k}^{-3}
 \sum_{r,a}E|\eta_{rka,k}|^3
 =O(\underline n_k^{-1/2})\longrightarrow0.
\]
This is a Lyapunov argument over independent original observations, never over
the \(n_{1k}n_{2k}\) labelled pairs. The feasible-estimator ALR and Slutsky's theorem
therefore prove group normality.

For the supervised two-sample estimator, conditioning
\[
 H_k-\Delta_k=\{m_k(S_k)-\Delta_k\}+\epsilon_k
\]
on an observation from arm \(r\) gives the first projection
\(\phi_{Xr,k}+\phi_{Yr,k}\). The same orthogonality removes its cross term,
so subtracting the semi-supervised variance from the supervised variance leaves
\[
 \sum_{r=1}^2
 \left(\frac1{n_{rk}}-\frac1{M_{rk}}\right)
 \Var\{\phi_{Xr,k}(\mathbf{X}_{rk})\}\ge0,
\]
because \(M_{rk}\ge n_{rk}\), proving \eqref{eq:group-variance-gain}.
\end{proof}

\paragraph{Proof of Corollary~\ref{cor:aggregate-clt}}
\label{proof:aggregate-clt}

\begin{proof}
Group independence and the group-normality statement in Theorem~\ref{thm:oracle-adapt-practical}(iv)
imply the joint Gaussian
limit for fixed $K$. Multiply each feasible group ALR by $a_{k,n}$ and sum. The
leading term is $\sum_ka_{k,n}L_{n,k}$. If
$r_{n,k}=o_p(\sigma_k)$ denotes the group remainder, put
$q_{k,n}=|a_{k,n}|\sigma_k/\sigma_{a,n}$. Then, because $K$ is fixed,
\[
 \frac{\left|\sum_k a_{k,n}r_{n,k}\right|}{\sigma_{a,n}}
 \le \sum_k q_{k,n}\frac{|r_{n,k}|}{\sigma_k}=o_p(1).
\]
Thus the aggregate remainder is $o_p(\sigma_{a,n})$.

The weighted original-unit summands are independent across groups and across
distinct original units, and their total variance is
$\sigma_{\mathrm{SSL},a,n}^2$. Put
\[
 q_{k,n}=\frac{|a_{k,n}|\sigma_{\mathrm{SSL},k}}
                 {\sigma_{\mathrm{SSL},a,n}},
 \qquad \sum_kq_{k,n}^2=1.
\]
By the original-unit Lyapunov bound established in the proof of the
group-normality part of Theorem~\ref{thm:oracle-adapt-practical} and
$\sum_kq_{k,n}^2=1$,
\[
 \frac{\sum_{k,r,a}E|a_{k,n}\eta_{rka,k}|^3}
      {\sigma_{\mathrm{SSL},a,n}^3}
 \le C\sum_kq_{k,n}^3\underline n_k^{-1/2}
 \le C\max_{k:a_{k,n}\ne0}\underline n_k^{-1/2}\to0.
\]
The Lyapunov theorem and Slutsky's theorem give the aggregate normal limit.

Setting \(a_{k,n}=1/K\) yields Ave; taking \(a_{k,n}\) to be the
prespecified scientific weights yields W. Group
independence makes both aggregate
variances sums of squared-weight group variances.
Multiplying \eqref{eq:group-variance-gain} by $a_{k,n}^2$ and summing proves
the aggregate comparison.

Finally, under the proportional unequal-allocation regime preceding
Corollary~\ref{cor:aggregate-clt}, the group ALRs retain their original-unit
normalizations on the group-specific scales. The same fixed-$K$
remainder and Lyapunov arguments therefore apply.
\end{proof}

\subsection{Perturbation inference}\label{app:C-perturbation}

\paragraph{Proof of Theorem~\ref{thm:ch4-validity}.}

\begin{proof}
We work under the conditions of Theorem~\ref{thm:ch4-validity}, including
Assumption~\ref{ass:perturbation} and the numerical-minimization condition
when applicable, and suppress the group index. We first obtain the
conditional two-arm decomposition and hard-floor linearisation. We then
control score standardization and index refitting, the fitting--outer overlap
generated by shared perturbation counts, and the replacement of feasible
projections by their population limits. These ingredients yield the
conditional linear representation in part (i), from which parts (ii)--(iii)
follow.

For a fixed group, write \(\mathcal D\) for the observed data and, for
\(r\in\{1,2\}\), write
\[
 P_{r,n_r}=\frac{1}{n_r}\sum_{i=1}^{n_r}\delta_{O_{ri}},
 \qquad
 U_{r,N_r}=\frac{1}{N_r}\sum_{j=1}^{N_r}\delta_{\mathbf{X}^U_{rj}},
\]
and
\[
 Q_{r,M_r}=\frac{n_r}{M_r}(P_{r,n_r})^X+
            \frac{N_r}{M_r}U_{r,N_r},
 \qquad M_r=n_r+N_r.
\]
The superscript \(^{*}\) denotes a conditional multinomial perturbation and
\(\xi=W-1\) a centred count. All conditional probability and expectation
statements below are with respect to the perturbation law given \(\mathcal D\).
The group-specific scales and conservative endpoint exponent are those in
\eqref{eq:bandwidth-rule}--\eqref{eq:floor-rule} and
Section~\ref{sec:theory}, with \((\alpha,\eta)\) satisfying
\eqref{eq:admissible-rate-region}. Under comparable allocation,
\(n_1\asymp n_2\asymp M_1\asymp M_2\asymp\nu\asymp n_L\); the proportional
unequal-allocation extension uses the regime stated in
Section~\ref{sec:aggregation}.

The four arm--stratum multinomial vectors are conditionally independent:
\[
 (W_{r1}^{L*},\ldots,W_{rn_r}^{L*})
 \sim\operatorname{Multinomial}(n_r;n_r^{-1},\ldots,n_r^{-1}),
\]
\[
 (W_{r1}^{U*},\ldots,W_{rN_r}^{U*})
 \sim\operatorname{Multinomial}(N_r;N_r^{-1},\ldots,N_r^{-1}).
\]
The same labelled count enters the labelled fitting operator and the labelled
component of the full outer covariate law; each replicate therefore fully
refits the score standardization, profile/index, hard-floor fit, and outer
average. The selected local minimizer is used; a numerical solution is
covered when it lies in the same locally strongly convex basin and satisfies
either
\[
 Q_n^*(\widetilde{\boldsymbol{\vartheta}}^*)-Q_n^*(\widehat{\boldsymbol{\vartheta}}^*)
 =o_{P^*,2}(n_L^{-1})
\]
or the projected-score condition \(o_{P^*,2}(n_L^{-1/2})\).

For a conditional multinomial representation, let
\(Z_1,\ldots,Z_m\) be one observed stratum and let
\[
 (W_1^*,\ldots,W_m^*)\sim\operatorname{Multinomial}
 (m;m^{-1},\ldots,m^{-1}).
\]
For every fixed function \(f\),
\[
 \sqrt m(P_m^*-P_m)f
 \ \stackrel{d^*}{=}\
 \frac{1}{\sqrt m}\sum_{b=1}^m
 \{f(Z_{I_b})-P_mf\},
\]
where \(I_1,\ldots,I_m\) are conditionally iid uniform on
\(\{1,\ldots,m\}\). Assumption~\ref{ass:perturbation} transfers the
VC/envelope conditions to these discrete conditional laws, yielding the
required fixed-order conditional maximal rates
\citep{PraestgaardWellner1993,vdVWellner1996,vdVWellner2011,Cheng2015}.

For a labelled pair kernel \(f(o_2,o_1)\), write
\(R_n=P_{2,n_2}P_{1,n_1}\), \(R_n^*=P_{2,n_2}^*P_{1,n_1}^*\), and set
\[
 f_{2,n}=P_{1,n_1}f-R_nf,\qquad
 f_{1,n}=P_{2,n_2}f-R_nf,
\]
\[
 f_n^\circ=f-P_{1,n_1}f-P_{2,n_2}f+R_nf.
\]
Expanding the two-arm multinomial product gives the exact identity
\[
 (R_n^*-R_n)f
 =(P_{2,n_2}^*-P_{2,n_2})f_{2,n}
 +(P_{1,n_1}^*-P_{1,n_1})f_{1,n}
 +C_n^*(f),
\]
where
\[
 C_n^*(f)=\frac{1}{n_2n_1}\sum_{i=1}^{n_2}\sum_{j=1}^{n_1}
 \xi_{2i}^{L*}\xi_{1j}^{L*}f_{n,ij}^\circ
\]
and
\[
 E^*\{C_n^*(f)^2\mid\mathcal D\}
 =\frac{1}{n_2^2n_1^2}\sum_{i,j}(f_{n,ij}^\circ)^2.
\]
For the full outer empirical law, the same product expansion gives
\[
 (Q_2^*Q_1^*-Q_2Q_1)m
 =\mathbb H_2^*Q_1m+Q_2\mathbb H_1^*m+
   \mathbb H_2^*\mathbb H_1^*m,
\]
where
\[
 \mathbb H_r^*=
 \frac{1}{M_r}\sum_{i=1}^{n_r}\xi_{ri}^{L*}\delta_{\mathbf{X}_{ri}}+
 \frac{1}{M_r}\sum_{j=1}^{N_r}\xi_{rj}^{U*}\delta_{\mathbf{X}_{rj}^U}.
\]
The zero totals of the centred counts and conditional independence give the
displayed first projections and doubly canonical remainder.

For conditional derivative-process control, let \(\mathcal G_{q,n}\),
\(q=0,1,2\), denote the local classes from Section~\ref{sec:theory} of
Gaussian translate and local-coordinate derivative kernel blocks, including
their deleted-row and deleted-column versions. Define
\[
 a_{q,n}=
 \left(\frac{\ell_n}{n_2}\right)^{1/2}
+\left(\frac{\ell_n}{n_1}\right)^{1/2}
+\left(\frac{\ell_n}{n_2n_1h^{2q+1}}\right)^{1/2},
\qquad \ell_n=1+\log(n_L/h).
\]
The conditional analogue of the Appendix B process bounds under
Assumption~\ref{ass:perturbation} yields, for every fixed finite \(p\ge2\),
\[
 \left(
 E^*\sup_{g\in\mathcal G_{q,n}}|(R_n^*-R_n)g|^p
 \,\middle|\,\mathcal D\right)^{1/p}=O_p(a_{q,n}),
\qquad q=0,1,2.
\]
The same orders hold for the finite collection of once- and twice-deleted
arrays used in the profile score, Hessian, and index-overlap arguments; the
two armwise first projections and the doubly canonical term give the three
components of \(a_{q,n}\).

To control conditional refitting and the floor, put
\(F_w(q,d)=c_H+q/(d\vee w)\), and, at an observed evaluation point, let
\(A=\widehat q_c^*-\widehat q_c\),
\(B=\widehat d^*-\widehat d\), \(q=\widehat q_c\), and \(d=\widehat d\). The
linear part is
\[
 \mathcal L_w(A,B;q,d)
 =\frac{A}{d\vee w}
 -1(d>w)\frac{qB}{d^2}.
\]
On the event
\(\|\widehat q_c^*-\widehat q_c\|_\infty
 \vee\|\widehat d^*-\widehat d\|_\infty\le\delta\), with
\(0<\delta\le w/4\), the remainder
\[
 \mathcal R_w=F_w(q+A,d+B)-F_w(q,d)-\mathcal L_w(A,B;q,d)
\]
is zero when \(d\le w-2\delta\), bounded by
\(C\delta/w\) when \(|d-w|\le2\delta\), and bounded by
\(C\delta^2/d^2\) when \(d\ge w+2\delta\). The margin and inverse-square
bounds therefore give, after averaging over the observed full-covariate law,
\[
 E^*\left[\left\{\widehat Q\,\mathcal R_w\right\}^2
 \,\middle|\,\mathcal D\right]
 =O_p\{a_{0,n}^4w^{2\gamma_*-4}\}+o_p(\nu^{-1}),
\]
where \(a_{0,n}\) is the \(q=0\) rate. Hence
\[
 E^*\{\nu(\widehat Q\,\mathcal R_w)^2\mid\mathcal D\}\xrightarrow{p}0.
\]
Indeed, above the threshold ordinary ratio algebra gives
\[
 -\frac{AB}{d(d+B)}+\frac{qB^2}{d^2(d+B)},
\]
while below the threshold both denominators are floored and the transition
band is controlled by its margin probability. The conditional
derivative-process bound controls \(\delta\), and (R) gives
\(a_{0,n}/w=o_p(1)\) and
\(\nu a_{0,n}^4w^{2\gamma_*-4}=o_p(1)\). The exponential-tail maximal
inequality and \(p_B(1/2-\eta)>1\) make the complement event negligible at
root scale.

Under the additive arm-score representation, the replicate score mean and
variance reduce to armwise empirical laws. Let \(\widehat\mu_s\) and
\(\widehat\sigma_s\) be the observed weighted-score mean and empirical
standard deviation. Uniformly on the local neighbourhood,
\begin{align*}
 \widehat\mu_s^*-\widehat\mu_s
 ={}&\frac{1}{n_2}\sum_{i=1}^{n_2}\xi_{2i}^{L*}
      \widehat\psi_{\mu,2}(O_{2i})
    +\frac{1}{n_1}\sum_{j=1}^{n_1}\xi_{1j}^{L*}
      \widehat\psi_{\mu,1}(O_{1j})
    +o_{P^*,2}(n_L^{-1/2}),\\
 \widehat v_s^*-\widehat v_s
 ={}&\frac{1}{n_2}\sum_{i=1}^{n_2}\xi_{2i}^{L*}
      \widehat\psi_{v,2}(O_{2i})
    +\frac{1}{n_1}\sum_{j=1}^{n_1}\xi_{1j}^{L*}
      \widehat\psi_{v,1}(O_{1j})
    +o_{P^*,2}(n_L^{-1/2}),
\end{align*}
where \(\widehat v_s=\widehat\sigma_s^2\) is the empirical variance and each
\(\widehat\psi\) is the corresponding armwise centred first projection.
Conditional on the high-probability local identification event established in
Section~\ref{sec:theory}, \(\widehat\sigma_s\) is bounded away from zero.
The square-root expansion is therefore
\begin{equation*}
 \widehat\sigma_s^*-\widehat\sigma_s
 =\frac{\widehat v_s^*-\widehat v_s}{2\widehat\sigma_s}
   +o_{P^*,2}(n_L^{-1/2}).
\end{equation*}
The original-score bandwidth and floor move with
\(\widehat\sigma_s^*\). With
\(\lambda=\log\widehat\sigma_s\), the relevant scale derivatives are
\[
 \partial_\lambda K(u)=-uK'(u),\qquad
 \partial_{\lambda\lambda}K(u)=uK'(u)+u^2K''(u).
\]
The score-mean increment cancels from score differences; the scale increment
enters the armwise projections through these derivative blocks and is not a new
target influence role.

The conditional profile score satisfies
\[
 \dot Q_n^*(\widehat{\boldsymbol{\vartheta}})
 =\frac{1}{n_2}\sum_{i=1}^{n_2}\xi_{2i}^{L*}\widehat{\boldsymbol{\zeta}}_{2i}
  +\frac{1}{n_1}\sum_{j=1}^{n_1}\xi_{1j}^{L*}\widehat{\boldsymbol{\zeta}}_{1j}
  +r_{S,n}^*,
\qquad
 E^*\{n_L\|r_{S,n}^*\|^2\mid\mathcal D\}\xrightarrow{p}0.
\]
For every fixed \(C\),
\[
 \sup_{\|\boldsymbol{\vartheta}-\widehat{\boldsymbol{\vartheta}}\|\le Cn_L^{-1/2}}
 \|\ddot Q_n^*(\boldsymbol{\vartheta})-2\widehat{\mathbf H}\|=o_{P^*}(1)
\]
in outer probability. The perturbed selected local minimizer is localized at
\(\|\widehat{\boldsymbol{\vartheta}}^*
-\widehat{\boldsymbol{\vartheta}}\|=O_{P^*}(n_L^{-1/2})\), and
\[
 \widehat{\boldsymbol{\vartheta}}^*-\widehat{\boldsymbol{\vartheta}}
 =-(2\widehat{\mathbf H})^{-1}\dot Q_n^*(\widehat{\boldsymbol{\vartheta}})
  +r_{\boldsymbol{\beta},n}^*,
\qquad
 E^*\{n_L\|r_{\boldsymbol{\beta},n}^*\|^2\mid\mathcal D\}\xrightarrow{p}0.
\]
Consequently,
\[
 \|\widehat{\boldsymbol{\beta}}^*-\widehat{\boldsymbol{\beta}}\|=O_{P^*}(n_L^{-1/2}).
\]
The conditional analogues of the Appendix B derivative, deletion,
transition-band, and canonical bounds give the score and Hessian assertions.
Subtracting the two first-order conditions and integrating the perturbed
Hessian along the joining segment gives
\[
 0=\dot Q_n^*(\widehat{\boldsymbol{\vartheta}})
  +\left\{\int_0^1\ddot Q_n^*
  \bigl(\widehat{\boldsymbol{\vartheta}}+t(\widehat{\boldsymbol{\vartheta}}^*-\widehat{\boldsymbol{\vartheta}})\bigr)
  \,dt\right\}(\widehat{\boldsymbol{\vartheta}}^*-\widehat{\boldsymbol{\vartheta}}).
\]
Local strong convexity and the conditional moments in
Assumption~\ref{ass:perturbation} transfer the remainder to the beta
parametrisation.

For the fully fitted hard-floor functional, write
\[
 G^*(\boldsymbol{\beta})=F_w(q_{\boldsymbol{\beta},c}^*,d_{\boldsymbol{\beta}}^*),\qquad
 G(\boldsymbol{\beta})=F_w(q_{\boldsymbol{\beta},c},d_{\boldsymbol{\beta}}),
\]
and decompose
\[
 G^*(\widehat{\boldsymbol{\beta}}^*)-G(\widehat{\boldsymbol{\beta}})
 =A+B+C,
\]
where
\[
 A=G^*(\widehat{\boldsymbol{\beta}})-G(\widehat{\boldsymbol{\beta}}),\quad
 B=G(\widehat{\boldsymbol{\beta}}^*)-G(\widehat{\boldsymbol{\beta}}),
\]
\[
 C=\{G^*(\widehat{\boldsymbol{\beta}}^*)-G^*(\widehat{\boldsymbol{\beta}})\}
   -\{G(\widehat{\boldsymbol{\beta}}^*)-G(\widehat{\boldsymbol{\beta}})\}.
\]
The fixed-beta term \(A\) belongs to the leading hard-floor expansion, while
\[
 E^*\left[
 \nu\left\{Q_M^*(B+C)\right\}^2\mid\mathcal D\right]\xrightarrow{p}0.
\]
Away from the floor, the nuisance-adaptivity identity established in
Section~\ref{sec:theory} makes the integrated first derivative orthogonal to
the score-direction increment. Across the floor threshold, the observed
margin and score-scale expansion imply, with
\(r_{\boldsymbol{\beta}}=\|\widehat{\boldsymbol{\beta}}^*
-\widehat{\boldsymbol{\beta}}\|\),
\[
 \widehat Q\{\text{floor status changes}\}
 \le C r_{\boldsymbol{\beta}} w^{\gamma_*-1}+\text{empirically negligible terms}.
\]
Since \(r_{\boldsymbol{\beta}}=O_{P^*}(n_L^{-1/2})\), (R) makes the
smooth-region, crossing, and joint beta--fitting switching contributions
\(o_p(1)\) at the root scale, including the scale increment in the crossing
width.

To retain the dependence induced by shared labelled counts, for one stratum
of size \(m\) the exact multinomial covariance is
\[
 E^*\xi_i=0,\qquad
 E^*(\xi_i\xi_j)=1(i=j)-m^{-1}.
\]
Thus, for a fixed vector \(a\),
\[
 \Var^*\left(\sum_i\xi_i a_i\right)
 =\sum_i(a_i-\bar a)^2,
\]
and, for independent strata and a rectangular array \(A\),
\[
 E^*(\xi_r^\top A\xi_s)^2
 =\operatorname{tr}(\Pi_rA\Pi_sA^\top)\le\|A\|_F^2.
\]

For the shared-labelled overlap, let \(R_{\mathrm{cross},n}^*\) collect the
fitting-by-outer terms after their first projections have been removed. Then
\[
 E^*\{(R_{\mathrm{cross},n}^*)^2\mid\mathcal D\}
 =O_p(r_{4,h,w}^2),
\]
where
\begin{align}
 r_{4,h,w}^2={}&\frac{1}{h}\left(
 \frac{1}{n_2M_2}+\frac{1}{n_2M_1}
 +\frac{1}{n_1M_2}+\frac{1}{n_1M_1}\right)\notag\\
 &+\frac{w^{-\kappa_2}}{M_2^2h}
  +\frac{w^{-\kappa_1}}{M_1^2h}
  +\frac{w^{-\kappa_{12}}}{M_2^2M_1^2h^2}
  +(M_2^{-1}+M_1^{-1})^2.\label{eq:C4-collision-bound}
\end{align}
Over the conservative local parameter range,
\[
 E^*\{\nu(R_{\mathrm{cross},n}^*)^2\mid\mathcal D\}\xrightarrow{p}0.
\]
Delete the fitting row and column and the labelled outer row and column. The
common deleted kernel is then reinserted as
\[
 H=H^{-ijab}+L_{10}+L_{01}+L_{11}.
\]
The deleted array is rectangularly centred, so every contraction against a row
or column sum vanishes under \(\Pi_m\). The remaining terms fall into five
classes:
\[
\begin{array}{lll}
\text{class} & \text{active counts and normalizer} & \text{conditional squared order}\\
\hline
\text{no shared index} & \xi_{ri}\xi_{sa}/(n_rM_s) & (n_rM_sh)^{-1}\\
\text{same-arm single} & \xi_{ri}^2/(n_rM_r) & w^{-\kappa_r}/(M_r^2h)\\
\text{opposite-arm} & \xi_{2i}\xi_{1a}/(n_2M_1)\ \text{or reverse} & (n_rM_sh)^{-1}\\
\text{double labelled} & \xi_{2i}^2\xi_{1j}^2/(n_2n_1M_2M_1)
 & w^{-\kappa_{12}}/(M_2^2M_1^2h^2)\\
\text{two-arm canonical} & \xi_2^\top(\Pi_{21}A)\xi_1/(m_2m_1)
 & O((M_2M_1)^{-1}).
\end{array}
\]
The same-arm and double-labelled diagonal means
\(E^*\xi_i^2=1-m^{-1}\) are retained, while centred fluctuations and
reinsertions use the multinomial contraction and the
\(L_{10},L_{01},L_{11}\) envelopes. Summing the five classes gives
\eqref{eq:C4-collision-bound}; the calculation retains the within-stratum
negative covariance and shared labelled counts. The overlap inequalities
following \eqref{eq:admissible-rate-region} give
\(\nu r_{4,h,w}^2=o_p(1)\).

To replace the feasible projections, let \(\widehat\phi_{Xr}\) and
\(\widehat\phi_{Yr}\) be the observed feasible,
finite-\((h,w)\), random-\(\widehat{\boldsymbol{\beta}}\), hard-floor projections, centred
in their own original-unit empirical strata. Then
\[
 \sum_{r=1}^2\left(
 M_r^{-1}\|\widehat\phi_{Xr}-\phi_{Xr}\|_{Q_{r,M}}^2+
 n_r^{-1}\|\widehat\phi_{Yr}-\phi_{Yr}\|_{P_{r,n}}^2
  \right)=o_p(\nu^{-1}),
\]
and if \(\widehat L_n^*\) replaces \(\phi\) by \(\widehat\phi\) in the
multinomial leading term, then
\[
 E^*\{\nu(\widehat L_n^*-L_n^*)^2\mid\mathcal D\}\xrightarrow{p}0.
\]
Put \(N_{Xr}=M_r\), \(N_{Yr}=n_r\), and let \(P_{\rho,N_\rho}\) denote the
corresponding original-unit empirical law. For each
\(\rho\in\{X2,X1,Y2,Y1\}\), the replacement chain is
\[
\widehat\phi_\rho
\ \longrightarrow\
\phi_{\rho,n}(\widehat{\boldsymbol{\vartheta}},a_{\widehat{\boldsymbol{\vartheta}}})
\ \longrightarrow\
\phi_{\rho,n}(\widehat{\boldsymbol{\vartheta}},1)
\ \longrightarrow\
\phi_{\rho,n}(0,1)
\ \longrightarrow\
\phi_\rho.
\]
The four differences are controlled, respectively, by the conditional
process/deletion/overlap bounds, Lemma~\ref{lem:b7-projection-stability}(iv)
and Lemma~\ref{lem:b6-candidate-scale} localization, parts~(ii)--(iii) of
Lemma~\ref{lem:b7-projection-stability}, and part~(i) of that lemma. The
corresponding empirical original-unit norms follow from the uniform law and
\(L_2\) contraction; since
\(M_r^{-1}\asymp n_r^{-1}\asymp\nu^{-1}\), these bounds imply the first
displayed assertion.

For a labelled unit, the coefficient difference is
\[
 b_{ri}^L=
 \frac{\widehat\phi_{Xr}(\mathbf{X}_{ri})-\phi_{Xr}(\mathbf{X}_{ri})}{M_r}
+\frac{\widehat\phi_{Yr}(O_{ri})-\phi_{Yr}(O_{ri})}{n_r}.
\]
For an unlabelled unit it is
\(b_{rj}^U=\{\widehat\phi_{Xr}(\mathbf{X}_{rj}^U)-\phi_{Xr}(\mathbf{X}_{rj}^U)\}/M_r\).
For each multinomial stratum,
\[
 E^*\left[\left\{\sum_i\xi_i^*b_i\right\}^2\,\middle|\,\mathcal D\right]
 =\sum_i(b_i-\bar b)^2\le\sum_i b_i^2.
\]
Thus the conditional variance is bounded by
\(\sum_i(b_{ri}^L)^2+\sum_j(b_{rj}^U)^2\) after summing the independent strata.
The projection bound makes this \(o_p(\nu^{-1})\), and hence
\[
 E^*\{\nu(\widehat L_n^*-L_n^*)^2\mid\mathcal D\}
 \xrightarrow{p}0.
\]

Combining the preceding expansions yields the conditional first-order
representation in part (i):
\[
 \widehat\Delta_n^*-\widehat\Delta_n=L_n^*+R_n^*,
 \qquad E^*\{\nu(R_n^*)^2\mid\mathcal D\}\xrightarrow{p}0,
\]
where
\[
 L_n^*=\sum_{r=1}^2\left[
 \sum_{i=1}^{n_r}\xi_{ri}^{L*}
 \left\{\frac{\phi_{Xr}(\mathbf{X}_{ri})}{M_r}
 +\frac{\phi_{Yr}(O_{ri})}{n_r}\right\}
 +\sum_{j=1}^{N_r}\xi_{rj}^{U*}
 \frac{\phi_{Xr}(\mathbf{X}_{rj}^U)}{M_r}\right].
\]
At fixed observed \(\widehat{\boldsymbol{\beta}}\), the exact outer-product and
hard-floor expansions supply the feasible four-component term; the outer
canonical, labelled product-canonical, and fitting--outer remainders are
controlled by the bounds above, while the profile and beta-adaptivity
expansions control refitting and index movement. The projection-replacement
bound transfers the feasible projections to the limiting projections from
Section~\ref{sec:theory}; conditional Minkowski's inequality over the finite
collection gives the displayed remainder.

For part (ii), the conditional iid representation in the standing perturbation
setup gives, for each stratum \(s\) of size \(m_s\),
\[
 \sum_{i=1}^{m_s}\xi_{si}^*a_{si}
 =\sum_{b=1}^{m_s}\{a_{s,I_{sb}}-\bar a_s\}.
\]
The four stratum arrays are conditionally independent. The bounded
projections established in Section~\ref{sec:theory} imply
\(\max_{s,i}|a_{si}|=O_p(\nu^{-1})\), while
\(\sigma_n^2\asymp\nu^{-1}\). Hence
\[
 \max_{s,i}|a_{si}-\bar a_s|/\sigma_n=O_p(\nu^{-1/2})\to0,
\]
so the conditional Lindeberg sum is eventually zero on a
probability-tending-one event. Therefore
\[
 d_{BL}\left\{
 \mathcal L^*\left(\frac{L_n^*}{\sigma_n}\,\middle|\,\mathcal D\right),
 N(0,1)\right\}\xrightarrow{p}0,
\]
\[
 \Var^*(L_n^*\mid\mathcal D)=\sigma_n^2\{1+o_p(1)\}.
\]

For part (iii), the exact conditional variance of the leading term contains,
for labelled arm \(r\),
\[
 \frac{2}{M_rn_r}\sum_{i=1}^{n_r}
 (\phi_{Xr,i}-\bar\phi_{Xr,L})
 (\phi_{Yr,i}-\bar\phi_{Yr}),
\]
which converges to
\(2\Cov(\phi_{Xr},\phi_{Yr})/M_r\). Adding the labelled and unlabelled
outer components yields
\[
 \sigma_n^2=\sum_{r=1}^2\left\{
 \frac{\Var(\phi_{Xr})}{M_r}
 +\frac{\Var(\phi_{Yr})}{n_r}
  +\frac{2\Cov(\phi_{Xr},\phi_{Yr})}{M_r}\right\}.
\]
The conditional variance of the remainder is \(o_p(\nu^{-1})\), and conditional
Cauchy--Schwarz gives the same order for the leading--remainder covariance.
Thus the full-refit perturbation has the ideal variance ratio in part (iii).
The conditional Lindeberg--Feller theorem and Slutsky's theorem transfer the
Gaussian approximation from \(L_n^*\) to the full-refit perturbation,
completing parts (ii) and (iii).
\end{proof}

\paragraph{Proof of Corollary~\ref{cor:ch4-finiteB-wald}.}

\begin{proof}
For iid full perturbation replicates, conditional unbiasedness of the sample
variance and the exact iid sample-variance formula give
\[
 \Var^*(\widehat V_B\mid\mathcal D)
 \le \frac{1}{B}E^*\{(Z_n^*)^4\mid\mathcal D\}.
\]
Thus, for every \(\epsilon>0\),
\[
 P^*\left(\left|\frac{\widehat V_B}{V_n^*}-1\right|>\epsilon
 \,\middle|\,\mathcal D\right)
 \le\frac{1}{B\epsilon^2}
 \frac{E^*\{(Z_n^*)^4\mid\mathcal D\}}{(V_n^*)^2}.
\]
The displayed conditional fourth-moment condition and \(B_n\to\infty\) prove
the variance-ratio claim. Combining this result with the sampling CLT in
Theorem~\ref{thm:oracle-adapt-practical} and Slutsky's theorem yields the
stated Wald coverage.
\end{proof}

\paragraph{Proof of Corollary~\ref{cor:ch4-aggregation}.}
\begin{proof}
Apply Corollary~\ref{cor:ch4-finiteB-wald} groupwise and sum the finite number of
independent group ALRs. Cross-group sampling covariances vanish by
independence. Hence the aggregate variance is the sum of the group variances
weighted by $a_{k,n}^2$.
\end{proof}


\clearpage
\begin{thebibliography}{99}

\bibitem[Ahmed, Einmahl, and Zhou(2025)]{ahmed2025extreme}
Ahmed, H., Einmahl, J. H. J., and Zhou, C. (2025).
\newblock Extreme value statistics in semi-supervised models.
\newblock \emph{Journal of the American Statistical Association}, 120(549), 291--304.
\newblock \url{https://doi.org/10.1080/01621459.2024.2333582}

\bibitem[Arcones and Gin\'e(1992)]{ArconesGine1992Bootstrap}
Arcones, M. A., and Gin\'e, E. (1992). On the bootstrap of $U$ and $V$ statistics. \emph{The Annals of Statistics}, 20(2), 655--674.
\newblock \url{https://doi.org/10.1214/aos/1176348650}

\bibitem[Arcones and Gin\'e(1993)]{ArconesGine1993}
Arcones, M. A., and Gin\'e, E. (1993). Limit theorems for $U$-processes. \emph{The Annals of Probability}, 21(3), 1494--1542.
\newblock \url{https://doi.org/10.1214/aop/1176989128}

\bibitem[Azriel et al.(2022)]{azriel2022}
Azriel, D., Brown, L. D., Sklar, M., Berk, R., Buja, A., and Zhao, L. (2022).
\newblock Semi-supervised linear regression.
\newblock \emph{Journal of the American Statistical Association}, 117(540), 2238--2251.
\newblock \url{https://doi.org/10.1080/01621459.2021.1915320}

\bibitem[Bickel et al.(1991)]{bickel1991}
Bickel, P. J., Ritov, Y., and Wellner, J. A. (1991).
\newblock Efficient estimation of linear functionals of a probability measure $P$ with known marginal distributions.
\newblock \emph{The Annals of Statistics}, 19(3), 1316--1346.
\newblock \url{https://doi.org/10.1214/aos/1176348251}

\bibitem[Cattaneo and Jansson(2018)]{CattaneoJansson2018}
Cattaneo, M. D., and Jansson, M. (2018).
\newblock Kernel-based semiparametric estimators: small bandwidth asymptotics
and bootstrap consistency.
\newblock \emph{Econometrica}, 86(3), 955--995.
\newblock \url{https://doi.org/10.3982/ECTA12701}

\bibitem[Chakrabortty and Cai(2018)]{chakrabortty2018}
Chakrabortty, A., and Cai, T. (2018).
\newblock Efficient and adaptive linear regression in semi-supervised settings.
\newblock \emph{The Annals of Statistics}, 46(4), 1541--1572.
\newblock \url{https://doi.org/10.1214/17-AOS1594}

\bibitem[Chapelle et al.(2006)]{chapelle2006}
Chapelle, O., Sch\"olkopf, B., and Zien, A. (Eds.). (2006).
\newblock \emph{Semi-Supervised Learning}.
\newblock MIT Press.
\newblock \url{https://doi.org/10.7551/mitpress/9780262033589.001.0001}

\bibitem[Cheng(2015)]{Cheng2015}
Cheng, G. (2015). Moment consistency of the exchangeably weighted bootstrap for semiparametric $M$-estimation. \emph{Scandinavian Journal of Statistics}, 42(3), 665--684.
\newblock \url{https://doi.org/10.1111/sjos.12128}

\bibitem[Cheng and Huang(2010)]{ChengHuang2010}
Cheng, G., and Huang, J. Z. (2010). Bootstrap consistency for general semiparametric $M$-estimation. \emph{The Annals of Statistics}, 38(5), 2884--2915.
\newblock \url{https://doi.org/10.1214/10-AOS809}

\bibitem[de la Pe\~na and Gin\'e(1999)]{deLaPenaGine1999}
de la Pe\~na, V. H., and Gin\'e, E. (1999). \emph{Decoupling: From Dependence to Independence}. Springer.
\newblock \url{https://doi.org/10.1007/978-1-4612-0537-1}

\bibitem[Delecroix et al.(2006)]{DelecroixHristachePatilea2006}
Delecroix, M., Hristache, M., and Patilea, V. (2006).
\newblock On semiparametric $M$-estimation in single-index regression.
\newblock \emph{Journal of Statistical Planning and Inference}, 136(3), 730--769.
\newblock \url{https://doi.org/10.1016/j.jspi.2004.09.006}

\bibitem[Delyon and Portier(2016)]{DelyonPortier2016}
Delyon, B., and Portier, F. (2016).
\newblock Integral approximation by kernel smoothing.
\newblock \emph{Bernoulli}, 22(4), 2177--2208.
\newblock \url{https://doi.org/10.3150/15-BEJ725}

\bibitem[Gronsbell and Cai(2018)]{gronsbell2018}
Gronsbell, J. L., and Cai, T. (2018).
\newblock Semi-supervised approaches to efficient evaluation of model prediction performance.
\newblock \emph{Journal of the Royal Statistical Society: Series B (Statistical Methodology)}, 80(3), 579--594.
\newblock \url{https://doi.org/10.1111/rssb.12264}

\bibitem[H\"ardle et al.(1993)]{hardle1993}
H\"ardle, W., Hall, P., and Ichimura, H. (1993).
\newblock Optimal smoothing in single-index models.
\newblock \emph{The Annals of Statistics}, 21(1), 157--178.
\newblock \url{https://doi.org/10.1214/aos/1176349020}

\bibitem[Hoeffding(1948)]{Hoeffding1948}
Hoeffding, W. (1948). A class of statistics with asymptotically normal distribution. \emph{The Annals of Mathematical Statistics}, 19(3), 293--325.
\newblock \url{https://doi.org/10.1214/aoms/1177730196}

\bibitem[Huang, Liu, and Peng(2023)]{HuangLiuPeng2023}
Huang, B., Liu, Y., and Peng, L. (2023). Weighted bootstrap for two-sample $U$-statistics. \emph{Journal of Statistical Planning and Inference}, 226, 86--99.
\newblock \url{https://doi.org/10.1016/j.jspi.2023.02.004}

\bibitem[Ichimura(1993)]{ichimura1993}
Ichimura, H. (1993).
\newblock Semiparametric least squares (SLS) and weighted SLS estimation of single-index models.
\newblock \emph{Journal of Econometrics}, 58(1--2), 71--120.
\newblock \url{https://doi.org/10.1016/0304-4076(93)90114-K}

\bibitem[Ichimura and Lee(2010)]{IchimuraLee2010}
Ichimura, H., and Lee, S. (2010). Characterization of the asymptotic distribution of semiparametric $M$-estimators. \emph{Journal of Econometrics}, 159(2), 252--266.
\newblock \url{https://doi.org/10.1016/j.jeconom.2010.05.005}

\bibitem[Ichimura and Lee(2018)]{IchimuraLee2018Corrigendum}
Ichimura, H., and Lee, S. (2018). Corrigendum to ``Characterization of the asymptotic distribution of semiparametric $M$-estimators''. \emph{Journal of Econometrics}, 202(2), 306--307.
\newblock \url{https://doi.org/10.1016/j.jeconom.2017.07.003}

\bibitem[Jin et al.(2001)]{jin2001}
Jin, Z., Ying, Z., and Wei, L. J. (2001).
\newblock A simple resampling method by perturbing the minimand.
\newblock \emph{Biometrika}, 88(2), 381--390.
\newblock \url{https://doi.org/10.1093/biomet/88.2.381}

\bibitem[Kim et al.(2025)]{kim2025semisupervised}
Kim, I., Wasserman, L., Balakrishnan, S., and Neykov, M. (2025).
\newblock Semi-supervised $U$-statistics.
\newblock \emph{The Annals of Statistics}, 53(6), 2488--2515.
\newblock \url{https://doi.org/10.1214/25-AOS2550}

\bibitem[Mammen, Rothe, and Schienle(2012)]{MammenRotheSchienle2012}
Mammen, E., Rothe, C., and Schienle, M. (2012). Nonparametric regression with nonparametrically generated covariates. \emph{The Annals of Statistics}, 40(2), 1132--1170.
\newblock \url{https://doi.org/10.1214/12-AOS995}

\bibitem[Mammen, Rothe, and Schienle(2016)]{MammenRotheSchienle2016}
Mammen, E., Rothe, C., and Schienle, M. (2016).
\newblock Semiparametric estimation with generated covariates.
\newblock \emph{Econometric Theory}, 32(5), 1140--1177.
\newblock \url{https://doi.org/10.1017/S0266466615000134}

\bibitem[Nadaraya(1964)]{nadaraya1964}
Nadaraya, E. A. (1964).
\newblock On estimating regression.
\newblock \emph{Theory of Probability \& Its Applications}, 9(1), 141--142.
\newblock \url{https://doi.org/10.1137/1109020}

\bibitem[Newey(1994)]{Newey1994PartialMeans}
Newey, W. K. (1994). Kernel estimation of partial means and a general variance estimator. \emph{Econometric Theory}, 10(2), 233--253.
\newblock \url{https://doi.org/10.1017/S0266466600008409}

\bibitem[Newey and McFadden(1994)]{newey1994}
Newey, W. K., and McFadden, D. (1994).
\newblock Large sample estimation and hypothesis testing.
\newblock In R. F. Engle and D. L. McFadden (Eds.), \emph{Handbook of Econometrics}, Vol. 4, 2111--2245.
\newblock Elsevier.
\newblock \url{https://doi.org/10.1016/S1573-4412(05)80005-4}

\bibitem[Nolan and Pollard(1987)]{NolanPollard1987}
Nolan, D., and Pollard, D. (1987). $U$-processes: rates of convergence. \emph{The Annals of Statistics}, 15(2), 780--799.
\newblock \url{https://doi.org/10.1214/aos/1176350374}

\bibitem[Pakes and Pollard(1989)]{PakesPollard1989}
Pakes, A., and Pollard, D. (1989).
\newblock Simulation and the asymptotics of optimization estimators.
\newblock \emph{Econometrica}, 57(5), 1027--1057.
\newblock \url{https://doi.org/10.2307/1913622}

\bibitem[Pr\ae stgaard and Wellner(1993)]{PraestgaardWellner1993}
Pr\ae stgaard, J., and Wellner, J. A. (1993). Exchangeably weighted bootstraps of the general empirical process. \emph{The Annals of Probability}, 21(4), 2053--2086.
\newblock \url{https://doi.org/10.1214/aop/1176989011}

\bibitem[Serfling(1980)]{Serfling1980}
Serfling, R. J. (1980). \emph{Approximation Theorems of Mathematical Statistics}. Wiley.
\newblock \url{https://doi.org/10.1002/9780470316481}

\bibitem[Sherman(1994a)]{Sherman1994}
Sherman, R. P. (1994a). Maximal inequalities for degenerate $U$-processes with applications to optimization estimators. \emph{The Annals of Statistics}, 22(1), 439--459.
\newblock \url{https://doi.org/10.1214/aos/1176325377}

\bibitem[Sherman(1994b)]{Sherman1994Semiparametric}
Sherman, R. P. (1994b).
\newblock $U$-processes in the analysis of a generalized semiparametric regression estimator.
\newblock \emph{Econometric Theory}, 10(2), 372--395.
\newblock \url{https://doi.org/10.1017/S0266466600008458}

\bibitem[Song(2014)]{Song2014}
Song, K. (2014).
\newblock Semiparametric models with single-index nuisance parameters.
\newblock \emph{Journal of Econometrics}, 178(3), 471--483.
\newblock \url{https://doi.org/10.1016/j.jeconom.2013.07.004}

\bibitem[Song, Lin, and Zhou(2024)]{song2024}
Song, S., Lin, Y., and Zhou, Y. (2024).
\newblock A general M-estimation theory in semi-supervised framework.
\newblock \emph{Journal of the American Statistical Association}, 119(546), 1065--1075.
\newblock \url{https://doi.org/10.1080/01621459.2023.2169699}

\bibitem[Tan, Zhang, and Zhou(2025)]{tan2025}
Tan, T., Zhang, S., and Zhou, Y. (2025).
\newblock Efficient semiparametric estimation in two-sample comparison via semisupervised learning.
\newblock \emph{Canadian Journal of Statistics}, 53(2), e11813.
\newblock \url{https://doi.org/10.1002/cjs.11813}

\bibitem[Tian, Zhang, and Tan(2026)]{tian2026etm}
Tian, Y., Zhang, X., and Tan, Z. (2026).
\newblock On semi-supervised estimation using exponential tilt mixture models.
\newblock \emph{Journal of Statistical Planning and Inference}, 241, 106314.
\newblock \url{https://doi.org/10.1016/j.jspi.2025.106314}

\bibitem[van der Vaart and Wellner(1996)]{vdVWellner1996}
van der Vaart, A. W., and Wellner, J. A. (1996). \emph{Weak Convergence and Empirical Processes}. Springer.
\newblock \url{https://doi.org/10.1007/978-1-4757-2545-2}

\bibitem[van der Vaart and Wellner(2011)]{vdVWellner2011}
van der Vaart, A. W., and Wellner, J. A. (2011).
\newblock A local maximal inequality under uniform entropy.
\newblock \emph{Electronic Journal of Statistics}, 5, 192--203.
\newblock \url{https://doi.org/10.1214/11-EJS605}

\bibitem[van Engelen and Hoos(2020)]{vanengelen2020}
van Engelen, J. E., and Hoos, H. H. (2020).
\newblock A survey on semi-supervised learning.
\newblock \emph{Machine Learning}, 109(2), 373--440.
\newblock \url{https://doi.org/10.1007/s10994-019-05855-6}

\bibitem[Watson(1964)]{watson1964}
Watson, G. S. (1964).
\newblock Smooth regression analysis.
\newblock \emph{Sankhya Series A}, 26(4), 359--372.
\newblock \url{https://www.jstor.org/stable/25049340}

\bibitem[Wen et al.(2025)]{wen2025distribution}
Wen, M., Jia, Y., Ren, H., Wang, Z., and Zou, C. (2025).
\newblock Semi-supervised distribution learning.
\newblock \emph{Biometrika}, 112(1), asae056.
\newblock \url{https://doi.org/10.1093/biomet/asae056}

\bibitem[Zhang and Bradic(2022)]{zhangbradic2022}
Zhang, Y., and Bradic, J. (2022).
\newblock High-dimensional semi-supervised learning: in search of optimal inference of the mean.
\newblock \emph{Biometrika}, 109(2), 387--403.
\newblock \url{https://doi.org/10.1093/biomet/asab042}

\bibitem[Zhang, Brown, and Cai(2019)]{zhang2019}
Zhang, A., Brown, L. D., and Cai, T. T. (2019).
\newblock Semi-supervised inference: General theory and estimation of means.
\newblock \emph{The Annals of Statistics}, 47(5), 2538--2566.
\newblock \url{https://doi.org/10.1214/18-AOS1756}

\bibitem[Zhang, Peng, and Zhou(2025)]{zhangpeng2025}
Zhang, M., Peng, M., and Zhou, Y. (2025).
\newblock Semi-supervised learning for various comparison functions across two populations.
\newblock \emph{Statistical Papers}, 66, Article 18.
\newblock \url{https://doi.org/10.1007/s00362-024-01632-3}

\end{thebibliography}
\end{document}